\newtheorem{lemma}{Lemma}
\def\hlinew#1{
	\noalign{\ifnum0=`}\fi\hrule \@height #1 \futurelet
	\reserved@a\@xhline}
\renewcommand{\arraystretch}{1.27}
\theoremstyle{plain}
\newtheorem{thm}{Theorem} 
\newtheorem{cor}{Corollary}
\newtheorem{prop}{Proposition}
\newtheorem{conj}{Conjecture}
\theoremstyle{definition}
\newtheorem{remark}{Remark}
\newtheorem*{thm*}{Theorem}
\newcommand{\QQ}{\mathbb{Q}}
\newcommand{\ZZ}{\mathbb{Z}}
\newcommand{\p}{\partial}
\newcommand{\X}{\mathcal{X}}
\newcommand{\bl}{\lambda}
\newcommand{\bmu}{\boldsymbol{\mu}}
\def\beq{\begin{equation}}
\def\eeq{\end{equation}}
\def\be{\begin{equation}}
\def\ee{\end{equation}}
\def\bes{\begin{equation*}}
\def\ees{\end{equation*}}
\def\thin{\hspace 0.5pt}
\newcommand{\nn}{\nonumber}
\def\={\;=\;}   \def\+{\,+\,} \def\m{\,-\,}  \def\:{\;:=\;}
\def\l{\lambda}    \def\a{\alpha} \def\dd{\mathbf d}   \def\g{\gamma} 
 \def\Q{\mathbb Q}  \def\tr{{\rm tr}}  
\def\B#1#2{\langle#1\rangle^\Theta_#2}    % macro for BGW numbers
\def\thin{\hskip 1 pt}  
\begin{document}
 \title[Combinatorics and asymptotics of BGW numbers]{Combinatorics and large genus asymptotics of the Br\'ezin--Gross--Witten numbers}
\author{Jindong Guo, Paul Norbury, Di Yang, Don Zagier}

\address{Jindong Guo, School of Mathematical Sciences, University of Science and Technology of China, 230026 Hefei, P.R.~China}
\email{guojindong@mail.ustc.edu.cn}
\address{Paul Norbury, School of Mathematics and Statistics, University of Melbourne, VIC 3010 Melbourne, Australia}
\email{norbury@unimelb.edu.au}
\address{Di Yang, School of Mathematical Sciences, University of Science and Technology of China, 230026 Hefei, P.R.~China}
\email{diyang@ustc.edu.cn}
\address{Don Zagier, Max Planck Institute for Mathematics, 53111 Bonn, 
Germany, and International Centre for Theoretical Physics, Trieste, Italy}
\email{dbz@mpim-bonn.mpg.de}
\begin{abstract}
In this paper, we study combinatorial and asymptotic properties of 
some interesting rational numbers called the Br\'ezin--Gross--Witten (BGW) 
numbers, which can be represented as the intersection numbers of psi and 
Theta classes on the moduli space of stable algebraic curves. In particular, 
we discover and prove the uniform large genus leading asymptotics of certain 
normalized BGW numbers, and give a new proof of the polynomiality 
phenomenon for the large genus asymptotics. We also propose, with extensive numerical data, several new 
conjectures including monotonicity and integrality on the BGW numbers. 
Applications to the Painlev\'e II hierarchy and to the BGW-kappa numbers 
are given.
\end{abstract}
\maketitle
\tableofcontents

\section{Introduction}\label{secintro}
In this paper, we study some interesting and important rational numbers  
called the {\it Br\'ezin--Gross--Witten (BGW) numbers} \cite{A,BR,BG,DN,DYZ,GW}. 
Originally, the BGW numbers were defined via matrix models \cite{BG,GW},
and specifically are proportional to the Taylor coefficients with respect to the so-called Miwa variables (cf.~\cite{AC,BG, GN, GW, Kontsevich}) of the logarithm of the integral 
\beq
\int_{U_n} e^{\frac1 \beta {\rm tr} (J^\dagger U+J U^\dagger)}\, dU\,,
\eeq
where $dU$ denotes the normalized Haar measure on the unitary group~$U_n$, 
and $J$ and $J^\dagger$ are arbitrary $n\times n$ matrices.  
Later, alternative definitions and properties 
of the BGW numbers were 
given in a number of further papers (cf.~\cite{A, BR,CGG, DN,DYZ, KN, MMS, Norbury0,YZ}).
 As customary in the literature, denote by $\langle \tau_{d_1} \dots \tau_{d_n} \rangle^{\Theta}_g$ the BGW numbers, 
 where $g\ge1$ (genus), $n\ge1$ and $d_1,\dots,d_n\ge0$ satisfy
 \begin{align}\label{ddc}
d_1\+\dots\+d_n\=g-1
 \end{align}
(see e.g.~\cite{A,DN,YZ}).

For a long time, no topological or combinatorial meaning for the BGW  numbers was known, but recently, two ways 
were found to define these numbers topologically. 
First of all, they are equal to the following integrals on the moduli space of stable curves:
\beq
\langle\tau_{d_1}\dots\tau_{d_n}\rangle^{\Theta}_g
\= \int_{\overline{\mathcal{M}}_{g,n}}\,\psi_1^{d_1}\cdots\psi_n^{d_n} \, \Theta_{g,n}\,,
\eeq
as it was conjectured in~\cite{Norbury0} with later a complete proof given in~\cite{CGG}.
Here $\overline{\mathcal{M}}_{g,n}$ denotes the Deligne--Mumford moduli space of stable algebraic curves of genus $g$
with $n$ distinct marked points, $\psi_j$ denotes the first Chern class of the $j$th cotangent line bundle, and $\Theta_{g,n}$ 
denotes the Theta-class introduced by the second author of the present paper~\cite{Norbury0} (cf.~\cite{CGG, KN, XY}). 
This definition, which is the reason for the notation we use, is in exact analogy with Witten's notation for his
intersection numbers~\cite{Witten} and elucidate~\eqref{ddc} simply as the degree-dimension counting.
Secondly, it was proved in~\cite{YZ} that the BGW numbers can be given 
by an ELSV-like formula
\beq
\langle\tau_{d_1}\dots\tau_{d_n}\rangle_g^{\Theta} 
\= \frac{(-1)^{g-1+n} \, 2^{2g-2+n}}{\prod_{i=1}^n d_i!} \,\int_{\overline{\mathcal{M}}_{g,n}} 
\frac{\Lambda(-1)^2\,\Lambda(\tfrac12)\,\exp\bigl(\thin\sum_{d=1}^\infty \!\frac{(-1)^{d-1}\thin\kappa_d}{2^dd}\bigr)}
{\prod_{i=1}^n\bigl(1+\frac{2d_i+1}{2}\psi_i\bigr)} \,,
\eeq
where $\Lambda(z)$ denotes the Chern polynomial of the Hodge bundle and $\kappa_d:=f_*(\psi_{n+1}^{d+1})$
are the kappa classes, with $f:\overline{\mathcal{M}}_{g,n+1}\to \overline{\mathcal{M}}_{g,n}$ being the forgetful map.
Two efficient algorithms of computing the BGW numbers 
will be reviewed in Section~\ref{secgeneral}.

The first few BGW numbers are given by
\beq
\begin{split}
&\B{\tau_0}1=\frac18\,, \;\quad \B{\tau_1}2=\frac{3}{128}\,, \;\quad
 \B{\tau_1^2}3=\frac{63}{512}\,,\;\quad \B{\tau_2}3=\frac{15}{1024}\,, \\
&\B{\tau_1^3}4=\frac{7221}{2048}\,,\;\quad \B{\tau_1\tau_2}4=\frac{8625}{32768}\,,\;\quad \B{\tau_3}4=\frac{525}{32768}\,.
\end{split}
\label{BGWnumbers}
\eeq
(Here we have omitted the numbers containing $\tau_0$ except for $\langle\tau_0\rangle^\Theta_1$
because of equation~\eqref{CC0} below.)  
From these and many further examples, we observe that
 the BGW numbers $\langle\tau_{d_1}\cdots\tau_{d_n}\rangle^{\Theta}_g$, $d_1,\dots,d_n\ge0$, are integral away from the prime~2,
and we conjecture that this is true in general. We call this the Integrality Conjecture.
Furthermore, they seem to have many small factors, e.g.,
$ \langle\tau_2\tau_3\rangle_6^{\Theta}$ equals $2^{-21}\thin 3^2\thin 5^2\thin 7^3\thin 103^1$, 
and  $\langle\tau_2^3\tau_3^2\tau_4\tau_5\rangle_{22}^{\Theta}$ is divisible by $2^{-71}\thin 3^{11}\thin 5^2\thin 7^2\thin 11^2$.
A precise conjecture that at least partially explains these factorizations will be given 
in Section~\ref{secallbutonepartfixed} (see Conjecture~\ref{conjfactorsBGW1}).
	
To proceed, let us introduce the {\it normalized BGW numbers} $C(\dd)$ by
\beq\label{defC}
C(\mathbf{d}) \:  \frac{2^{2g(\dd)-1}\thin \prod_{j=1}^n (2d_j+1)!!}{(X(\mathbf{d})-1)!}\, \langle\tau_{d_1}\cdots\tau_{d_n}\rangle_{g(\dd)}^{\Theta}\,,
	\eeq
where $\mathbf{d}=(d_1,\dots,d_n)\in(\mathbb{Z}_{\ge0})^n$, $g(\dd)=|\mathbf{d}|+1$ with $|\mathbf{d}|:=d_1+\cdots+d_n$, and
\begin{align}\label{defXd}
X(\mathbf{d}) \: \thin\sum_{j=1}^n(2d_j+1) \= 2 \,g(\dd)-2+n\,.
\end{align}
Obviously, $C(0)=1/4$.
Using a relation given in Section~\ref{secDN} (see~\eqref{BB0}), we know that 
\beq\label{CC0}
C(\dd)\=C(0,\dd)\,.
\eeq
This property would of course also be true without the factor~$2^{2g(\dd)-1}$ in~\eqref{defC}, but the
normalization given here will make the asymptotic properties of the numbers nicer.  

Because of~\eqref{CC0}, in the study of $C(\mathbf{d})$, it is sufficient to consider the case 
when $d_1,\dots,d_n$ are all positive, in other words, when ${\bf d}$ is a partition. 
From now on, we will usually restrict to this case. In particular, we do this in the following table, 
which gives the values of the normalized BGW numbers for $g=2,\dots,7$.

\begin{table}[phbt]
	\begin{center}
		\begin{tabular}{|l|c|l|r|}
			\hlinew{1.5pt}
			%partition $\lambda$&$C_{\lambda}$& appr. of $C_{\lambda}$&$ C_{\lambda}$ mtp by t no.\\
			%\hlinew{1.5pt}
			\multicolumn{4}{|c|}{$g=2,\quad D=32$}\\
			\hlinew{1.5pt}
			$(1)$&$\frac{9}{32}$&$0.281250$&$9$\\
			\hlinew{1.5pt}
			\multicolumn{4}{|c|}{$g=3,\quad D=1280$}\\
			\hlinew{1.5pt}
			$(2)$&$\frac{75}{256}$&$0.292969$&$375$\\
			\hline
			$(1,1)$&$\frac{189}{640}$&$0.295313$&$378$\\
			\hlinew{1.5pt}
			\multicolumn{4}{|c|}{$g=4,\quad D=143360$}\\
			\hlinew{1.5pt}
			$(3)$&$\frac{1225}{4096}$&$0.299072$&$42875$\\
			\hline
			$(1,2)$&$\frac{8625}{28672}$&$0.300816$&$43125$\\
			\hline
			$(1,1,1)$&$\frac{21663}{71680}$&$0.302218$&$43326$\\
			\hlinew{1.5pt}
			\multicolumn{4}{|c|}{$g=5,\quad D=378470400$}\\
			\hlinew{1.5pt}
			$(4)$&$\frac{19845}{65536}$&$0.302811$&$114604875$\\
			\hline
			$(1,3)$&$\frac{14945}{49152}$&$ 0.304057$&$115076500$\\
			$(2,2)$&$\frac{209275}{688128}$&$0.304122$&$115101250$\\
			\hline
			$(1,1,2)$&$\frac{34995}{114688}$&$0.305132$&$115483500$\\
			\hline
			$(1,1,1,1)$&$\frac{4825971}{15769600}$
			&$0.306030$&$115823304$\\
			\hlinew{1.5pt}
			\multicolumn{4}{|c|}{$g=6,\quad D=91842150400$}\\
			\hlinew{1.5pt}
			$(5)$&$\frac{160083}{524288}$&$ 0.305334$&$28042539525$\\
			\hline
			$(1,4)$&$\frac{1766205}{5767168}$
			&$0.306252$&$28126814625$\\
			$(2,3)$&$\frac{883225}{2883584}$
			&$0.306294$&$28130716250$\\
			\hline
			$(1,1,3)$&$\frac{442715}{1441792}$
			&$0.307059$&$28200945500$\\
			$(1,2,2)$&$\frac{6198625}{20185088}$
			&$0.307089$&$28203743750$\\
			\hline
			$(1,1,1,2)$&$\frac{5768625}{18743296}$&$0.307770$&$28266262500$\\
			\hline
			$(1,1,1,1,1)$&$\frac{3540311739}{11480268800}$&$0.308382$&$28322493912$\\
			\hlinew{1.5pt}
			\multicolumn{4}{|c|}{$g=7,\quad D=37471597363200$}\\
			\hlinew{1.5pt}
			$(6)$&$\frac{1288287}{4194304}$&$0.307152$&$11509459436475$\\
			\hline
			$(1,5)$&$\frac{8392923}{27262976}$&$0.307851$&$11535653017350$\\
			$(2,4)$&$\frac{184659615}{599785472}$&$0.307876$&$11536609447125$\\
			$(3,3)$&$\frac{138495805}{449839104}$&$0.307879$&$11536700556500$\\
			\hline
			$(1,1,4)$&$\frac{92508885}{299892736}$&$0.308473$&$11558985180750$\\
			$(1,2,3)$&$\frac{46257505}{149946368}$&$0.308494$&$11559750499500$\\
			$(2,2,2)$&$\frac{4533499725}{14694744064}$&$0.308512$&$11560424298750$\\
			\hline
			$(1,1,1,3)$&$\frac{23168971}{74973184}$&$0.309030$&$11579851705800$\\
			$(1,1,2,2)$&$\frac{2270671055}{7347372032}$&$0.309045$&$11580422380500$\\
			\hline
			$(1,1,1,1,2)$&$\frac{1137113661}{3673686016}$&$0.309529$&$11598559342200$\\
			\hline
			$(1,1,1,1,1,1)$&$\frac{34568613873}{111522611200}$&$0.309970$&$11615054261328$\\
			\hline
		\end{tabular}\vskip 7pt
	\end{center}
	\caption{Numerical data for $C(\dd)$ with $g\le 7$} \label{tablenormalizednumbers}
\end{table}

From Table~\ref{tablenormalizednumbers} we observe that the normalized BGW numbers $C(\mathbf{d})$ for partitions of $g-1$ 
with $2\le g\le 7$ all lie between the values for the crudest and finest partitions $(g-1)$ and $(1^{g-1})$ of~$g-1$. 
(Here we use the standard convention of writing $d^m$ to mean that the argument~$d$ is repeated $m$ times.)
By a computer program using an algorithm given in Section~\ref{secDN}, 
we also checked that this is true up to $g= 40$. For example, for $g=40$, all values lie between the two numbers
\beq\label{exampleg40}
C(39)=0.316326705\cdots\,, \qquad C(1^{39})=0.316963758\cdots\,. 
\eeq
 The following conjecture states that this nesting property holds for all~$g$.

\begin{conj}\label{conjnesting}
	We have $C(g-1)\le C(\dd)\le C(1^{g-1})$ for any partition~$\dd$ of~$g-1$.
\end{conj}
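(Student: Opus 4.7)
The plan is to prove Conjecture~\ref{conjnesting} by establishing a \emph{splitting monotonicity} for the normalized numbers: if $\dd'$ is obtained from $\dd$ by replacing a single part $d=a+b$ with $a,b\ge1$ by the two parts $a$ and $b$, then
\[
C(\dd)\le C(\dd').
\]
Granting this, the conjecture is immediate: every partition $\dd$ of $g-1$ is reachable from the one-part partition $(g-1)$ by a sequence of splits (giving $C(g-1)\le C(\dd)$), and from $\dd$ one can continue splitting until $(1^{g-1})$ (giving $C(\dd)\le C(1^{g-1})$).

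Unpacking the normalization~\eqref{defC}, and using the facts that $g$ is preserved by a split while $n$ and $X(\dd)$ each increase by $1$, the splitting monotonicity is equivalent to the pointwise inequality
\[
(2a+1)!!\,(2b+1)!!\,\bigl\langle \tau_a\tau_b\textstyle\prod_{j}\tau_{d_j}\bigr\rangle_g^{\Theta}
\;\ge\; X(\dd)\,(2a+2b+1)!!\,\bigl\langle \tau_{a+b}\textstyle\prod_{j}\tau_{d_j}\bigr\rangle_g^{\Theta}
\]
between two BGW intersection numbers whose indices agree away from the split part. A sanity check on the $g=4$ data confirms this inequality numerically and rather tightly, suggesting no large slack is available.

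I see two natural avenues of attack. The first is to use the Virasoro/DVV-type recursion for BGW numbers (an analogue of the one for Witten's intersection numbers) which rewrites $\langle\tau_{a+b}\prod\tau_{d_j}\rangle^\Theta_g$ as a sum of BGW numbers with one fewer marked point and/or smaller genus; one may then attempt to prove the splitting inequality by induction on $(g,n)$ after matching up recursion terms on both sides. The second is to expand via the ELSV-like formula of~\cite{YZ}: writing $1/(1+\tfrac{2d_i+1}{2}\psi_i)$ as a geometric series turns both sides into explicit polynomials in $2a+1$ and $2b+1$ (resp.\ in $2a+2b+1$), with coefficients given by universal integrals of $\Lambda(-1)^2\Lambda(\tfrac12)\exp\bigl(\sum(-1)^{d-1}\kappa_d/(2^dd)\bigr)$ against $\psi$-monomials on $\Mgnbar$, and the inequality then reduces to a family of numerical inequalities among these coefficients.

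The main obstacle is the splitting inequality itself. The core difficulty in both approaches is that the $\Theta$-class integrand has no evident positivity, so neither the recursion nor coefficient-by-coefficient comparison offers a sign-stable induction; moreover, the elementary monomial inequality goes the wrong way, since $(2a+1)(2b+1)<(2a+2b+1)$ for $a,b\ge1$, so the inequality can only hold after some cancellation among higher-order terms is exploited. A plausible remedy is to look for a refined invariant---such as a weighted sum $\sum_\dd w(\dd)\,\langle\dd\rangle_g^\Theta$ with carefully chosen weights---whose monotonicity along the recursion implies the splitting inequality. As a warm-up I would first attempt the special case $a=1$, where the $\tau_1$ insertion may be amenable to a dilaton-like reduction; iterating this single case would already settle the upper bound $C(\dd)\le C(1^{g-1})$, while the lower bound $C(g-1)\le C(\dd)$ requires the full splitting inequality or a separate argument.
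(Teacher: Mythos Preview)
Your proposal is not a proof but a research plan: you correctly observe that Conjecture~\ref{conjnesting} would follow from the splitting inequality $C(\dd)\le C(\dd')$ whenever $\dd'$ refines $\dd$ by a single split, but you do not prove this inequality and you say so yourself. So the gap is exactly the one you identify. (A small slip in your heuristics: the parenthetical ``$(2a+1)(2b+1)<(2a+2b+1)$ for $a,b\ge1$'' has the wrong direction; what is true and relevant to the normalization is $(2a+1)!!\,(2b+1)!!<(2a+2b+1)!!$.)

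In fact the paper does not prove Conjecture~\ref{conjnesting} either: it is stated as a conjecture and remains open. The paper does, however, prove the \emph{lower} half $C(g-1)\le C(\dd)$, as Lemma~\ref{lemmalowerbound}, by a direct induction on $X(\dd)$ using the DVV-type recursion~\eqref{CVirasoro2}---not via intermediate splits. One applies the recursion with $d_1=\min_j d_j$, bounds each linear term from below by $C(|\dd|)$ or $C(|\dd|-1)$ via the induction hypothesis, retains only the two extreme quadratic terms (dropping the rest by positivity), and then verifies from the closed form~\eqref{smallest} that the resulting lower bound exceeds $C(|\dd|)$. The \emph{upper} half $C(\dd)\le C(1^{g-1})$ is not established in the paper; only the weaker bound $C(\dd)\le 1/\pi+O(1/g)$ is obtained, in the course of proving Theorem~\ref{thmuniformasymptotics}.

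Your splitting monotonicity sits strictly between Conjecture~\ref{conjnesting} and Conjecture~\ref{conjmonotoncity}, and all the paper's numerics are consistent with it. The paper's argument for the lower bound may be a useful template for your $a=1$ warm-up (compare equation~\eqref{C1d}, which is precisely the $d_1=1$ instance of~\eqref{CVirasoro2}), but neither you nor the paper closes the upper half.
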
 

But in fact much more is true. We denote by~$\ell({\bf d})$ the length of a partition~{\bf d} 
and for any fixed $g\ge1$ we define an ordering for all partitions of~$g-1$ first by increasing
length and then lexicographically for a given length, i.e., 
$\dd\prec\dd'$ if either $\ell(\dd)<\ell(\dd')$ or $\ell(\dd)=\ell(\dd')$ and
$\dd_i<\dd_i'$, where the non-zero entries of both $\dd$ and~$\dd'$ are arranged in 
increasing order and $i$ is the first index for which $\dd_i\ne\dd_i'$. 
Purely by chance---simply because the calculations of tables of~$C(\dd)$ up to~$g=40$ using the 
recursion~\eqref{DNrec} were done using the software package GP-PARI, which happens to order 
partitions in the way just described---we noticed empirically the following

\begin{conj}\label{conjmonotoncity}
The function $\dd\mapsto C(\dd)$ from partitions of~$g-1$ to~$\Q$
is strictly monotone increasing with respect to the above ordering for every~$g$.
\end{conj}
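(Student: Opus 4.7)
The ordering $\prec$ on partitions of $g-1$ decomposes into two qualitatively different kinds of adjacency. Within a fixed length $n$, every $\prec$-successor turns out to be reachable by a single \emph{balancing move}: one unit is transferred from a part $d_j$ to a smaller part $d_i$ with $d_i+1\le d_j-1$, so that the new partition sorted in increasing order is lexicographically adjacent to the old one. The second kind is a \emph{length jump} from the unique most-balanced partition of length $n$ to the most-unbalanced partition of length $n+1$. My plan is to handle these two classes separately.

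For the intra-length monotonicity, the statement reduces to a Schur-concavity-type inequality $C(\dd)<C(\dd')$ whenever $\dd'$ is obtained from $\dd$ by a single balancing move. The recursion~\eqref{DNrec} reviewed in Section~\ref{secDN} expresses each $\B{\tau_{d_1}\cdots\tau_{d_n}}g$ as an explicit weighted sum of BGW numbers with fewer marked points or lower genus. Renormalising via~\eqref{defC}, I would try to write the difference $C(\dd')-C(\dd)$ as a rational non-negative combination of analogous differences $C(\widetilde\dd')-C(\widetilde\dd)$ at simpler arguments, plus a manifestly positive boundary contribution from the diagonal piece of the recursion. An induction on~$g$ would then close this half, with the tabulated small-genus data serving as base cases.

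For the length jumps, the identity $C(\dd)=C(0,\dd)$ of~\eqref{CC0} is the essential preliminary: both sides of the desired inequality can then be placed among partitions of length $n+1$, but they typically differ by more than one balancing move. My plan is to fix a canonical path of single-unit transfers connecting $(0,\dd)$ to the target partition and to estimate the $C$-increments along it using the recursion. The crucial point is that each anti-balancing step (which decreases $C$) must be dominated by an adjacent balancing step (which increases $C$) of strictly larger magnitude. The data of Table~\ref{tablenormalizednumbers}, for instance the chain $C(0,3,3)<C(1,2,3)$ and $C(1,2,3)>C(1,1,4)$ with the net inequality $C(0,3,3)<C(1,1,4)$ for $g=7$, suggests that this quantitative dominance holds robustly.

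The principal obstacle is clearly the length jump, since no single elementary move connects the two partitions and the inequality depends on a delicate balance between increments of opposite sign. A natural fallback is a hybrid argument: prove the uniform large-genus expansion $C(\dd) = C_\infty + g^{-1} f(\dd) + O(g^{-2})$ to one order beyond the leading asymptotics established elsewhere in this paper, show that the correction $f$ is strictly monotone with respect to $\prec$, and conclude the conjecture for all $g\ge g_0$ with an effective $g_0$; the remaining finite range is then covered by the direct verification already performed up to $g=40$ via~\eqref{DNrec}. The hard step in this fallback strategy is extracting a uniform next-to-leading term that is sensitive enough to the fine structure of $\dd$ to detect the $\prec$-ordering.
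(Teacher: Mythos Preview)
This statement is labeled a \textbf{Conjecture} in the paper, and the paper offers no proof: it only reports a numerical verification up to $g=40$ and supporting asymptotic evidence. So there is no ``paper's own proof'' to compare your attempt against, and your task was in effect to prove an open problem.

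What you have written is a strategy outline, not a proof; you say as much yourself. Two concrete problems stand out. First, the hope that the DVV-type recursion~\eqref{DNrec} lets you write $C(\dd')-C(\dd)$ for a single balancing move as a nonnegative combination of lower differences plus a positive boundary term is not justified: the recursion mixes linear and quadratic terms and there is no evident mechanism forcing the signs to cooperate. You would have to exhibit this structure explicitly, and nothing in the paper suggests it is there.

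Second, your fallback strategy fails for a structural reason visible in Table~\ref{tablepoly}. Expanding $C(\dd)$ in powers of $X^{-1}=(2g-2+n)^{-1}$, the coefficients $c_1,c_2,c_3$ are absolute constants; the partition enters for the first time at order $X^{-4}$, and only through $p_1(\dd)$. Thus ``one order beyond the leading asymptotics'' gives $C(\dd)=\pi^{-1}\bigl(1-\tfrac{1}{2X}\bigr)+O(X^{-2})$, which depends only on~$n$ and cannot distinguish two partitions of the same length. Worse, by the degree estimate~\eqref{degreeest} the polynomial $\widehat c_k$ involves only $p_d$ with $2d+1\le k-1$, so comparing two partitions that agree in all multiplicities $p_1,\dots,p_{D-1}$ requires the expansion to order at least $X^{-2D-2}$. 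The paper's own example~\eqref{twonumbers}, where $C(1,18,20)$ and $C(1,19,19)$ agree to $26$ digits, shows that no fixed finite order suffices. Any asymptotic route to the conjecture would have to control the expansion to all orders uniformly, which is far beyond what is currently available.
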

\noindent To make this property more visible, we have given the numbers $C(\dd)$ in Table~\ref{tablenormalizednumbers} 
both as rational numbers and as real numbers to 6 significant digits.  For ease of reading,
we have also listed the smallest common denominator~$D=D_g$ of these numbers for each~$g$
and then tabulated the integers $D\,C(\dd)$ in the last column.

From the numerical tables we see a different property: the values of the normalized BGW numbers for a fixed~$g$
are very close to each other, e.g. the minimum and maximum values for $g=40$ given in~\eqref{exampleg40}
differ by less than a third of a percent. In view of the nesting property, we can concentrate on only the two
values $C(g-1)$ and~$C(1^{g-1})$, and indeed we can verify that these two numbers are close to each other for all~$g$.  
On one hand, the value of $C(g-1)$ is given by the explicit formula~\cite{DYZ, BR}
\beq\label{smallest}
C(g-1) \= \frac{g}{4^{2g-1}}\thin\binom{2g-1}g^2 \= \frac{(2g-1)!!^3}{2^{g+1} \, (2g-1)!\,g!} \,,\quad g\ge1\,,
\eeq
which by Stirling's formula has the asymptotics
\beq \label{asymsmallest}
C(g-1) \;\sim\; \frac1\pi\,\Bigl(1\m\frac1{4g}\+\frac1{32g^2} 
\+ \frac1{128g^3} \m \frac5{2048g^4}\+ \cdots\Bigr) \,, \quad g\to\infty\,.  
\eeq
On the other hand, as we will see in Section~\ref{secPainleve},  the value of $C(1^{g-1})$ is given by
\beq\label{biggest}
C(1^{g-1}) \=  \,\frac{ 3^{g-1}\thin (g-1)!}{(3g-2)!}\, y_g\,,
\eeq
where the $y_g$ are defined by requiring that the generating series  
\beq\label{defY0602}
Y \: \sum_{g\ge1} y_g\, X^{1-3g} \=  \frac1{4 \, X^2} \+ \frac{9}{4\,X^5} \+ \frac{1323}{16\,X^8}
\+ \frac{108315}{16 \, X^{11}} \+ \frac{62737623}{64 \, X^{14}} \+ \cdots 
\eeq
satisfies the following third-order nonlinear ODE:
\beq\label{P34equation}
Y''' \+  6\,Y\,Y' \,-\, 2\, Y \,-\, X\, Y' \= 0 \,, \qquad ' \= \frac{d}{dX}\,.
\eeq
This equation can be referred to as the {\it Painlev\'e XXXIV equation} (cf.~\cite{BR,CJP, FA, Ince}).
From~\eqref{defY0602} and~\eqref{P34equation} we know that
the coefficients $y_g$ satisfy the recursion 
\beq\label{recyg}
y_g \= (3g-2)(3g-4) \, y_{g-1} \+ \frac{2}{g-1} \,\sum_{h=1}^{g-1}(3h-1) y_h \, y_{g-h} \qquad(g\ge2) 
\eeq
 with the initial value $y_1=1/4$, and from this one can obtain the large $g$ asymptotics
\beq\label{asymptoticsvg}
y_g \;\sim\; A\,\frac{(3g-2)!}{3^{g-1}\,(g-1)!}\,\Bigl(1 \m \frac1{6g} \m \frac{7}{72g^2}
\m \frac{41}{432g^3}\m \frac{1789}{10368g^4} \+  \cdots \Bigr)\,,
\eeq
where $A$ is some positive constant, which by a numerical computation
can be guessed to be $1/\pi$. 
The theoretical determination of this constant is difficult. With the help of the relationship between the Painlev\'e XXXIV equation and the 
Painlev\'e II equation~\cite{CJP, FA} and a method given in~\cite{CMZ},
one can obtain, by employing an appropriate limit of a deep result of Its--Kapaev 
(\cite{IK}\footnote{We thank Lun Zhang for pointing out the reference~\cite{IK}.} and Chapter~11 of~\cite{FIKN}), that $A=1/\pi$. 
Theorem~\ref{thmuniformasymptotics} below will lead to an independent proof of of this evaluation.
Now using~\eqref{biggest} we get 
\beq \label{asymptoticsC1g}
C(1^{g-1}) \;\sim\;   \frac1\pi\,\Bigl(1\m\frac1{6 \, g}\m\frac7{72 \, g^2} 
\m \frac{41}{432 \, g^3} \m \frac{1789}{10368 \, g^4}\+ \cdots\Bigr)\,, \quad g\to\infty\,.
\eeq

We now note that Conjecture~\ref{conjnesting} together with formulas \eqref{asymsmallest} and
  \eqref{asymptoticsC1g} implies that
\beq\label{C(d)-1/pi}
 \frac 1\pi \,-\, \frac1{4\pi\thin g(\dd)}\+ O\Bigl(\frac1{g(\dd)^2} \Bigr) \;\; \le \;\; C(\dd) 
  \;\; \le  \;\; \frac 1\pi  \,-\, \frac1{6\pi\thin g(\dd)} \+ O\Bigl(\frac1{g(\dd)^2} \Bigr) 
\eeq
as $g(\dd)$ tends to infinity. The following theorem, which will be proved in Section~\ref{secproofofuniformleadingasymp},
gives a slight weakening of this, with an unspecified (though effective) O-constant.  

\begin{thm}\label{thmuniformasymptotics} 
For $\dd\in  (\mathbb{Z}_{\ge0})^n$, we have
\beq
\label{STRONG} C(\dd) \= \dfrac1\pi \+ \text O\thin\Bigl(\dfrac1 {g(\dd)}\Bigr)
\eeq
uniformly as the genus $g(\dd)=|\mathbf{d}|+1$ goes to~$\infty$. 
\end{thm}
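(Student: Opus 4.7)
The plan is to prove the uniform estimate by strong induction on the genus $g(\dd)$, using the Do--Norbury-type recursion for the BGW numbers that will have been set up in Section~\ref{secDN}. That recursion expresses $\langle\tau_{d_1}\cdots\tau_{d_n}\rangle^\Theta_g$ in terms of BGW numbers with strictly smaller $g$ (plus lower-order combinatorial pieces with the same $g$), and after multiplying both sides by the normalization factor in~\eqref{defC} it becomes a recursion purely for the numbers $C(\dd)$. The first step, then, is to rewrite the recursion in this normalized form and to track the explicit combinatorial coefficients (products of double factorials divided by factorials), noting that these coefficients, after applying Stirling, have clean leading asymptotics in $g$.

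Second, I would separate the resulting recursion into two types of contributions. The \emph{linear} contributions, which come from merging two $\psi$-insertions into one, produce a weighted sum of $C(\dd')$ with $|\dd'|<|\dd|$; after normalization these weights form (approximately) a probability distribution, so by the induction hypothesis they contribute $\tfrac1\pi+O(1/g)$. The \emph{quadratic} contributions, which come from splitting along a separating or non-separating node, produce sums of products $C(\dd^{(1)})\thinspace C(\dd^{(2)})$ with $g(\dd^{(1)})+g(\dd^{(2)})\le g(\dd)$, and by induction each factor is $\tfrac1\pi+O(1/g)$. One then has to verify a combinatorial identity at leading order: the total weight of the quadratic terms combined with the linear terms reproduces $\tfrac1\pi$ to leading order. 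A very efficient way to check this is to impose the identity on the already-known one-variable asymptotics \eqref{asymsmallest} of $C(g-1)$, which must be a consistent solution of the same recursion; this both fixes the leading constant at $1/\pi$ and pins down the first subleading terms.

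Third, to close the induction one must choose a constant $K$ such that the induction hypothesis $|C(\dd')-1/\pi|\le K/g(\dd')$ for all $g(\dd')<g(\dd)$ implies $|C(\dd)-1/\pi|\le K/g(\dd)$. This amounts to showing that the linear weights, which are essentially a convex combination (modulo an $O(1/g)$ perturbation), contract the error by a factor strictly less than $1$ (up to $O(1/g)$), while the quadratic contribution is of size $O(1/g)$ thanks to the factor of $\tfrac1{\pi^2}$ absorbing into the subleading term. The base of the induction is provided by the explicit formula~\eqref{smallest} together with its Stirling expansion~\eqref{asymsmallest}, plus a direct check for small $g$ using Table~\ref{tablenormalizednumbers}.

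The main obstacle I anticipate is the uniformity in the number of parts $n$: the recursion introduces terms whose number grows with $n$, and one has to make sure that the aggregate weight of these terms is $1+O(1/g)$ rather than something that blows up with $n$. Controlling this will require using the explicit combinatorial structure of the normalization $\prod(2d_j+1)!!/(X(\dd)-1)!$, specifically the fact that the factor $(X(\dd)-1)!$ with $X(\dd)=2g-2+n$ grows exactly fast enough to absorb the $n$-dependence coming from the $n$ merging terms and the $\binom{n}{|I|}$ ways of distributing insertions between two factors in a splitting. Once this uniform bookkeeping is in place, the induction should close and immediately yield~\eqref{STRONG} with an effective O-constant; specializing to $\dd=(1^{g-1})$ and comparing with~\eqref{biggest} then delivers the promised independent proof that the constant $A$ in~\eqref{asymptoticsvg} equals $1/\pi$.
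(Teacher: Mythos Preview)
Your outline has a genuine gap at the crucial step of closing the induction. After normalization, the recursion~\eqref{CVirasoro2} expresses $C(\dd)$ as a sum of ``merging'' terms $\frac{2d_j+1}{X-1}C(\ldots)$ and a ``non-separating'' term $\frac{2}{X-1}C(a,b,\ldots)$, all with $X$-value $X(\dd)-1$, plus the separating quadratic piece. The weights of the first two families sum to \emph{exactly} $1$ (namely $\frac{X-(2d_1+1)}{X-1}+\frac{2d_1}{X-1}=1$), so this is a genuine convex combination, not a contraction. Your claim that ``the linear weights \ldots\ contract the error by a factor strictly less than $1$'' is therefore false: assuming $|C(\dd')-1/\pi|\le K/g(\dd')$ for the inputs gives at best $K/(g-1)$ for the non-separating piece, and adding the $O(1/X^2)$ from the separating terms makes the bound worse, not better. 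Iterating accumulates an unbounded sum $\sum 1/X^2$ into the constant, and a naive induction never closes at a fixed~$K$. Also note that the merging terms keep $g$ fixed (only $n$ drops), so strong induction on $g$ alone does not even apply to them.

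The paper's proof is structurally different. It separates lower and upper bounds. The lower bound is the inequality $C(\dd)\ge C(|\dd|)$, proved by a short direct induction that exploits the explicit one-point formula~\eqref{smallest} to show the right-hand side of the recursion minus $C(|\dd|)$ is nonnegative; this already gives $C(\dd)\ge 1/\pi - O(1/g)$. The upper bound is obtained by first bounding the total weight of the separating quadratic terms by $4/((X-1)(X-2))$ (Lemma~\ref{estimatethird}), then introducing $\theta_{\X,n}=\max_{X(\dd)=\X}C(\dd)$ and an auxiliary function $f(\X,n)$ satisfying a majorizing linear recursion in $(\X,n)$ whose solution is analyzed explicitly (Lemma~\ref{lemsupnthetag,n}). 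The point is that $f$ absorbs the accumulated $O(1/X^2)$ contributions into a convergent sum and is shown to be $1/\pi+O(1/\X)$ uniformly for $n\le \X/5$; a separate short argument handles larger~$n$. If you want to salvage your approach, you would need either this kind of two-sided analysis or some other mechanism that prevents the $O(1/X^2)$ source term from piling up under iteration of a weight-$1$ convex combination.
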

\noindent 
Explicitly, this theorem 
says that there exists an absolute constant~$K$ such that
\beq \Bigl|C(\dd) - \frac{1}{\pi} \Bigr| \,\leq \,  \frac{K}{g(\dd)}  \qquad \text{\it for~all}\;\; \dd\in (\mathbb{Z}_{\ge0})^n\, . \label{Cdbound}\eeq
We note that Eynard {\it et al}~\cite{EGGGL} 
proved a version of Theorem~\ref{thmuniformasymptotics} for fixed $n$  
but our result is uniform
(and hence includes
the value of~$A$ in~\eqref{asymptoticsvg}, which does not follow from~\cite{EGGGL} since the
value of~$n=g-1$ grows with~$g$).

The monotonicity conjecture (Conjecture~\ref{conjmonotoncity}) says in particular that the normalized numbers  
$C(\dd)$ with partitions $\dd$ of a fixed length are smaller than those of greater length,
so if $I_{g,n}$ denotes the smallest interval containing the normalized BGW numbers 
of length~$n$ for a given~$g$, then $I_{g,n}$ lies strictly to the left of $I_{g,n+1}$.
The numerical data (up to $g=40$) shows that much more is true.  The following picture shows all $31185$ normalized BGW numbers with $g=40$.
\begin{figure}[htbp]
	\centering
	\includegraphics[scale=0.5]{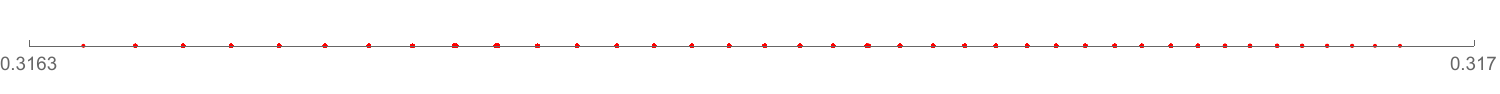}
\end{figure}
At this resolution, what one sees are just 39 intervals that look like points,
meaning that the normalized BGW numbers of length~$n$ are much closer to each other
than to those of length~$n-1$ or~$n+1$. More precisely, 
we make the observation that each interval $I_{g,n}$ has length $O(g^{-3})$, even though the gaps between the 
intervals have an average length of the order $O(g^{-2})$ (because these $g-1$ intervals lie in 
an interval of total length $O(1/g)$ by Theorem~\ref{thmuniformasymptotics}).  
A conjectural statement giving a much more precise result
is stated 
at the end of Section~\ref{secproofofpolynomiality}.

From the numerical data  we observe for some small values of~$n$ that  $C(\dd)/C(g(\dd)-1)$ is a rational
function of $d_n$ if $d_1,\dots,d_{n-1}\ge1$ are fixed.  In fact this is always true, as we will prove in Section~\ref{secallbutonepartfixed}. 
Equation~\eqref{asymsmallest} and Theorem~\ref{thmuniformasymptotics} then imply that $C(\dd)$ for $\dd=(\dd',d_n)$
with $\dd'=(d_1,\dots,d_{n-1})$ fixed has an  
asymptotic expansion of the form
\begin{align}\label{Cdexpansioning}
C(\dd) \,\sim\, \frac{1}{\pi} \, \sum_{k=0}^{\infty}\frac{A_k(\dd')}{g(\dd)^k}\,, \qquad d_n\to\infty\,,
\end{align}
where the $A_k(\dd') $ are rational numbers with $A_0(\dd')=1$. 
Now looking at the explicit formulas for small $n$ (see Section~\ref{secallbutonepartfixed}), we find that 
as $d_1,\dots,d_{n-1}$ grow, the asymptotic expansion of~$C(\dd)$ stabilizes to a well-defined power series in $1/g(\dd)$
depending on~$n$.  And then we discover that if we rewrite them as power series in $1/X(\dd)$, where $X(\dd)=2g-2+n$ as usual, 
we get a power series independent of~$n$ and beginning
\beq 
\frac1\pi\;\Bigl(1 - \frac1{2X} + \frac5{8X^2}-\frac{11}{16 X^3} + \frac{83}{128X^4}-  \frac{143}{256X^5} +\cdots \Bigr)\,.\label{gammaexpansion}
\eeq
We can recognize this power series as the large-$X$ expansion of the function
\begin{align}\label{defgamma}
\gamma(X) \= \frac{\Gamma\bigl(\frac{X}2+1\bigr)^2}{\pi \, \Gamma\bigl(\frac{X+1}2\bigr) \, \Gamma\bigl(\frac{X+3}{2}\bigr)}\,.
\end{align}
We now find that the difference of $C(\dd)$ and $\g(X(\dd))$ is of the order 
$O(X(\dd)^{-2\min\{d_i\} - 2})$, and also that %$C(\dd)$
$C(\dd)\leq \gamma(X(\dd))$ in all cases, with strict inequality unless $n=1$.

We also find that sometimes   
two normalized BGW numbers with the same $g$ and~$n$ are extremely close 
to each other, a numerical example being given the two numbers
\beq \begin{split}
 &C(1,18,20)\approx 0.3163749000332518760707893046\thin, \\ 
	&C(1,19,19) \approx 0.3163749000332518760707893073\thin, \end{split} \label{twonumbers}
\eeq
which agree to 26 significant digits. These phenomena and many others of the same kind will be discussed in 
Sections~\ref{secallbutonepartfixed},\ref{secproofofpolynomiality}, \ref{sectwopoint} and~\ref{secsubleadingasymptotics}.
	
We now turn to the second main theme of this paper, which will shed light on all aspects of the discussion so far.

In the study of Witten's intersection numbers, two of the authors~\cite{GY} discovered, and stated as a conjecture,
that for each~$k$ the coefficient of $1/g^k$ in the large genus asymptotics of  
normalized Witten's intersection numbers is a polynomial of~$n$ and the multiplicities
in the arguments, and also that only the multiplicities of $0, 1, \dots, [3k/2]-1$ are involved.
In the computations for the current paper, we discovered
that the same phenomenon holds also for the normalized BGW numbers $C(\dd)$, 
 now with the $k$th coefficient depending on~$n$ and the multiplicities of $0,1,\dots,[k/2]-1$. 
Both of these conjectural statements
were proved by Eynard {\it et al}~\cite{EGGGL}.

There is a further discovery. We already know that for each~$k$
the coefficient $A_k(\dd')$ in~\eqref{Cdexpansioning}, $\dd'\in(\mathbb{Z}_{\ge1})^{n-1}$, 
is a polynomial $a_k$ of~$n$ and 
the multiplicities. As we will see from Section~\ref{secproofofpolynomiality},
the DVV-type relations for BGW numbers (cf.~Section~\ref{secDN}) imply that 
the power series $\sum_{k=0}^{\infty}a_k/g^k$ is unchanged by 
$(g\to g-1, \, n\to n+2)$.
So, if we write this power series in terms of~$X^{-1}$
instead of~$g^{-1}$ with $X=2g-2+n$, then the coefficients are polynomials of the multiplicities
of $1,\dots, [k/2]-1$, independent of~$n$.
Namely,
\begin{align}\label{Cdexpansionnew}
C(\dd) \,\sim\, \frac{1}{\pi} \, \sum_{k=0}^{\infty}\frac{c_k(p_1(\dd'),\dots,p_{[k/2]-1}(\dd'))}{X(\dd)^k}\,, \qquad d_n\to\infty\,,
\end{align}
for some polynomials $c_k$, where $p_r(\dd')$ denotes the multiplicity of~$r$ in~$\dd'$. The first few polynomials $c_k$ are given by 
$$c_0\,=\,1\,, \quad c_1 \,=\, - \frac 1 {2}\,, \quad 
c_2 \,=\, \frac 5 {8}\,, \quad c_3 \,=\,  - \frac {11} {16} \,, \quad c_4 \,=\, \frac{83}{128}-\frac{27}{8}\,p_1\,. $$
The constant terms of~$c_k$ agree (necessarily) with the coefficients of~$\g(X)$ (cf.~\eqref{gammaexpansion}). 
This makes it very natural to introduce the renormalized BGW numbers $\widehat{C}(\dd)$ by 
\begin{align}
\widehat{C}({\bf d}) \: \frac{C({\bf d})}{\gamma(X(\dd))} \, , \quad \dd\in (\mathbb{Z}_{\ge1})^{n} \,. \label{normalizehatc61}
\end{align}
\begin{thm}\label{thmpoly} 
For any fixed $n$ and fixed $\dd'\in(\mathbb{Z}_{\ge1})^{n-1}$, the numbers $\widehat{C}(\dd)$ satisfy
\begin{align}\label{Chatdexpansionnew}
\widehat{C}(\dd) \,\sim\, \sum_{k=0}^{\infty}\frac{\widehat{c}_k(p_1(\dd'),p_2(\dd'),\dots)}{X(\dd)^k}\,, \qquad d_n\to\infty\,,
\end{align}
where $\dd=(\dd',d_n)$, $\widehat{c}_k$ are universal polynomials of $p_1,p_2,\dots$ having rational coefficients, 
with $\widehat{c}_0\equiv1$ and $\widehat{c}_k|_{p_b\equiv 0}=0$ $(k\ge1)$.
Moreover, under the degree assignments 
\begin{equation}\label{degreeassign}
\deg \, p_d \= 2d+1\quad (d\geq 1)\,,
\end{equation}
the polynomials $\widehat{c}_k$, $k\ge1$, satisfy the degree estimates
\begin{equation}\label{degreeest}
\deg \, \widehat{c}_k \,\leq\, k-1\,.
\end{equation} 
\end{thm}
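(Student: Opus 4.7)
The plan is to deduce Theorem~\ref{thmpoly} from the already-established expansion \eqref{Cdexpansionnew} by formally dividing by $\gamma(X)$. Combining $\pi C(\dd)\sim\sum_{k\ge 0}c_k/X(\dd)^k$ with the expansion $\pi\gamma(X)\sim\sum_{k\ge 0}\gamma_k/X^k$ from \eqref{gammaexpansion} (with $\gamma_0=1$), formal inversion of the latter yields
\[
\widehat C(\dd)\,\sim\,\sum_{k\ge 0}\widehat c_k/X(\dd)^k, \qquad \widehat c_k \= c_k \m \sum_{j=1}^{k}\gamma_j\,\widehat c_{k-j},
\]
with each $\widehat c_k\in\mathbb{Q}[p_1,p_2,\ldots]$ a universal polynomial, and $\widehat c_0\equiv 1$.

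Next I would verify $\widehat c_k|_{p_b\equiv 0}=0$ for $k\ge 1$. The specialisation $p_b\equiv 0$ for all $b\ge 1$ corresponds to $n=1$, $\dd=(g-1)$, $X=2g-1$. From \eqref{smallest} we get $C(g-1)=g\binom{2g}{g}^2/16^g$, and from \eqref{defgamma} combined with $\Gamma(g+\tfrac12)=(2g)!\sqrt\pi/(4^g g!)$ we get $\gamma(2g-1)=(2g)!^2/\bigl(16^g(g-1)!(g!)^3\bigr)$; the quotient equals~$1$ identically in $g$. Hence $\widehat C((g-1))\equiv 1$, and expanding in powers of $1/X$ yields $\widehat c_k|_{p_b\equiv 0}=0$ for all $k\ge 1$, or equivalently $c_k|_{p_b\equiv 0}=\gamma_k$.

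The heart of the proof is the degree estimate $\deg\widehat c_k\le k-1$ under the grading $\deg p_d=2d+1$. Writing $c_k=\gamma_k+\widetilde c_k$ with $\widetilde c_k|_{p_b\equiv 0}=0$, the recursion above rearranges to $\widehat c_k=\widetilde c_k-\sum_{j=1}^{k-1}\gamma_j\,\widehat c_{k-j}$, so by induction on $k$ it suffices to prove the key lemma $\deg\widetilde c_k\le k-1$: every non-constant monomial in $c_k$ has degree at most $k-1$. I would attack this lemma by revisiting the DVV-type recursion of Section~\ref{secDN} used in Section~\ref{secproofofpolynomiality} to produce the $c_k$, and show that it respects a filtration calibrated to the grading. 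The grading is natural because a part of value $d$ contributes exactly $2d+1$ to $X(\dd)=\sum_i(2d_i+1)$; accordingly, each elementary step of the recursion should increase the $1/X$-order and the filtration degree in tandem, so that any non-constant monomial appearing at order $1/X^k$ has degree at most $k-1$.

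The main obstacle will be precisely this last bookkeeping. A priori the DVV recursion can generate non-constant monomials such as $p_1p_2$ in $c_k$ whose total degree $(2\cdot 1+1)+(2\cdot 2+1)=8$ exceeds $k-1$ for small $k$, and one must show that all such overweight contributions cancel in the final polynomial. A plausible alternative route, avoiding this combinatorial analysis, would use the Painlev\'e~II hierarchy satisfied by the BGW free energy: the grading $\deg p_d=2d+1$ should match a natural homogeneity of that hierarchy, from which the degree bound would follow as a dimensional statement in that integrable-systems framework.
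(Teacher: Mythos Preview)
Your reduction is circular: the expansion \eqref{Cdexpansionnew}, asserting that each $c_k$ is a \emph{polynomial} in the multiplicities $p_1,p_2,\dots$, is not established prior to Section~\ref{secproofofpolynomiality}; it is part of what Theorem~\ref{thmpoly} proves (and the paper's stated aim is a proof independent of~\cite{EGGGL}). What is available beforehand is only \eqref{Cdexpansioning} via Proposition~\ref{proprationality}, giving rational numbers $A_k(\dd')$ depending on the full tuple~$\dd'$, not polynomials in the $p_i$. So polynomiality cannot be assumed; the paper establishes it simultaneously with the degree bound by induction on~$k$.

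Your sketch for the degree bound also misses the actual mechanism. The paper does not merely check that the DVV recursion \eqref{CVirasoro2} ``respects a filtration''. Instead it (i)~introduces coefficients $\tilde c_k(p_0,p_1,\dots)$ (allowing a $p_0$ variable) and uses Theorem~\ref{thmuniformasymptotics} to fix the base case $\tilde c_0\equiv1$; (ii)~expands the recursion to obtain a formula for the finite difference $\tilde c_k(\mathbf p+\mathbf e_d)-\tilde c_k(\mathbf p)$, observes this is independent of~$d$ for $d\ge k+1$ and hence equal to a constant~$A$, so that $\tilde c_k$ a priori contains a term $A\sum_i p_i$; (iii)~rules out $A\neq0$ by looking one step higher ($k\to k+1$) and invoking the explicit two-point formula~\eqref{TwoPointsC}, which forbids the resulting term $4Ad$; (iv)~only then, subtracting the large-$d$ identity from the recursion itself, obtains an expression \eqref{hatCk2} for $\Delta_{p_d}\tilde c_k$ each of whose terms visibly has degree $\le k-2-2d$. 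Step~(iii) is exactly the cancellation of ``overweight'' contributions you worry about, and it requires the external input~\eqref{TwoPointsC}, not just the recursion. Your proposed Painlev\'e alternative is speculative and is not how the paper proceeds.

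Your verification that $\widehat c_k|_{p_b\equiv0}=0$ via $C(g-1)=\gamma(2g-1)$ is correct and matches the paper's concluding line.
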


\noindent We will give a proof of this theorem, independent of~\cite{EGGGL}, in Section~\ref{secproofofpolynomiality}.

From~\eqref{degreeest}, we know that $\widehat{c}_k$ does not depend on~$p_{d}$ with $d\ge (k-1)/2$.
In particular, $\widehat{c}_k=0$ for $k=1,2,3$.
We list a few more $\widehat{c}_k$ below:
$$
\widehat{c}_4 \,=\, -\frac{27}{8}\,p_1 \,, \quad \widehat{c}_5 \,=\, - \frac{27}{4}\,p_1\,, \quad \widehat{c}_6 \,=\, -\frac{45}4 \, p_1 - \frac{1125}{16} \, p_2\,.
$$
Several more coefficients for both $c_k$ and $\widehat{c}_k$ are given in Table~\ref{tablepoly} of Section~\ref{secproofofpolynomiality}.
We are also able to give explicit expressions for all $\widehat{c}_k|_{p_b=\delta_{b,d}}$; see Section~\ref{sectwopoint}.

By Theorem~\ref{thmpoly} and the result~\cite[Theorem~4.3]{EGGGL} 
of Eynard {\it et al}, we also know that 
for any fixed $L\ge1$ and fixed $n\ge1$, and for $\dd=(d_1,\dots,d_n)\in(\ZZ_{\ge1})^n$,
\begin{align}\label{Chatd1dnuniformindwithout0}
	\widehat{C}(\dd) \= \sum_{k=0}^{L-1} \frac{\widehat{c}_{k}(p_1(\dd),p_2(\dd),\dots)}{X(\mathbf{d})^k} + O\biggl(\frac{1}{X(\mathbf{d})^{L}}\biggr)\,,  \qquad g(\mathbf{d})\to\infty\,,
\end{align}
where the implied O-constant only depends on~$n$ and~$L$.

Based on an algorithm given in~\cite{DYGUE, DYP1} (see~equations \eqref{defMmd}--\eqref{ijbk} of Section~\ref{secexplicitform})
 we have computed $\widehat{C}(d^n)$ for genera far bigger than~40, from which we also see the phenomenon
 that $\widehat{C}(d^n)$ rapidly tends to~1 as~$d\to\infty$. 
(For example, $1-\widehat{C}(100^{10})$ is roughly $1.8\times10^{-285}$.) This together with Theorem~\ref{thmpoly} 
leads us to the discovery of the conjectural asymptotic formula 
\begin{align}\label{dddasympleading}
1 \,-\, \widehat{C}(d^n) \,\sim\, \Bigl(\frac12\Bigr)^{\delta_{n,2}}\,
\sqrt{\frac{4 \,(n-1)}{\pi\,n\,d}}\, \biggl(\frac{(n-1)^{n-1}}{n^n}\biggr)^{2d+1}\,,\quad
{\rm as}~d\to\infty\,.
\end{align}
More details and generalizations of this will be discussed in Section~\ref{secsubleadingasymptotics}.

We end this section
by presenting two applications of Theorem~\ref{thmuniformasymptotics}. 

The first one is an application for the Painlev\'e II hierarchy. 
Following~\cite{BDY,CJP, Dickey,Magri}, define a sequence of 
 polynomials $m_d=m_d(u_0,u_1,u_2,\dots,u_{2d})$, $d\ge0$, by means of generating series 
as follows: 
\begin{align}\label{defma}
b \, \p^2(b) \,-\, \frac{1}{2} \, \p(b)^2 \,- \, 2\, (\lambda-2u_0) \, b^2 \= -2 \,\lambda \,,
\end{align}
where $\p := \sum_{i\ge0} u_{i+1} \p / \p u_i$, and 
\begin{align}\label{b(lambda)expansion}
b(\lambda) \= 1\+\sum_{d\ge0}\frac{(2d+1)!! \, m_d }{\lambda^{d+1}}.
\end{align} 
The first few $m_d$ are $m_0=u_0$, $m_1=\frac12 u_0^2 + \frac1{12} u_2$.
By the {\it Painlev\'e II hierarchy} we mean the following family of ODEs:
\begin{align}\label{P2hier}
2^{2d-1} (2d-1)!! \, (\partial_X
+2V) \, \Bigl(m_{d-1}\Bigl(\tfrac{V_X-V^2}2, \tfrac{(V_X-V^2)_X}2, \dots\Bigr)\Bigr)-VX-\alpha_d\=0\,,
\end{align} 
where $d\ge1$ and $\alpha_d$ are constants. We will focus on the case when $\alpha_d=1/2$, $d\ge1$.
In this case, it can be shown that, for each $d\ge1$, there exists a unique formal solution $V(X)$ to~\eqref{P2hier} 
of the form
\begin{align}\label{solP2hier}
V(X) \= -\sum_{n=0}^{\infty}\frac{v_{d,n}}{X^{(2d+1)n+1}}\,,  \qquad v_{d,n}\in \mathbb{C} \,, \quad v_{d,0}=\frac12\,.
\end{align} 
In Section~\ref{secPainleve} we will use Theorem~\ref{thmuniformasymptotics} to prove the following theorem.
\begin{thm}\label{thmPainleveapp}
For each $d\ge1$, the coefficients $v_{d,n}$ of the formal solution $V(X)$ to the $d$th member of the 
Painlev\'e II  hierarchy have the following asymptotics:
\begin{align}
v_{d,n} \; \sim  \; \frac1\pi \, \frac{((2d+1)n-1)!}{(2d+1)^{n-1} \thin (n-1)!}\,, \quad n\to\infty\,.\label{asymvak}
\end{align}
\end{thm}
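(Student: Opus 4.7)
The plan is to identify each coefficient $v_{d,n}$ as an explicit factorial multiple of the single-species normalized BGW number $C(d^n)$, and then invoke Theorem~\ref{thmuniformasymptotics}. For fixed $d\geq 1$ the partition $\dd=d^n$ has $g(\dd)=dn+1$ and $X(\dd)=(2d+1)n$, so $g(\dd)\to\infty$ as $n\to\infty$; Theorem~\ref{thmuniformasymptotics} then yields $C(d^n)=\frac{1}{\pi}+O(1/n)$. The whole argument reduces to upgrading this estimate to~\eqref{asymvak} via an explicit expression of $v_{d,n}$ in terms of $C(d^n)$.

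\textbf{Step 1: single-time reduction of the BGW tau function.} The value $\alpha_d=1/2$ in~\eqref{P2hier} is precisely the one singling out the BGW tau function among the solutions of the modified KdV hierarchy. I would establish this by setting to zero all BGW flow times except the one dual to $\tau_d$ and performing the Miura-type change of variable dictated by~\eqref{defma}--\eqref{b(lambda)expansion}; this produces a one-variable function $V(X)$ solving the $d$-th equation of the Painlev\'e II hierarchy and admitting a series expansion of the form~\eqref{solP2hier} (cf.\ the integrable-systems framework of~\cite{BDY,DN,DYZ,YZ}). Unwinding the Miura transform then gives an explicit relation
\[
v_{d,n} \= \kappa(d,n)\,\langle\tau_d^n\rangle^{\Theta}_{dn+1}
\]
for a purely combinatorial factor $\kappa(d,n)$ that can be read off from the transformation.

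\textbf{Step 2: the factorial factor and conclusion.} Using~\eqref{defC} with $\dd=d^n$---where $X(\dd)-1=(2d+1)n-1$, $\prod_j(2d_j+1)!!=((2d+1)!!)^n$, and $2^{2g(\dd)-1}=2^{2dn+1}$---the identity from Step~1 rewrites as
\[
v_{d,n} \= \kappa(d,n)\,\frac{((2d+1)n-1)!}{2^{2dn+1}\,((2d+1)!!)^n}\,C(d^n).
\]
A short computation should show that this prefactor is asymptotic as $n\to\infty$ to $\frac{((2d+1)n-1)!}{(2d+1)^{n-1}(n-1)!}$, with the denominator $(n-1)!$ arising from the logarithmic derivative implicit in the Miura transform. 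Combining with $C(d^n)=\frac{1}{\pi}+O(1/n)$ from Theorem~\ref{thmuniformasymptotics} then yields~\eqref{asymvak}.

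\textbf{Main obstacle.} The crux is Step~1: making precise the identification of $v_{d,n}$ with the single-species BGW intersection number $\langle\tau_d^n\rangle^{\Theta}_{dn+1}$ and pinning down the exact combinatorial factor $\kappa(d,n)$. This requires careful bookkeeping of (i)~the Miura transform between the BGW/KdV and Painlev\'e II hierarchies, (ii)~the string equation that forces the single-time reduction of the BGW tau function, and (iii)~the role of the normalization $\alpha_d=1/2$. A useful consistency check is the case $d=1$: here~\eqref{P2hier} is classical Painlev\'e II with $\alpha_1=1/2$, the series~\eqref{solP2hier} is linked by Miura to the Painlev\'e XXXIV series~\eqref{defY0602} for $Y$, and~\eqref{asymvak} specializes to~\eqref{asymptoticsvg} together with the evaluation $A=1/\pi$.
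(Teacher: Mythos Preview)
Your overall strategy---reduce to the single-species numbers $C(d^n)$ and invoke Theorem~\ref{thmuniformasymptotics}---is exactly what the paper does. But Step~1 as you have written it contains a genuine gap: there is \emph{no} identity of the form $v_{d,n}=\kappa(d,n)\,\langle\tau_d^n\rangle^\Theta$ with a closed combinatorial factor~$\kappa(d,n)$. The Miura map $Y=V_X-V^2$ between the Painlev\'e~XXXIV and Painlev\'e~II hierarchies is nonlinear, and at the level of coefficients it gives the quadratic recursion
\[
y_{d,n}\;=\;((2d+1)n+1)\,v_{d,n}\;-\;\sum_{n_1+n_2=n} v_{d,n_1}\,v_{d,n_2}\,,
\]
so each $v_{d,n}$ depends on all lower $v_{d,m}$, not only on $C(d^n)$. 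Your proposed ``purely combinatorial factor'' does not exist, and the $(n-1)!$ you attribute to a logarithmic derivative is not how it arises.

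What the paper actually does is insert the Painlev\'e~XXXIV hierarchy as an intermediary. The single-time reduction of the BGW tau function (your string-equation idea, carried out via the $L_0$ Virasoro constraint) yields a \emph{linear} identification $y_{d,n}=\text{const}(d,n)\cdot C(d^n)$ for the Painlev\'e~XXXIV coefficients; here the $n!$ comes straight from the exponential generating function~\eqref{partitionfunctiondef}. Theorem~\ref{thmuniformasymptotics} then gives $y_{d,n}\sim\pi^{-1}\,((2d+1)n+1)!/((2d+1)^n n!)$. The passage to~$v_{d,n}$ requires a separate argument: because the $y_{d,n}$ grow factorially, the convolution $\sum v_{d,n_1}v_{d,n_2}$ is of strictly lower order than $((2d+1)n+1)\,v_{d,n}$, so the quadratic relation above forces $v_{d,n}\sim y_{d,n}/((2d+1)n+1)$, which is~\eqref{asymvak}. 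This last step---showing that the nonlinear terms are asymptotically negligible for rapidly divergent series---is the missing ingredient in your sketch; the paper handles it by citing the standard machinery in~\cite{CMZ}.
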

We note that for the particular case when $d=1$ the above theorem was essentially proved in \cite{IK} and~\cite{FIKN} by 
using a deep Riemann--Hilbert analysis, and we now achieve a new proof.
As far as we know, the cases with $d\ge2$ are new.

The second application that we will present 
is to use Theorem~\ref{thmuniformasymptotics} to study the large genus asymptotics of the more general integrals, which we 
call the {\it BGW-kappa numbers}, where the Theta-class is coupled with powers of $\kappa_1$-class as well as 
psi-classes
\beq\label{defkappatheta}
\int_{\overline{\mathcal{M}}_{g,n}} \,  \psi_1^{d_1}\cdots\psi_n^{d_n} \, \Theta_{g,n} \, \kappa_1^m \;=:\; \langle\kappa_1^m\prod_{j=1}^n\tau_{d_j}\rangle_g^{\Theta}\,.
\eeq
By the degree-dimension matching, these numbers vanish unless $m+d_1+\cdots+d_n=g-1$. 
A small table of these BGW-kappa numbers is provided in Section~\ref{seckappaclass}.

Like the numbers $C(\mathbf{d})$, we introduce the {\it normalized BGW-kappa numbers} as follows:
\begin{align}\label{defCkappa}
C(m;\dd) \,:=\, \frac{3^m\, 2^{2g-1}  \prod_{j=1}^n (2d_j+1)!!}{(X(m;\mathbf{d})-1)!}\, \langle\kappa_1^m\tau_{d_1}\cdots\tau_{d_n}\rangle_{g}^{\Theta}\,, \quad X(m;\dd) \,:=\, X(\dd)+3m\,.
\end{align}
Obviously, $C(0;\mathbf{d})=C(\mathbf{d})$. 
In Section~\ref{seckappaclass} we will use Theorem~\ref{thmuniformasymptotics} to prove  the following
\begin{prop}\label{corkappa}
For any fixed $m\ge0$, 
there exists a constant $K(m)$ such that
\begin{align}
	\bigg| C(m;\mathbf{d}) -\frac1\pi\bigg| \;\leq\; \frac{K(m)}{g(m;\dd)} \qquad \text{\it for~all}\;\; \dd\in (\mathbb{Z}_{\ge0})^n\,,
\end{align}
where $g(m;\dd)=g(\dd)+m=|\dd|+m+1$.
\end{prop}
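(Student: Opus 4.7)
The plan is to reduce the BGW-kappa numbers to a finite linear combination of ordinary BGW numbers of the same genus (but with extra marked points), after which Theorem~\ref{thmuniformasymptotics} can be applied termwise.

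Let $\pi\colon\overline{\mathcal{M}}_{g,n+1}\to\overline{\mathcal{M}}_{g,n}$ denote the forgetful map. The standard intersection-theoretic facts $\kappa_1\=\pi_*\psi_{n+1}^2$, $\pi^*\kappa_1\=\kappa_1-\psi_{n+1}$, Norbury's Theta pullback identity $\Theta_{g,n+1}\=\psi_{n+1}\,\pi^*\Theta_{g,n}$, and the bubble vanishings $\psi_{n+1}\cdot D_{j,n+1}\=0$ combine, through the projection formula, to give for every $m\ge1$ the recursion
\begin{equation*}
\Bigl\langle\kappa_1^m\prod_{j=1}^n\tau_{d_j}\Bigr\rangle_g^\Theta\=\sum_{k=0}^{m-1}\binom{m-1}{k}(-1)^k\,\Bigl\langle\kappa_1^{m-1-k}\tau_{k+1}\prod_{j=1}^n\tau_{d_j}\Bigr\rangle_g^\Theta.
\end{equation*}
Iterating this to eliminate all of the $\kappa_1$'s produces, for any fixed $m\ge 0$, an identity
\begin{equation*}
\Bigl\langle\kappa_1^m\prod_{j=1}^n\tau_{d_j}\Bigr\rangle_g^\Theta\=\sum_{\lambda\vdash m}c_\lambda\,\Bigl\langle\prod_{i=1}^{\ell(\lambda)}\tau_{\lambda_i}\prod_{j=1}^n\tau_{d_j}\Bigr\rangle_g^\Theta,
\end{equation*}
where the sum has $p(m)$ terms, the rationals $c_\lambda$ depend only on~$m$, and $c_{(1^m)}\=1$ (coming from the $k\=0$ branch being chosen at every step of the iteration).

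Translating through~\eqref{defC} and~\eqref{defCkappa}, and using $X(\mathbf{d},\lambda)\=X(\mathbf{d})+2m+\ell(\lambda)$ together with $X(m;\mathbf{d})\=X(\mathbf{d})+3m$, a short calculation gives
\begin{equation*}
C(m;\mathbf{d})\=\sum_{\lambda\vdash m}\frac{3^m\,c_\lambda}{\prod_{i=1}^{\ell(\lambda)}(2\lambda_i+1)!!}\cdot\frac{(X(\mathbf{d})+2m+\ell(\lambda)-1)!}{(X(\mathbf{d})+3m-1)!}\cdot C(\mathbf{d},\lambda).
\end{equation*}
For $\lambda\=(1^m)$ one has $\ell(\lambda)\=m$, $\prod_i(2\lambda_i+1)!!\=3^m$, and the factorial ratio is $1$, so the total prefactor equals $c_{(1^m)}\=1$; for any other $\lambda\vdash m$ one has $\ell(\lambda)<m$ and the factorial ratio is bounded by an $m$-dependent constant times $1/g(m;\mathbf{d})$. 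Since $g(\mathbf{d},\lambda)\=g(m;\mathbf{d})$, Theorem~\ref{thmuniformasymptotics} applied to each of the $p(m)$ numbers $C(\mathbf{d},\lambda)$ gives both the asymptotic $C(\mathbf{d},\lambda)\=1/\pi+O(1/g(m;\mathbf{d}))$ and a uniform bound $|C(\mathbf{d},\lambda)|\le 1/\pi+K$. Assembling these estimates yields the desired $|C(m;\mathbf{d})-1/\pi|\le K(m)/g(m;\mathbf{d})$, uniformly in $\mathbf{d}$, with an implicit constant depending only on~$m$.

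The main obstacle is the combinatorial bookkeeping of the iterated reduction: verifying in particular that the coefficient $c_{(1^m)}$ remains exactly~$1$ after full iteration (this is what pins the leading asymptotic constant to $1/\pi$ rather than to some rational multiple of it). Once that and the underlying intersection-theoretic facts (notably Norbury's Theta pullback and the bubble vanishings) are in hand, the rest of the proof is a routine finite-sum application of Theorem~\ref{thmuniformasymptotics}.
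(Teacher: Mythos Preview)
Your proof is correct and follows essentially the same route as the paper: both arguments convert the BGW-kappa number into the finite linear combination~\eqref{KMZ} of ordinary BGW numbers, identify the $\lambda=(1^m)$ term as the unique one with prefactor~$1$, and then apply Theorem~\ref{thmuniformasymptotics} termwise. The only cosmetic difference is that you re-derive the kappa-to-psi identity via the forgetful map and the Theta pullback, whereas the paper simply quotes it as~\eqref{KMZ}.
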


We also give in Section~\ref{seckappaclass} an application of Theorem~\ref{thmpoly} to BGW-kappa numbers.

The paper is organized as follows.  
In Section~\ref{secgeneral} we review a recursive definition of 
BGW numbers as well as an explicit formula for their $n$-point generating series. 
In Section~\ref{secpartitionfixedlength} we give closed formulas for BGW numbers. 
In Section~\ref{secallbutonepartfixed} we present several results on structures for 
BGW numbers. 
In Section~\ref{secproofofuniformleadingasymp} we prove Theorem~\ref{thmuniformasymptotics},  
and in Section~\ref{secproofofpolynomiality} we prove Theorem~\ref{thmpoly}. Further asymptotic formulas 
and conjectural subexponential asymptotics are given in Sections~\ref{sectwopoint}, \ref{secsubleadingasymptotics}, respectively.
In Sections~\ref{secPainleve}, \ref{seckappaclass} we present applications of the main theorems. 

\smallskip

\noindent {\bf Acknowledgements} 
The work was partially supported by NSFC No.~12371254, the CAS No.~YSBR-032, 
the National Key R and D Program of China 2020YFA0713100, and the China Scholarship Council.
Parts of the work of J.G., P.N. and D.Y. were done during their visits in MPIM, Bonn; they thank MPIM for excellent working
conditions.

\section{Review of general theory of BGW numbers}
\label{secgeneral}

The definitions for BGW numbers given in Section~\ref{secintro} are not directly calculable since the integrals over the
unitary group or moduli space are not algorithmically defined.  
In this section we review two algorithms
that can be used to effectively compute the BGW numbers and that can and have been implemented on a computer.

\subsection{Recursive definition of the BGW numbers}  \label{secDN} 
It is known from~\cite{DN,MMS} (cf.~also~\cite{A,CGG,Norbury0}) that the partition function $Z$ of the BGW numbers, defined by
\beq\label{partitionfunctiondef}
Z \= \exp \Biggl( \, \sum_{g\ge1} \sum_{n\ge1}\, \frac{1}{n!}\sum_{d_1,\dots,d_n\ge0 \atop d_1+\cdots+d_n=g-1}\langle \tau_{d_1} \dots \tau_{d_n} \rangle^{\Theta}_g \, t_{d_1} \cdots t_{d_n} \Biggr)
\eeq
(cf.~\eqref{ddc}), is a particular tau-function for the celebrated KdV hierarchy.
In particular, $u:=\partial^2 \log Z/\partial_{t_0}^2$ satisfies the KdV hierarchy 
\beq 
u_{t_d}\= \partial_{x}(m_d(u,u_x,u_{xx},\dots,u_{2dx}))\,, \quad d\ge0\,, 
\label{KdV}
\eeq
with $t_0\equiv x$, 
where $m_d$ is defined in~\eqref{defma},~\eqref{b(lambda)expansion}. It is also known that $Z$ satisfies the following infinite set of 
linear equations called the {\it Virasoro constraints}:
\begin{align}\label{LmZ=0}
L_m Z \=0\,,\quad m\ge0\,,
\end{align}
where $L_m$, $m\ge0$, are operators defined by
\begin{align}
&L_m \: -(2m+1)!!\, \frac{\partial}{\partial t_m} +\sum_{d\ge0} \frac{(2d+2m+1)!!}{(2d-1)!!} \, t_d\, \frac{\partial}{\partial t_{d+m}} \nn\\
&\qquad +\frac{1}{2}\sum_{a+b=m-1} (2a+1)!!\, (2b+1)!!\, \frac{\partial^2}{\partial t_a\thin \partial t_b} +\frac{1}8 \delta_{m,0}\,. \label{defL}
\end{align}
See~\cite{A, DN,GN, MMS} (cf.~also~\cite{AC, BR, CGG, DYZ,   Norbury0}).

For $n\ge1$, ${\bf d}=(d_1,\dots,d_n)\in(\ZZ_{\ge0})^n$, it is convenient to denote 
\beq\label{tauU}  
B({\bf d}) \:  \langle\tau_{d_1}\dots\tau_{d_n}\rangle_{g(\dd)}^{\Theta} \, \prod_{j=1}^n(2d_j+1)!! \,,
\eeq
where we recall that $g(\dd)=|\dd|+1$.
Using the $m=0$ case of~\eqref{LmZ=0}, we obtain
\beq B(0,\dd) \= (2g(\dd)+n-2)\, B(\dd)\,, \quad 2g(\dd)-2+n>0\,.\label{BB0}\eeq
In general, a recursion for the BGW numbers that is equivalent to the Virasoro constraints
was derived in~\cite{DN} by Do and the second author of the present paper:
\beq\begin{aligned} \label{DNrec}
&B(d,\dd) \= \sum_{i=1}^n (2d_i+1)\,B(d_1,\dots,d_i+d,\dots,d_n) \\
&\quad\+ \frac12\sum_{a+b=d-1}\Bigl(B(a,b,\dd) 
\+ \sum_{I\sqcup J=\{1,\dots,n\}} B\bigl(a,\{d_i\}_{i\in I}\bigr)\,B\bigl(b,\{d_j\}_{j\in J}\bigr)\Bigr)\,,
\end{aligned} 
\eeq
where $d\ge0$.
We refer to~\eqref{DNrec} as 
the DVV-type relation (here ``DVV" stands for Dijkgraaf--Verlinde--Verlinde), 
because it is analogous to the DVV relation for 
Witten's intersection numbers~\cite{DVV}.
It is also closely related to the 
topological recursion~\cite{DN, EO}.
Note that originally the DVV-type relation for BGW numbers was written in another normalization, denoted $U_{g,n}(2d_1+1,\dots,2d_n+1)$~\cite{DN}, which is related to $B({\bf d})$ by 
$$U_{g,n}(2d_1+1,\dots,2d_n+1) \= B({\bf d}) \; /\; \prod_{j=1}^n (2d_j+1) \,.$$

By induction on the sum $\sum_{i=1}^n(2d_i+1)=2g(\dd)+n-2$, we know that 
the numbers $B(\dd)$
can be uniquely determined by~\eqref{DNrec} along with the initial value 
\beq B(0)\=\frac{1}{8}\label{B0} \eeq
(see \eqref{BGWnumbers}, \eqref{tauU}).
However, 
it is not at all obvious from~\eqref{DNrec} that $B$ is symmetric in its arguments. In other words, if we force this
symmetry by defining $B$ as a function on unordered multisets, then \eqref{DNrec} is an overdetermined system
because we can choose any of the $n+1$ arguments of $B(d_1,\dots,d_{n+1})$ as the ``$d$" of~\eqref{DNrec} and it is 
non-trivial that the right-hand side will be independent of this choice.

Using the DVV-type relation~\eqref{DNrec}, we can in principle
compute the numbers $C(\dd)$ for partitions~$\dd=(d_1,\dots,d_n)$ of~$g-1$ for any $g$, 
and we have done so for all $g$ up to~40.  

\subsection{Explicit formulas of $n$-point generating series}  \label{secexplicitform}
Following~\cite{BDY, BR, DYZ}, consider the following $n$-point generating series of the BGW numbers:
\begin{equation}\label{defFn}
F_n(\lambda_1,\dots,\lambda_n) \=
\sum_{d_1,\dots,d_n\ge0}\frac{B(d_1,\dots,d_n)}{\lambda_1^{d_1+1}\cdots\lambda_n^{d_n+1}}\,.
\end{equation}
Using the matrix-resolvent method~\cite{BDY}, an explicit formula for $F_n(\lambda_1,\dots,\lambda_n)$ 
was obtained in~\cite{DYZ} (see~\cite{BR} for a different proof)  
\begin{align}
 & F_1(\lambda) \=\sum_{d\ge0} \, \frac{(2d+1)!!^3}{8^{d+1} (d+1)! (2d+1)} \, \frac{1}{\lambda^{d+1}}\,, \label{f1}\\
 & F_n(\lambda_1,\dots,\lambda_n) 
  \= -\frac1n\sum_{\sigma\in S_n}\frac{\mathrm{tr}(M(\lambda_{\sigma(1)})\cdots M(\lambda_{\sigma(n)}))}
   {\prod_{i=1}^{n}(\lambda_{\sigma(i+1)}-\lambda_{\sigma(i)})}-\delta_{n,2}\frac{\lambda_1+\lambda_2}{(\lambda_1-\lambda_2)^2}\,, \quad n\ge2\,,\label{fn}
\end{align}
where 
\beq\label{MatRes} 
 M(\l)\:\sum_{k\ge -1} \biggl(\frac{(2k-1)!!}{2^k}\biggr)^3\,
\begin{pmatrix} k\,(k+1) & k+1 \\
  -\frac{8k^3+12k^2+4k+1}{8}  & -k\,(k+1) \end{pmatrix}\,\frac{\l^{-k}}{(k+1)!}\,, 
\eeq
with the usual conventions $(-1)!!=1$ and $(-3)!!=-1$.

There is a useful variant of formulas like~\eqref{fn} given by 
Dubrovin and the third author of the present paper (see~\cite[Proposition 3.2.3]{DYGUE}).
A special case of the variant 
gives the numbers $\langle\tau_{d_1}\dots\tau_{d_n}\rangle_g^{\Theta}$ efficiently when 
all but at most two of the $d$'s are equal. Indeed, define a sequence of traceless $2\times2$ matrix-valued 
functions $M_{m,d}(\l)$ ($m\ge0$) by setting $M_{0,d}(\l)=M(\l)$ as above (independent of~$d$) and then inductively defining
\beq M_{m,d}(\l)  \=  \frac1m\,\sum_{i+j=m-1} \bigl[\bigl(\l^dM_{i,d}(\l)\bigr)^-,\,M_{j,d}(\l)\bigr]  \label{defMmd}\eeq
for $m\ge1$, where $A(\l)^-$ denotes the sum of the terms with strictly negative exponents of a Laurent 
series~$A(\l)$ in~$1/\l$. Then we have the following generating function formula:
\beq \label{ijbk}
\frac{(\l_1-\l_2)^2}{m!}\, \sum_{a,\thin b\geq 0} \frac{B(a,b,d^m)}{\l_1^{a+1}\l_2^{b+1}}  
  \= \sum_{k=0}^m\tr\bigl( M_{k,d}(\l_1)\,M_{m-k,d}(\l_2)\bigr)\;-\,\delta_{m,0}\;. 
\eeq
This formula allows us to calculate all of the numbers $C(a,b,d^{n-2})$, and in particular
the numbers $C(d^n)$, quite efficiently even when the genus is large, in which case the
recursive formula~\eqref{DNrec} would be useless because it requires one to have computed and stored the BGW numbers
for all smaller genera.  Using it, we computed the rational numbers $C(d^n)$ for $1\le d\le 100$ and~$1\le n\le 10$.
This computation took about 20 hours on a relatively fast
desktop computer, which sounds like a lot until one realizes that, for example, the numerator and denominator of
the rational number $C(100^{10})$ each has 3020 digits.

\section{Exact formulas for $n$-point BGW numbers} \label{secpartitionfixedlength}
For $n\ge2$, expanding the right-hand side of~\eqref{fn} in the region $|\lambda_1|>\dots>|\lambda_n|\gg0$, one gets a formula for $n$-point BGW numbers which is similar to a  formula for Witten's intersection numbers given in~\cite{GY}. For $\sigma\in S_n$, introduce the notation 
\begin{align*}
S_{\sigma}^+ \=\left\{1\leq r\leq n \,\big|\, \sigma(r+1)>\sigma(r)\right\},\quad 	S_{\sigma}^- \=\left\{1\leq r\leq n \,\big|\, \sigma(r+1)<\sigma(r)\right\}.
\end{align*}
Here $\sigma$ is considered to be cyclic, so that we have the convention $\sigma(n+1)=\sigma(1)$.
For $k_1,\dots,k_n\ge-1$, introduce the notation 
\begin{align} \label{defak1kn}
a_{k_1,\dots,k_n} \: \tr(A_{k_1}\cdots A_{k_n})\,, 
\end{align}
where $A_k := f(k) R(k)$, $k\ge-1$, with 
\beq\label{deffkRk}
f(k) \: \frac{(2k-1)!!^3}{2^{3k}(k+1)!}\,,\quad R(k) \: \begin{pmatrix} k(k+1) &k+1\\ -\frac{8k^3+12k^2+4k+1}{8} &-k(k+1)\end{pmatrix}\;, %\label{deffkRk} 
\eeq 
and we make the convention that $a_{k_1,\dots,k_n}=0$ if any of the $k_i$ is less than or equal to~$-2$.
Note that $A_k$ is just the coefficient of $\lambda^{-k}$ in the Laurent series $M(\lambda)$ defined in~\eqref{MatRes}.

\begin{prop}\label{prop1}
For $n\ge 2$ and ${\bf d}=(d_1,\dots,d_n)\in (\ZZ_{\ge 0})^n$, we have
\begin{align}\label{formulaBGWeq1}
B({\bf d}) \= \sum_{\substack{\sigma\in S_{n} \\ \sigma(n)=n}} (-1)^{|S_{\sigma}^+|+1}
\sum_{\substack{\underline{J}\in (\mathbb{Z}+\frac12)^n \\ \left\{1\leq q\leq n\mid J_q>0\right\}=S_{\sigma}^{+}}} 
a_{d_{\sigma(1)}+J_1-J_n,\dots,d_{\sigma(n)}+J_n-J_{n-1}} \,.
\end{align}
\end{prop}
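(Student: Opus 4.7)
The plan is to extract the coefficient of $\lambda_1^{-d_1-1}\cdots\lambda_n^{-d_n-1}$ directly from the closed-form expression~\eqref{fn} for $F_n$, by expanding each rational factor as a Laurent series in a suitable polyannulus at infinity. This parallels the strategy used in~\cite{GY} for Witten's intersection numbers.

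First I would exploit cyclic symmetry. Both the numerator $\mathrm{tr}(M(\lambda_{\sigma(1)})\cdots M(\lambda_{\sigma(n)}))$ and the denominator $\prod_{i=1}^n(\lambda_{\sigma(i+1)}-\lambda_{\sigma(i)})$ in \eqref{fn} are invariant under cyclic relabeling of $\sigma$. Because no bijection of $\{1,\dots,n\}$ has a nontrivial cyclic symmetry as a sequence, every cyclic class in $S_n$ has exactly $n$ elements. Selecting the unique representative with $\sigma(n)=n$ from each class contracts $\frac{1}{n}\sum_{\sigma\in S_n}$ to $\sum_{\sigma:\sigma(n)=n}$ and absorbs the factor $1/n$, while the overall minus sign in~\eqref{fn} becomes the ``$+1$'' in the exponent of~$(-1)^{|S_\sigma^+|+1}$.

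Next I would expand each factor $(\lambda_{\sigma(r+1)}-\lambda_{\sigma(r)})^{-1}$ as a geometric series in the chosen region $|\lambda_1|>\cdots>|\lambda_n|$. If $r\in S_\sigma^+$ then $|\lambda_{\sigma(r+1)}|<|\lambda_{\sigma(r)}|$ and the expansion yields $-\sum_{m\ge 0}\lambda_{\sigma(r+1)}^{\,m}\lambda_{\sigma(r)}^{-m-1}$; if $r\in S_\sigma^-$ the opposite inequality gives $+\sum_{m\ge 0}\lambda_{\sigma(r)}^{\,m}\lambda_{\sigma(r+1)}^{-m-1}$. Introducing a half-integer parameter $J_r\in\mathbb{Z}+\tfrac12$ for each factor, with $\mathrm{sign}(J_r)$ equal to $+$ precisely when $r\in S_\sigma^+$, unifies the two cases: in both, $\lambda_{\sigma(r)}$ and $\lambda_{\sigma(r+1)}$ pick up exponents of the form $\pm J_r\pm\tfrac12$. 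The product over $r$ of the individual signs is $(-1)^{|S_\sigma^+|}$, giving the promised total sign.

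Substituting the Laurent expansion $M(\lambda)=\sum_{k\ge-1}A_k\lambda^{-k}$ into the trace, one obtains $\sum_{k_1,\dots,k_n}a_{k_1,\dots,k_n}\prod_q\lambda_{\sigma(q)}^{-k_q}$. Summing the exponent contributions of $\lambda_{\sigma(q)}$ coming from the two factors it appears in (indices $r=q-1$ and $r=q$, read cyclically) together with the trace factor $\lambda_{\sigma(q)}^{-k_q}$, and imposing that the total exponent equals $-d_{\sigma(q)}-1$, forces the cyclic relation $k_q=d_{\sigma(q)}+J_q-J_{q-1}$ (with $J_0\equiv J_n$), matching the displayed indices of $a$ in~\eqref{formulaBGWeq1} after using the cyclic symmetry $a_{k_1,\dots,k_n}=a_{k_2,\dots,k_n,k_1}$ if needed to reconcile conventions. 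The requirement that the geometric series index be nonnegative in each factor becomes the sign condition $\{q:J_q>0\}=S_\sigma^+$. Finally, for $n=2$, the additional term $-(\lambda_1+\lambda_2)/(\lambda_1-\lambda_2)^2$ in~\eqref{fn}, expanded in $|\lambda_1|>|\lambda_2|$, involves only nonnegative powers of $\lambda_2$ and so makes no contribution to the coefficient of $\lambda_1^{-d_1-1}\lambda_2^{-d_2-1}$ when $d_2\ge 0$.

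The conceptual content is short, but the hard part is the bookkeeping in step three: choosing consistent sign conventions for the half-integer parameters, identifying which direction each geometric series runs, and matching cyclic indices on both sides so that the final answer is written in exactly the form of~\eqref{formulaBGWeq1}. The number of $\pm$ signs and shifts by $\pm\tfrac12$ makes it the most error-prone but not conceptually difficult step.
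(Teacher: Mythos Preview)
Your proposal is correct and follows essentially the same approach as the paper: expand~\eqref{fn} in the region $|\lambda_1|>\cdots>|\lambda_n|$, use cyclic symmetry of the trace to reduce to $\sigma(n)=n$, expand each factor $(\lambda_{\sigma(r+1)}-\lambda_{\sigma(r)})^{-1}$ as a geometric series whose direction depends on membership in $S_\sigma^\pm$, and match coefficients. The paper carries this out via an auxiliary integer-indexed function $J_{\sigma,q}(j)$ and then performs the change of variable $J_q=\tfrac12+J_{\sigma,q}(j_q)$ to reach the half-integer parametrization, whereas you introduce the half-integers directly; the two are equivalent. Your remark on the $\delta_{n,2}$ term (nonnegative powers of $\lambda_2$) is correct and supplies a detail the paper leaves implicit; the cyclic shift of $a_{k_1,\dots,k_n}$ you mention as a possible reconciliation is in fact unnecessary, since your relation $k_q=d_{\sigma(q)}+J_q-J_{q-1}$ with $J_0\equiv J_n$ already matches~\eqref{formulaBGWeq1} on the nose.
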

\begin{proof} 
In the region $|\lambda_1|>\dots>|\lambda_n|\gg0$, we have the Laurent expansion
\begin{align}\label{Pexpansion}
\prod_{q=1}^n \frac1{\lambda_{\sigma(q+1)}-\lambda_{\sigma(q)}} \= (-1)^{|S_{\sigma}^+|} \sum_{j_1,\dots,j_n\geq0} \prod_{q=1}^n \lambda_{\sigma(q)}^{J_{\sigma,q}(j_q)-J_{\sigma,q-1}(j_{q-1})-1},
\end{align}
where for $q=1,\dots,n$,
\begin{align}\label{defJsigmaq}
J_{\sigma,q}(j) := 
\left\{\begin{array}{ll}
-j-1,  &\sigma(q)<\sigma(q+1), \\
j,   &\sigma(q)>\sigma(q+1).
\end{array}\right.
\end{align}
Expanding both sides of~\eqref{fn} and using~\eqref{Pexpansion}, we get
\begin{align}\label{formulaBGWeq3}
B(\dd) \= \sum_{\substack{\sigma\in S_{n} \\ \sigma(n)=n}} (-1)^{|S_{\sigma}^+|+1} \sum_{\underline{j}\in(\mathbb{Z}_{\ge0})^n} 
a_{d_{\sigma(1)}+J_{\sigma,1}(j_1)-J_{\sigma,n}(j_n),\dots,d_{\sigma(n)}+J_{\sigma,n}(j_n)-J_{\sigma,n-1}(j_{n-1})}\,.
\end{align}
For each $\sigma\in S_n$ with $\sigma(n)=n$, by changing the variable $J_{q}=\frac{1}{2}+J_{\sigma,q}(j_q)$, $q=1,\dots,n$, we obtain formula~\eqref{formulaBGWeq1}.
\end{proof}

Proposition~\ref{prop1} could be rewritten in a more elegant way as follows:
\begin{prop}\label{prop2}
For $n\ge 2$ and ${\bf d}=(d_1,\dots,d_n)\in (\ZZ_{\ge 0})^n$, we have
\begin{align}\label{formulaBGWeq2}
B({\bf d}) \= \sum_{\substack{\sigma\in S_{n} \\ \sigma(n)=n}} (-1)^{|S_{\sigma}^{+}|+1} \sum_{\substack{k_1,\dots,k_n\ge-1\\ k_1+\cdots+k_n=d_1+\cdots+d_n}} 
a_{k_1,\dots,k_n}\,\omega_{\dd,\sigma,\mathbf{k}}\,,
\end{align}
where the numbers $\omega_{\mathbf{d},\sigma,\mathbf{k}}$ have the following explicit expression
\begin{align}\label{omegadsigmak}
\omega_{\mathbf{d},\sigma,\mathbf{k}}\=\max\biggl\{0,\,\min_{r\in S_{\sigma}^+}\bigg\{\sum_{q=1}^r(d_{\sigma(q)}-k_q)\bigg\} 
+\min_{r\in S_{\sigma}^-}
\bigg\{\sum_{q=1}^r(k_q-d_{\sigma(q)})\bigg\} \biggl\}.
\end{align}
\end{prop}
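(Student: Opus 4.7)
The plan is to repackage the sum in Proposition~\ref{prop1} by grouping valid half-integer tuples $\underline{J}$ according to the induced tuple $\mathbf{k}=(k_1,\dots,k_n)$ defined by
\[
k_q \;=\; d_{\sigma(q)} + J_q - J_{q-1} \qquad (q=1,\dots,n),
\]
with the cyclic convention $J_0:=J_n$. Summing these relations around the cycle telescopes to $\sum_q k_q = \sum_q d_{\sigma(q)} = |\dd|$, which is exactly the balance condition appearing under the inner sum in~\eqref{formulaBGWeq2}; combined with the vanishing convention on $a_{k_1,\dots,k_n}$ whenever some $k_i\le -2$, this justifies restricting to $k_q\ge -1$. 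The problem is thus reduced, for each fixed $\sigma\in S_n$ with $\sigma(n)=n$ and each admissible $\mathbf{k}$, to counting the preimages of $\mathbf{k}$ under the map $\underline{J}\mapsto\mathbf{k}$.

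Because the defining recurrence $J_q-J_{q-1}=k_q-d_{\sigma(q)}$ determines $\underline{J}$ up to a single global shift, I would parametrize
\[
J_q \;=\; s \,+\, T_q, \qquad T_q \,:=\, \sum_{p=1}^q (k_p - d_{\sigma(p)}),
\]
with $s:=J_n$ the free half-integer parameter (the balance condition guarantees $T_n=0$, which is consistent with the cyclic identification $J_0=J_n$). The sign prescription from Proposition~\ref{prop1}, together with $J_q\in\ZZ+\tfrac12$, then becomes a list of affine inequalities on the single variable $s$: each index $q$ in one of the sets $S_\sigma^{\pm}$ contributes a lower bound $s\ge \tfrac12 - T_q$, while each index in the opposite set contributes an upper bound $s\le -\tfrac12 - T_q$. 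Taking the tightest such bounds gives an interval $[L,U]$, and the number of admissible half-integers is $\max\{0,\,U-L+1\}$; an elementary simplification of $U-L+1$ using the cyclic identity $T_n=0$ collapses the two extrema into $\min_{r\in S_\sigma^{+}}\sum_{q=1}^r(d_{\sigma(q)}-k_q)+\min_{r\in S_\sigma^{-}}\sum_{q=1}^r(k_q-d_{\sigma(q)})$, matching the expression for $\omega_{\dd,\sigma,\mathbf{k}}$ in~\eqref{omegadsigmak}.

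Substituting the count $\omega_{\dd,\sigma,\mathbf{k}}$ back into~\eqref{formulaBGWeq1} yields~\eqref{formulaBGWeq2}. The main obstacle is bookkeeping: one has to keep the sign conventions consistent throughout (which of $S_\sigma^+$ or $S_\sigma^-$ yields upper versus lower bounds, and how the minima combine when written in the ``natural'' form involving $d_{\sigma(q)}-k_q$), and one has to verify that the interval is naturally cyclic, so that the apparently special role of the index $n$ (where $T_n=0$) does not distort the formula. Beyond this careful accounting, no new ingredient is needed beyond the change of summation variables and the counting of half-integer points in an interval.
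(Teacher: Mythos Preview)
Your proposal is correct and follows essentially the same approach as the paper: the paper works with the nonnegative integer variables $j_q$ from the proof of Proposition~\ref{prop1} (before the shift to half-integers), makes the same change of variable to $k_q$, solves the resulting linear system for $j_1,\dots,j_{n-1}$ in terms of the single free parameter~$j_n$, and counts the integers $j_n\ge0$ for which all $j_r\ge0$. This is exactly your interval count, transported through the affine substitution $J_q=\tfrac12+J_{\sigma,q}(j_q)$, and the final simplification of $U-L+1$ is identical.
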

\begin{proof} 
For each $\sigma\in S_n$ with $\sigma(n)=n$ we change the variable  $k_q=d_{\sigma(q)}+J_{\sigma,q}(j_q)-J_{\sigma,q-1}(j_{q-1})$, $q=1,\dots,n$, in formula~\eqref{formulaBGWeq3} and we get
\begin{align}\label{formulaBGWeq4}
B(\dd) \= \sum_{\substack{\sigma\in S_{n} \\ \sigma(n)=n}} (-1)^{|S_{\sigma}^{+}|+1} \sum_{\substack{k_1,\dots,k_n\ge-1\\ k_1+\cdots+k_n=d_1+\cdots+d_n}} 
a_{k_1,\dots,k_n}\,\omega_{\dd,\sigma,\mathbf{k}}\,,
\end{align}
where the numbers
$\omega_{\dd,\sigma,\mathbf{k}}$ are the number of solutions $\underline{j}\in\left(\ZZ^{\ge0}\right)^n$ for the linear equations:
\begin{align}\label{K=k}
d_{\sigma(q)}+J_{\sigma,q}(j_q)-J_{\sigma,q-1}(j_{q-1})\=k_q, \quad q=1,\dots,n\,.
\end{align}
Now it suffices to prove that these numbers $\omega_{\dd,\sigma,\mathbf{k}}$ have the expression~\eqref{omegadsigmak}.
Indeed, by using~\eqref{defJsigmaq}, equations~\eqref{K=k} can be solved in terms of $j_n$ by
\begin{align}\label{jr}
j_r\=\left\{\begin{array}{ll}
-j_n-1+\sum_{q=1}^r(d_{\sigma(q)}-k_q),\quad &r\in S_{\sigma}^+, \\
j_n+\sum_{q=1}^{r}(k_q-d_{\sigma(q)}),\quad &r\in S_{\sigma}^-.
\end{array}\right.
\end{align}
Here we use $\sigma(n)=n$ to obtain $J_{\sigma,n}(j_n)=j_n$.
Therefore, $\omega_{\dd,\sigma,\mathbf{k}}$ is equal to the number of $j_n\in\mathbb{Z}_{\ge0}$ such that $j_r\ge0$ in~\eqref{jr} for all $r=1,\dots,n-1$, and hence is equal to the right-hand side of~\eqref{omegadsigmak}. This finishes the proof.
\end{proof}
We note that when $n\ge3$ and some of $d_j$ are less than zero, then formula~\eqref{formulaBGWeq1} or formula~\eqref{formulaBGWeq2} still holds true (where both sides are~0), since both sides are the coefficients of $\bl_1^{-d_1-1}\cdots\bl_n^{-d_n-1}$ in the power series $F_n(\lambda_1,\dots,\lambda_n)$ defined in~\eqref{defFn}.

Let us give some examples for Proposition~\ref{prop2}. First we introduce the following notation: for $n\ge1$ and $\mathbf{e}\in\mathbb{Z}^n$, write
\begin{align}
	M(\mathbf{e}) \=\max\bigl\{0,\min_{1\leq i\leq n}\{e_i\}\bigr\}
\end{align}
which can be written in terms of a generating function by
\begin{align}\label{generatingM}
\sum_{e_1,\dots,e_n\ge0}x_1^{e_1-1}\cdots x_n^{e_n-1}\,M(e_1,\dots,e_n)\=\frac{1}{(1-x_1\cdots x_n)\prod_{i=1}^n(1-x_i)}\,.
\end{align}
For $n=2$, Proposition~\ref{prop2} reads that
\begin{align}\label{twopoint1}
B(d_1,d_2) \= \sum_{k_1+k_2=d_1+d_2} M(d_1-k_1) \, a_{k_1,k_2} \,,
\end{align}
where $a_{k_1,k_2}$ can be explicitly given as follows:
\begin{align}
&a_{k_1,k_2}\=-f(k_1)f(k_2)\bigl(\bigl((k_1-k_2)^2+\tfrac{k_1+k_2}{2}\bigr)(k_1+1)(k_2+1)-\tfrac{k_1+k_2+2}8\bigr)\,,\nn
\end{align}
with $f(k)$ defined in~\eqref{deffkRk}. 
Actually, we also have a simpler formula for $B(d_1,d_2)$:
\begin{align}\label{TwoPoints}
B(d_1,d_2)\=\frac{1}{g}\; \sum_{h=0}^{d_1} \, (g-2h)\, F_{h}\, F_{g-h}\,, \qquad \text{where\ } F_h\:\frac{(2h-1)!!^3}{2^{3h}\,h!}\,.
\end{align}
The equivalence of \eqref{twopoint1} and~\eqref{TwoPoints} can be proved as in~\cite{Guo}.  Since $\sum_{h=0}^g(g-2h)\,F_h\,F_{g-h}$
vanishes by antisymmetry, we see the RHS of~\eqref{TwoPoints} is indeed symmetric in~$d_1$ and~$d_2$.
We also note that formula~\eqref{TwoPoints} also holds for $d_1=-1$, $d_2\ge1$ and $d_2=-1$, $d_1\ge1$ (where both sides are 0).  
With the normalization $C(\mathbf{d})$ the above formula becomes
\beq  C(d_1,d_2) \= \frac{2^{2g}}{(2g)!} \, \sum_{h=0}^{d_1}(g-2h)\,F_h\,F_{g-h}\,.  \qquad 
%\text{where\ } F_h:=\frac{(2h-1)!!^3}{2^{3h}\,h!}.  
\label{TwoPointsC} 
\eeq
Formula~\eqref{TwoPoints} (or say~\eqref{TwoPointsC}) is independently found in a recent paper~\cite{HLX}.
For $n=3$, Proposition~\ref{prop2} reads
\begin{align}\label{threepoint}
B(d_1,d_2,d_3)\=-2\sum_{k_1+k_2+k_3=d_1+d_2+d_3 } a_{k_1,k_2,k_3} \, M(d_1-k_1,d_1+d_2-k_1-k_2)\,,
\end{align}
where $a_{k_1,k_2,k_3}$ can be given more explicitly by
\beq  a_{k_1,k_2,k_3} \= f(k_1)\, f(k_2)\, f(k_3) (k_1-k_2)(k_2-k_3)(k_3-k_1)\bigl((k_1+1)(k_2+1)(k_3+1)+\tfrac18\bigr)\,. \label{trRk1Rk2Rk3}\eeq 
For $n=4$,  Proposition~\ref{prop2} reads that
\begin{align}
&B(d_1,d_2,d_3,d_4)\=2\sum_{k_1+k_2+k_3+k_4=d_1+d_2+d_3+d_4}a_{k_1,k_2,k_3,k_4} \nn\\
&\quad \times \Bigl(M\bigl(d_1-k_1,d_1+d_2-k_1-k_2,k_4-d_4\bigr) \nn\\
&\qquad -M\bigl(d_1-k_2,d_1+d_2-k_2-k_3,d_1+d_3-k_1-k_2,k_4-d_4\bigr) \nn\\
&\qquad -M\bigl(d_1-k_1,d_2-k_3,k_2-d_3,k_4-d_4\bigr)\Bigr)\,, \label{fourpoint}
\end{align}
where we have used the fact that $a_{k_1,k_2,k_3,k_4}=a_{k_2,k_3,k_4,k_1}=a_{k_1,k_4,k_3,k_2}$.

\section{Rational functions, asymptotics and integrality} \label{secallbutonepartfixed}  
In this section we consider the 
numbers $C(\bl,g-1-|\bl|)$, where $\bl$ is a given fixed partition and we allow $g$ to vary.  

When~$|\bl|=0$, the formula for $C(g-1)$ is known; see~\eqref{smallest}. Based on 
the DVV-type relation~\eqref{DNrec}, 
we can 
 find that for $ |\bl| =1,2,3$ the quotient of $C(\bl,g-1-|\bl|)$ by $C(g-1)$ is a rational function of~$g$. Explicitly, %, the values for $ |\bl| =1,2,3$ being 
\begin{align*}
\frac{C(1,g-2)}{C(g-1)}&\= \frac{g-1}{(2g-1)^3}\,Q_1(g)\,, \\
\frac{C(\bl,g-3)}{C(g-1)}&\= \frac{(g-1)(g-2)}{(2g-1)^3(2g-3)^3}\,Q_{\bl}(g)\,,\quad |\bl|=2\,,\\
\frac{C(\bl,g-4)}{C(g-1)}&\= \frac{(g-1)(g-2)(g-3)}{(2g+1)(2g-1)^3(2g-3)^3(2g-5)^3}\,Q_{\bl}(g) \,, \quad |\bl|=3\,,
\end{align*}
where   % ~$P_{\dd}(g)$
\begin{align*}   
Q_1(g) & \= 8g^2-4g+3 \,,\\
Q_{1,1}(g) &\,=\, 64g^4-192g^3+224g^2-144g+117\,, \nn\\
Q_2(g) &\,=\,64g^4-192g^3+216g^2-108g+\tfrac{135}2\,,   \nn\\
 Q_{1,1,1}(g) &\,=\, 1024g^7-7168g^6+19072g^5-24256g^4 +18832g^3-15520g^2+11418g+14823\,, \nn\\
Q_{1,2}(g) &\,=\, 1024g^7 - 7168g^6 + 18944g^5 - 23040g^4 + 14056g^3 - 6272g^2 + 5411g + \tfrac{16365}2\,, \nn\\
Q_3(g) &\,=\, 1024g^7 - 7168g^6 + 18816g^5 - 21952g^4 + 10360g^3 + 1125g + \tfrac{7875}2 \,. \nn
 \end{align*}

The general situation is described in the following two propositions.
\begin{prop}\label{proprationalabc}
For $n\ge1$ and for a given fixed partition $\bl=(\lambda_1,\dots,\lambda_{n-1})$, we have
\begin{align}\label{eqrational1}
\frac{C(\bl,g-1-|\bl|)}{C(g-1)} \= \frac{\widetilde{Q}_{\bl}(g)}{(2g-2|\bl|-2)_{2|\bl|+n} \, \prod_{i=1}^{|\bl|}(2g-2i+1)^2} \, ,
\end{align}
where $\widetilde{Q}_{\bl}(g)\in \mathbb{Z}[1/2][g]$ is a polynomial, and $(a)_b:=a(a+1)\cdots (a+b-1)$ denotes the ordinary Pochhammer symbol.
\end{prop}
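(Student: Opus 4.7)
The plan is to apply the closed-form expansion of Proposition~\ref{prop2} to $\dd=(\bl,d_n)$ with $d_n=g-1-|\bl|$, then combine with the explicit evaluation of $B(g-1)$ from~\eqref{f1} to exhibit the ratio as a rational function of~$g$ with the claimed denominator. First, unwinding the definition~\eqref{defC} gives, for $n\ge 2$,
$$\frac{C(\bl,g-1-|\bl|)}{C(g-1)} \,=\, \frac{B(\bl,d_n)}{B(g-1)\,(2g-1)_{n-1}},$$
while the case $n=1$ is trivial (both sides equal $1$), so it suffices to study the right-hand side as a function of~$g$.

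The outer sum in Proposition~\ref{prop2} is over $\sigma\in S_n$ with $\sigma(n)=n$; since $\sigma(n)=n$ is maximal, cyclically one has $n\in S_\sigma^-$ and $n-1\in S_\sigma^+$. Evaluating the two inner sums in~\eqref{omegadsigmak} at these distinguished indices gives $\sum_{q=1}^{n-1}(d_{\sigma(q)}-k_q)=k_n-d_n$ and $\sum_{q=1}^n(k_q-d_{\sigma(q)})=0$, so $\omega_{\dd,\sigma,\mathbf k}>0$ forces $k_n=d_n+j$ for some $j\in\{1,\dots,|\bl|+n-1\}$, with $(k_1,\dots,k_{n-1})$ ranging over a finite set determined by $\bl$, $\sigma$, and $j$. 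Grouping the finite sum by $j$ yields
$$B(\bl,d_n) \,=\, \sum_{j=1}^{|\bl|+n-1} f(d_n+j)\,P_j(g),\qquad P_j\in\Z[\tfrac12][g],$$
because each $f(k_q)$ with $q<n$ is a $\bl$-dependent constant in $\Z[\tfrac12]$, $\tr(R(k_1)\cdots R(k_n))$ is polynomial in $k_n$ of bounded degree, and $\omega$ is a bounded integer.

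Using~\eqref{f1} and~\eqref{tauU} one finds $B(g-1)=(2g-1)!!^3/(2^{3g}\,g!\,(2g-1))$, and combined with $f(d_n+j)=(2d_n+2j-1)!!^3/(2^{3(d_n+j)}(d_n+j+1)!)$, the ratio $f(d_n+j)/B(g-1)$ is an explicit rational function of~$g$: its denominator divides $((2g-2|\bl|+2j-1)(2g-2|\bl|+2j+1)\cdots(2g-3))^3$ when $1\le j\le|\bl|$, and divides $(g+1)(g+2)\cdots(g+j-|\bl|)$ when $j>|\bl|$. Multiplying by $1/(2g-1)_{n-1}$ adjoins the factors $(2g-1)(2g)\cdots(2g+n-3)$. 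Summing the finitely many terms establishes that $C(\bl,g-1-|\bl|)/C(g-1)$ is rational in~$g$, with numerator in $\Z[\tfrac12][g]$. The odd-integer factors $(2g-2i+1)^{\le3}$ for $1\le i\le|\bl|$ are absorbed by $D(g):=(2d_n)_{2|\bl|+n}\prod_{i=1}^{|\bl|}(2g-2i+1)^2$ (two copies from the square, one from the Pochhammer), and the remaining factors $2g-2|\bl|-1,\,2g-2|\bl|-2,\,2g,\,2g+1,\dots,2g+n-3$ all lie in the Pochhammer window.

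The main obstacle is verifying that the apparent linear factors $(g+1),\dots,(g+j-|\bl|)$ arising from $f(d_n+j)/f(g-1)$ for $j>|\bl|$ are in fact absorbed by~$D(g)$, since $(g+s)$ does not literally appear in $D(g)$ (the Pochhammer contains $2g+s$, and $g+s$ divides the even Pochhammer factor $2g+2s$ only when $2s\le n-3$). The resolution uses the rank-one structure $R(-1)=-\tfrac18 e_2 e_1^\top$: whenever $j>|\bl|$ each contributing tuple $(k_1,\dots,k_{n-1})$ has $m:=j-|\bl|$ entries equal to $-1$, and these must be cyclically separated (otherwise $R(-1)^2=0$ makes the trace vanish). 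The trace $\tr(R(k_1)\cdots R(k_n))$ then reduces to $(-\tfrac18)^m$ times a product of $(1,2)$-entries of the cyclic gap matrices; the gap containing $R(k_n)$ contributes a factor $(k_n+1)=g+m$ because $R(k_n)_{11},R(k_n)_{12},R(k_n)_{22}$ share the common factor $(k_n+1)$, and iterating this observation across the possible positions of $R(k_n)$ and summing the allowed configurations of $(-1)$-entries produces the full compensating polynomial $(g+1)(g+2)\cdots(g+m)$ that cancels the spurious denominator. Carrying out this cancellation cleanly for every configuration, organized by $m$ and by the cyclic position of $R(k_n)$, is the principal technical content of the proof.
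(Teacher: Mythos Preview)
Your overall strategy via Proposition~\ref{prop2} is sound through the point where you establish $B(\bl,d_n)=\sum_{j=1}^{|\bl|+n-1} f(d_n+j)\,P_j(g)$ with $P_j\in\Z[\tfrac12][g]$; the bounds on $k_n$ and the $g$-independence of $\omega$ are correct. The gap is in the last paragraph, where you claim the spurious denominator $(g+1)\cdots(g+m)$ is cancelled by the trace structure. Your mechanism only produces the single factor $(k_n+1)=g+m$, and only when an $R(-1)$ is cyclically adjacent to $R(k_n)$. For instance with $n=4$, $m=2$, the sole nonvanishing configuration is $(k_1,k_2,k_3)=(-1,0,-1)$, and one computes $\tr\bigl(R(-1)R(0)R(-1)R(k_4)\bigr)=(k_4+1)/64$, yielding only the factor $(g+2)$ and not $(g+1)(g+2)$. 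Your sentence ``iterating this observation \dots\ produces the full compensating polynomial'' is an assertion, not an argument: the missing factors $(g+1),\dots,(g+m-1)$ cannot come from the trace for fixed $j$, so any cancellation must occur across different $j$'s and across the $\sigma$-sum weighted by the $\omega$'s, and you give no mechanism for that.

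The paper's own proof of this proposition avoids the matrix-resolvent formula entirely: it proceeds by induction on~$|\bl|$ using the DVV-type recursion~\eqref{DNrec}, rewriting it as a recursion for $\widetilde\Phi_\bl(g):=C(\bl,g-1-|\bl|)/C(g-1)$ and checking directly that each term, multiplied by the claimed denominator, lies in $\Z[\tfrac12][g]$. This is considerably shorter. Interestingly, your matrix-resolvent route \emph{is} essentially what the paper uses for the companion statement (Proposition~\ref{proprational}), but there the cancellation is handled by a different and much cleaner device: one observes that the full expression (as a polynomial in~$g$) vanishes at $g=|\bl|,|\bl|-1,\dots,-\tfrac{n-1}{2}$ because at those points $d_n=g-1-|\bl|$ is a negative integer and $B(\dd)=0$ for $n\ge3$. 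That global vanishing argument supplies all the needed zeros at once, without any configuration-by-configuration analysis of the trace.
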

\begin{proof}
Write the left-hand side of~\eqref{eqrational1} as~$\widetilde{\Phi}_{\bl}(g)$. 
In terms of~$\widetilde{\Phi}_{\bl}(g)$, 
recursion~\eqref{DNrec} reads
\begin{align}
&\widetilde{\Phi}_{d,\bl}(g)\=\sum_{j=1}^{n-1}\frac{2\lambda_j+1}{2g+n-2}\,
\widetilde{\Phi}_{\lambda_1,\dots,\lambda_j+d,\dots,\lambda_{n-1}}(g)+\Big(1-\frac{2d+2|\bl|+n-1}{2g+n-2}\Big)\,\widetilde{\Phi}_{\bl}(g) \nn\\
&  +\sum_{\substack{a+b=d-1\\ a,b\ge0}}\Bigg[\frac{8g(g-1)}{(2g+n-2)(2g-1)^2}\,\widetilde{\Phi}_{a,b,\,\bl}(g-1)\nn\\
& +\sum_{I\sqcup J=\{1,\dots,n-1\}}2\,\big(X(a,\bl_{I})-1\big)!\,C(a+|\bl_{I}|)\,
\widetilde{\Phi}_{a,\bl_I}\Big(a+1+|\bl_I|\Big)\, \widetilde{\Phi}_{b,\bl_J}\Big(g-1-a-|\bl_I|\Big)
\nn\\
&\qquad \times\, 
\frac{\bigl(g-a-|\bl_I|\bigr)_{a+1+|\bl_I|}\,
		\bigl(g-1-a-|\bl_I|\bigr)_{a+1+|\bl_{I}|}}{\bigl(2g+n-2-X(a,\bl_I)\bigr)_{X(a,\bl_I)+1}\, \big((g-\frac{1}2-a-|\bl_I|)_{a+1+|\bl_{I}|}\big)^2}\Bigg]. \label{Phirec}
\end{align}
Note that $\widetilde{\Phi}_{\emptyset}(g)=1$. Then Proposition~\ref{proprationalabc} is proved by induction in~$|\bl|$. Indeed, using the induction assumption, one can show that each term in the right-hand side of~\eqref{Phirec} multiplying the factor $(2g-2|\bl|-2d-2)_{2|\bl|+2d+n+1} \, \prod_{i=1}^{|\bl|+d}(2g-2i+1)^2$ is a polynomial in $\mathbb{Z}[1/2][g]$ . This finishes the proof.
\end{proof}

\begin{prop}\label{proprational}
For $n\ge1$ and for a given fixed partition $\bl=(\lambda_1,\dots,\lambda_{n-1})$, we have
\begin{align}\label{eqrational2}
\frac{C(\bl,g-1-|\bl|)}{C(g-1)} \= \frac{Q_{\bl}(g)}{\prod_{j=1}^{n-3}(2g+j)} \, \prod_{i=1}^{|\bl|}\frac{2g-2i}{(2g-2i+1)^3}\,,
\end{align}
where $Q_{\bl}(g)\in \mathbb{Z}[1/2][g]$ are polynomials. 
\end{prop}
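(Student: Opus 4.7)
The plan is to derive Proposition~\ref{proprational} as a refinement of Proposition~\ref{proprationalabc}, by exhibiting enough algebraic zeros of the polynomial $\widetilde Q_\bl(g)$ to cancel the spurious linear factors in the old denominator. A direct manipulation of the Pochhammer symbol yields the identity
\begin{equation*}
(2g-2|\bl|-2)_{2|\bl|+n}\,\prod_{i=1}^{|\bl|}(2g-2i+1)^2 \,=\, \Delta_\bl(g)\,\frac{\prod_{j=1}^{n-3}(2g+j)\,\prod_{i=1}^{|\bl|}(2g-2i+1)^3}{\prod_{i=1}^{|\bl|}(2g-2i)}\,,
\end{equation*}
where $\Delta_\bl(g)=(2g-2|\bl|-2)(2g-2|\bl|-1)\prod_{i=1}^{|\bl|}(2g-2i)^2$ if $n=2$, multiplied by an additional factor $2g$ if $n\ge 3$. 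Proposition~\ref{proprational} is therefore equivalent to the divisibility $\Delta_\bl(g)\mid\widetilde Q_\bl(g)$ in $\mathbb{Z}[1/2][g]$.

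Set $\widetilde\Phi_\bl(g):=C(\bl,g-1-|\bl|)/C(g-1)$. The factor $(2g-2|\bl|-2)$ is explained by observing that at $g=|\bl|+1$ one has $\widetilde\Phi_\bl(|\bl|+1)=C(\bl,0)/C(|\bl|)=C(\bl)/C(|\bl|)$ by~\eqref{CC0}, which is finite and nonzero, whereas the Proposition~\ref{proprationalabc} denominator has a simple zero there; this forces $\widetilde Q_\bl(|\bl|+1)=0$. The squared factors $\prod_{i=1}^{|\bl|}(2g-2i)^2$ arise from vanishing of the BGW numbers at negative arguments: extending $B(\bl,d_n)$ rationally via the formulas of Section~\ref{secpartitionfixedlength}, one checks that $B(\bl,d_n)=0$ for $d_n\in\{-1,\dots,-|\bl|\}$, equivalently for $g\in\{1,\dots,|\bl|\}$. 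For $n\ge 3$ this is manifest from formula~\eqref{formulaBGWeq2} (both sides vanish on negative arguments), while for $n=2$ it follows from the antisymmetry of the summand in~\eqref{TwoPoints} under $h\mapsto g-h$. These yield simple zeros of $\widetilde\Phi_\bl$ at $g=1,\dots,|\bl|$, which combined with the corresponding simple zeros of the Proposition~\ref{proprationalabc} denominator force $\widetilde Q_\bl$ to vanish to order two at each such~$g$. The extra factor~$2g$ for $n\ge 3$ is handled analogously at $g=0$, where $\widetilde\Phi_\bl$ is again regular.

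The subtlest factor is $(2g-2|\bl|-1)$, corresponding to the half-integer value $g=|\bl|+\tfrac12$, at which $\widetilde\Phi_\bl(g)$ has no direct BGW interpretation. For this I would exploit the matrix-resolvent formula~\eqref{fn}--\eqref{MatRes}: expanding the traces $\tr(M(\lambda_{\sigma(1)})\cdots M(\lambda_{\sigma(n)}))$ and extracting the contribution corresponding to the fixed partition~$\bl$ shows that $\widetilde\Phi_\bl(g)$, viewed as a rational function of $g$, has poles only at $g\in\{\tfrac12,\tfrac32,\dots,|\bl|-\tfrac12\}\cup\{-\tfrac12,\dots,-\tfrac{n-3}{2}\}$, so in particular not at $g=|\bl|+\tfrac12$. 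Combined with the three previous divisibilities this yields $\Delta_\bl(g)\mid\widetilde Q_\bl(g)$, and the resulting quotient $Q_\bl(g)$ remains in $\mathbb{Z}[1/2][g]$ since $\widetilde Q_\bl$ already does and each factor of $\Delta_\bl$ is a linear polynomial with integer coefficients and leading coefficient~$2$.

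The main obstacle will be this last algebraic regularity verification at $g=|\bl|+\tfrac12$, which is not visible from the direct combinatorial definition of the BGW numbers. An alternative approach avoiding it is to redo the induction on $|\bl|$ in the proof of Proposition~\ref{proprationalabc} using the DVV-type recursion~\eqref{Phirec} but with the stronger denominator from Proposition~\ref{proprational} as the inductive hypothesis; there the challenge is to verify the non-trivial cancellations between the three types of summands in~\eqref{Phirec}, since individual terms can produce larger denominators than their sum.
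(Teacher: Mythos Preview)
Your strategy---bootstrap from Proposition~\ref{proprationalabc} and then prove $\Delta_\bl(g)\mid\widetilde Q_\bl(g)$ factor by factor---is organized differently from the paper's, but it is not really an independent route, and the gap is broader than just the half-integer point you flag.

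The step ``$B(\bl,d_n)=0$ for $d_n\in\{-1,\dots,-|\bl|\}$, hence $\widetilde\Phi_\bl$ has zeros at $g=1,\dots,|\bl|$'' is not justified as stated. The function $B(\bl,d_n)$ is \emph{not} rational in $d_n$: each term of~\eqref{formulaBGWeq2} carries a factor $f(k_n)=(2k_n-1)!!^3/(2^{3k_n}(k_n+1)!)$ with $k_n$ moving linearly with~$g$, so ``extending $B$ rationally'' is not a well-defined operation. Knowing that the combinatorial formula returns~$0$ at negative integer $d_n$ tells you nothing, a priori, about the value of the rational function $\widetilde\Phi_\bl=\widetilde Q_\bl/D_1$ there; Proposition~\ref{proprationalabc} only pins down that rational function through its values at integers $g\ge|\bl|+1$. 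The same problem afflicts your treatment of $g=0$ (where in fact $C(-1)=0$ by~\eqref{smallest}, so the ratio is an indeterminate $0/0$, not ``regular'') and of the half-integer $g=|\bl|+\tfrac12$. The only factor your method handles cleanly is $(2g-2|\bl|-2)$, since $g=|\bl|+1$ is a genuine BGW value.

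What is actually needed---and what the paper proves---is that the matrix-resolvent sum, multiplied by the specific factor $8^{g-1}(g+\tfrac{n-1}{2})!/(2g-2|\bl|-1)!!^3$, is a \emph{polynomial} in~$g$; this is~\eqref{polynomial0723}, obtained from the bound~\eqref{knbound} on~$k_n$. That polynomiality is precisely what allows evaluation at arbitrary~$g$, and then the vanishing of the formula at $d_n<0$ yields all the zeros at $g=|\bl|,|\bl|-1,\dots,-\tfrac{n-1}{2}$ in one stroke. But once~\eqref{polynomial0723} is in hand, one reads off~\eqref{eqrational2} directly using~\eqref{smallest}, and the detour through Proposition~\ref{proprationalabc} and factor-by-factor divisibility becomes superfluous. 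For $n=2$ the paper simply reduces to $n=3$ via $C(d_1,d_2)=C(0,d_1,d_2)$, avoiding the two-point formula.
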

\begin{proof}
For $n=1$ we have $Q_{\emptyset}(g)\= 1$.
Assume that $n\ge3$ and $n$ is odd.
Write $\dd=(\bl,g-1-|\bl|)$.  From Proposition~\ref{prop2} we know that 
\begin{align}\label{npointformula}
\frac{(2g+n-3)!}{2^{2g-1}}\, C(\mathbf{d}) \= \sum_{\substack{\sigma\in S_n \\ \sigma(n)=n}} (-1)^{|S_{\sigma}^{+}|+1} \sum_{\substack{k_1,\dots,k_{n}\ge-1\\ k_1+\cdots+k_{n}=d_1+\cdots+d_{n}}} a_{k_1,\dots,k_{n}}
\,\omega_{\mathbf{d},\sigma,\mathbf{k}}\,,
\end{align}
where $a_{k_1,\dots,k_n}$ is defined in~\eqref{defak1kn}, and $\omega_{\dd,\sigma,\mathbf{k}}$ is defined in~\eqref{omegadsigmak}.
By the definition~\eqref{defak1kn} of $a_{k_1,\dots,k_n}$, we know that
\begin{align}\label{ak1kninZ1/2}
\frac{8^{k_1+\cdots+k_{n}}\,(k_{n}+1)!\,a_{k_1,\dots,k_{n}}}{(2k_{n}-1)!!^3}\in \mathbb{Z}[1/2][k_{n}]\,.
\end{align}
Now we claim that all nonzero summands in the RHS of~\eqref{npointformula} correspond to \beq g-|\bl|\, \leq \,k_n \,\leq\, g+\frac{n-3}2\,. \label{knbound}\eeq Indeed, one can show that $k_n<g-|\bl|$ implies  $\omega_{\underline{d},\sigma,\underline{k}}=0$, and that $k_n>g+\frac{n-3}2$ implies $a_{k_1,\dots,k_n}=0$.
Therefore, we obtain from~\eqref{ak1kninZ1/2} and~\eqref{knbound} that
\begin{align}\label{polynomial0723}
\frac{8^{g-1}\,(g+\frac{n-1}2)!}{(2g-2|\lambda|-1)!!^3}\,\sum_{\substack{\sigma\in S_n \\ \sigma(n)=n}} (-1)^{|S_{\sigma}^{+}|+1} \sum_{\substack{k_1,\dots,k_{n}\ge-1\\ k_1+\cdots+k_{n}=d_1+\cdots+d_{n}}} a_{k_1,\dots,k_{n}}
\,\omega_{\mathbf{d},\sigma,\mathbf{k}}\,\in\, \mathbb{Z}[1/2][g]\,.
\end{align}
Moreover, we notice that the polynomial~\eqref{polynomial0723} has zeros at $g=|\lambda|,|\lambda|-1,\dots,-\frac{n-1}2$, (since at these points, $g-1-|\bl|$ is a negative integer and the polynomials~\eqref{polynomial0723} equals~0 by the discussion in Section~\ref{secpartitionfixedlength}), so we obtain that
\begin{align}\label{Cpoly}
\frac{(2g+n-3)!}{2^{2g-1}}\frac{8^{g-1}\,(g-|\lambda|+1)!}{(2g-2|\lambda|-1)!!^3}\,C(\mathbf{d})\,\in\, \mathbb{Z}[1/2][g]\,.
\end{align}
Using~\eqref{Cpoly} and the one-point formula~\eqref{smallest}, we deduce~\eqref{eqrational2} for $n\ge3$ odd. For $n$ even and $n\ge4$, the proof is similar and for $n=2$, the proof is based on the fact that $C(d_1,d_2)=C(0,d_1,d_2)$. 
\end{proof}
We note that the above two propositions are analogous to results of Liu--Xu \cite{LX} on 
Witten's intersection numbers, and that 
in~\cite{GY} we used the matrix-resolvent formula to prove rationality for Witten's intersection numbers.

We make one further remark.  In the formulas given before Proposition~\ref{proprationalabc} 
we see that the polynomials $Q_{\bl}(g)$ occurring for different~$\bl$ of the same length~$m$
are very close for~$g$ large, e.g., the three polynomials $Q_{\bl}(g)$ 
for $m=3$ all start $1024g^7-7168g^6$, and even their $g^5$ coefficients are near each other.
We will return to this point in more detail later.

It will be convenient to write the polynomials $Q_{\bl}(g)$ as polynomials of the variable~$X$, i.e., 
$Q_{\bl}(g)=P_{\bl}(X)$, $X=2g-2+n$. For example, 
\begin{align}
&P_{1}(X) \= X^2-X+\frac32\,, \nn\\
&P_{2}(X) \= X^4 - 6 X^3 + \frac{27}2 X^2 - \frac{27}2 X + \frac{135}8\,, \nn\\
&P_{1,1}(X) \= X^4 - 10 X^3 + 38 X^2 - 68 X + \frac{273}4\,, \nn\\
&P_3(X) \= X^6 - 15 X^5 + \frac{177}2 X^4 - 260 X^3 + \frac{3375}{8} X^2 - \frac{3375}{8} X + \frac{7875}{16} \,, \nn\\
&P_{1,2}(X)\= X^6 - 21 X^5 + 179 X^4 - 795 X^3 + \frac{15957}{8} X^2 - \frac{23247}8 X + \frac{41121}{16}\,, \nn\\
&P_{1,1,1}(X)\= X^7 - 28 X^6 + \frac{653}{2} X^5 - \frac{4109}2 X^4 + \frac{30361}4 X^3 - \frac{33581}2 X^2 + \frac{170757}8 X - \frac{82467}8\,. \nn 
\end{align}
The statement of Proposition~\ref{proprational} can then be written equivalently as % the following proposition. 
\begin{align}\label{eqrational}
\frac{C(\bl,g-1-|\bl|)}{C(g-1)} \= \frac{P_{\bl}(2g+n-2)}{\prod_{j=1}^{n-3}(2g+j)} \, \prod_{i=1}^{|\bl|}\frac{2g-2i}{(2g-2i+1)^3} \,,
\end{align}
where $P_{\bl}(X)\in \mathbb{Z}[1/2][X]$ is a monic polynomial.

By using formula~\eqref{smallest} and formula~\eqref{eqrational} we arrive at the following proposition. 
\begin{prop} \label{proprationality}
For any fixed $n\ge2$, fixed $\lambda=(\lambda_1,\dots,\lambda_{n-1})\in(\mathbb{Z}_{\ge1})^{n-1}$, and for $d_n$ being an indeterminate, we have
	\begin{align}\label{UtoP}
		C(\lambda,d_n) \=  \frac{1}{X(\lambda,d_n)^{\delta_{n,2}} \,(X(\lambda,d_n)-1)! } \, \frac{(2d_n+1)!!^3}{2^{d_n+1} \, d_n!}  \, P_{\lambda}(X(\lambda,d_n))\,,
	\end{align}
	where $P_{\lambda}(X)\in \mathbb{Z}[1/2][X]$ is a monic polynomial of degree $\sum_{j=1}^{n-1}(2\lambda_j+1)+\delta_{n,2}-2$. 
\end{prop}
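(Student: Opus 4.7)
The plan is to translate Proposition~\ref{proprational} from the variable $g$ to the variable $X = 2g + n - 2$, using $g = d_n + |\lambda| + 1$ and the explicit one-point formula~\eqref{smallest} for $C(g-1)$; the whole argument is elementary factorial bookkeeping, so the main job is to organise the cancellations cleanly. Combining~\eqref{smallest} with~\eqref{eqrational2},
\[
C(\lambda, d_n) \= \frac{(2g-1)!!^3 \, Q_\lambda(g) \prod_{i=1}^{|\lambda|}(2g-2i)}{2^{g+1}(2g-1)!\, g! \prod_{j=1}^{n-3}(2g+j) \prod_{i=1}^{|\lambda|}(2g-2i+1)^3}\,.
\]
Three elementary identities do all the work. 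The identity $(2g-1)!! = (2d_n+1)!! \prod_{i=1}^{|\lambda|}(2g-2i+1)$ collapses the cube $(2g-1)!!^3/\prod_i(2g-2i+1)^3$ to $(2d_n+1)!!^3$; the identity $g! = g\cdot d_n! \prod_{i=1}^{|\lambda|}(g-i)$ combined with $2g-2i = 2(g-i)$ reduces $\prod_i(2g-2i)/g!$ to $2^{|\lambda|}/(g\cdot d_n!)$; and the identity $2g\cdot(2g-1)! = (2g)!$ together with the exponent arithmetic $2^{|\lambda|+1}/2^{g+1} = 1/2^{d_n+1}$ produces
\[
C(\lambda, d_n) \= \frac{(2d_n+1)!!^3}{2^{d_n+1} d_n!} \cdot \frac{Q_\lambda(g)}{(2g)! \prod_{j=1}^{n-3}(2g+j)}\,.
\]

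For $n \ge 3$ the remaining denominator equals $(2g+n-3)! = (X-1)!$; for $n = 2$ the empty product $\prod_{j=1}^{-1}$ leaves $(2g)! = X! = X(X-1)!$. Both cases are captured uniformly by $X^{\delta_{n,2}}(X-1)!$, which explains the factor $X^{\delta_{n,2}}$ in the statement. Defining $P_\lambda(X) := Q_\lambda\bigl((X-n+2)/2\bigr)$ then yields the claimed formula; its coefficients lie in $\mathbb{Z}[1/2]$ since $Q_\lambda \in \mathbb{Z}[1/2][g]$ by Proposition~\ref{proprational} and the substitution $g \mapsto (X-n+2)/2$ only introduces powers of~$2$ in the denominators.

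For monicity and the degree formula, I would compare the asymptotics of both sides of~\eqref{eqrational2} as $g \to \infty$: Theorem~\ref{thmuniformasymptotics} gives $C(\lambda, d_n)/C(g-1) \to 1$, while the explicit product of factors apart from $Q_\lambda(g)$ on the right-hand side of~\eqref{eqrational2} has leading order $(2g)^{-(n-3+2|\lambda|+\delta_{n,2})}$ (the $\delta_{n,2}$ correcting for the empty product when $n = 2$). Therefore $Q_\lambda(g) \sim (2g)^{n-3+2|\lambda|+\delta_{n,2}}$, and after the substitution $g \mapsto (X-n+2)/2$ the polynomial $P_\lambda(X)$ becomes monic of degree $n-3+2|\lambda|+\delta_{n,2} = \sum_{j}(2\lambda_j+1) + \delta_{n,2} - 2$, as claimed. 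No deep obstacle is anticipated; the only mildly delicate point is the uniform handling of $n = 2$ versus $n \ge 3$, which is resolved precisely by the factor $X^{\delta_{n,2}}$.
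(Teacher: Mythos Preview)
Your proof is correct and follows essentially the same route as the paper, which simply says ``By using formula~\eqref{smallest} and formula~\eqref{eqrational} we arrive at the following proposition.'' You have supplied the factorial bookkeeping that the paper omits, and your three identities are all valid (in particular the extra factor of~$2$ from $g\cdot(2g-1)!=(2g)!/2$ is correctly absorbed into $2^{|\lambda|+1}/2^{g+1}=2^{-d_n-1}$).

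One point worth noting: you invoke Theorem~\ref{thmuniformasymptotics} to establish monicity and the degree of~$P_\lambda$, whereas the paper leaves this implicit. Your argument is logically sound---the proof of Theorem~\ref{thmuniformasymptotics} in Section~\ref{secproofofuniformleadingasymp} uses only the DVV-type recursion, the one-point formula~\eqref{smallest}, and the two-point formula~\eqref{TwoPointsC}, none of which depend on Proposition~\ref{proprationality}, so there is no circularity. In fact the full strength of Theorem~\ref{thmuniformasymptotics} is not needed: you only use $C(\lambda,d_n)/C(g-1)\to1$ as $d_n\to\infty$ for \emph{fixed}~$\lambda$, which is the weaker fixed-$n$ statement. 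This is a clean way to pin down the leading coefficient that the paper does not spell out.
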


We observe from the above examples
that the polynomials $P_{\bl}(X)$ for $|\bl|=1,2$ are irreducible over~$\Q$, and we have checked that 
it is still true for $|\bl|=3, \dots, 9$. We expect that this irreducibility holds for all partitions~$\bl$, 
	and this is consistent with an observation that the BGW numbers often contain large prime factors. 
\begin{remark}
For $\dd' = (d_1,\dots,d_{n-1})\in (\ZZ_{\ge0})^{n-1}$ fixed and $d_n$ an indeterminate, write $\dd=(\dd',d_n)$. Then from Proposition~\ref{proprationality} and~\eqref{CC0} we have 
\begin{align}\label{Cdexpansionincluding0}
C(\mathbf{d}) \=  \frac{1}{X(\mathbf{d})^{\delta_{n,2}} \,(X(\mathbf{d})-1)! } \, \frac{(2d_n+1)!!^3}{2^{d_n+1} \, d_n!}  \, P_{\dd'}(X(\mathbf{d}))\,,
\end{align}
Here $P_{\dd'}(X)$ is defined 
via
\begin{align}
P_{0,\dd'}(X) \: (X-1)^{1-\delta_{n,2}} P_{\dd'}(X-1)\,,\quad  \dd'\in(\mathbb{Z}_{\ge0})^{n-1}\,.
\end{align} 
\end{remark}

Formula~\eqref{Cdexpansionincluding0} (or Proposition~\ref{proprationality})
 implies the following corollary,  which is similar to a result of Liu--Xu~\cite{LX2, LX} for Witten's intersection numbers. 
\begin{cor}\label{corintegrality}
For $g,n\ge1$, $d_1,\dots,d_n\ge0$ satisfying $d_1+\cdots+d_n=g-1$, we have 
\begin{align}\label{normaltobeinteger}
g^{\delta_{n,2}} \, \frac{d_n!\, \prod_{j=1}^n(2d_j+1)!!}{(2d_n+1)!!^3}  \, \langle\tau_{d_1}\cdots\tau_{d_n}\rangle_g  \in \mathbb{Z}[1/2]\,.
\end{align}
\end{cor}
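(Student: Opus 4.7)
The plan is to derive Corollary~\ref{corintegrality} as an essentially immediate algebraic consequence of formula~\eqref{Cdexpansionincluding0} (equivalently, of Proposition~\ref{proprationality}), which expresses $C(\dd)$ as an explicit polynomial $P_{\dd'}(X(\dd))\in\mathbb{Z}[1/2]$ divided by a product of factorials and a power of~$2$. Since the integrality statement only concerns denominators away from the prime $2$, it should follow by clearing the normalization factor from the definition of $C(\dd)$ and verifying that no stray odd denominators appear.

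Concretely, I would substitute the definition \eqref{defC} of $C(\dd)$ into the right-hand side of \eqref{Cdexpansionincluding0}, cancel the common factor $(X(\dd)-1)!$, and solve for $\langle\tau_{d_1}\cdots\tau_{d_n}\rangle_g^{\Theta}$. After multiplying both sides by $d_n!\,\prod_{j=1}^n(2d_j+1)!!/(2d_n+1)!!^3$, every double factorial on the right cancels and one is left with
\begin{equation*}
\frac{d_n!\,\prod_{j=1}^n(2d_j+1)!!}{(2d_n+1)!!^3}\,\langle\tau_{d_1}\cdots\tau_{d_n}\rangle_g^{\Theta} \;=\; \frac{P_{\dd'}(X(\dd))}{2^{d_n+2g}\,X(\dd)^{\delta_{n,2}}}.
\end{equation*}
Because $P_{\dd'}(X)\in\mathbb{Z}[1/2][X]$ by Proposition~\ref{proprationality}, its value at the integer $X(\dd)$ lies in $\mathbb{Z}[1/2]$, and dividing further by a power of $2$ keeps the expression in $\mathbb{Z}[1/2]$.

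The only thing to monitor is the factor $X(\dd)^{\delta_{n,2}}$ in the denominator. When $n\neq 2$ this factor equals~$1$ and $g^{\delta_{n,2}}=1$, so the claim is immediate. When $n=2$ one has $X(\dd)=2g$, and multiplying both sides by the prefactor $g^{\delta_{n,2}}=g$ replaces $X(\dd)^{\delta_{n,2}}$ in the denominator by a bare factor of~$2$, keeping the expression in $\mathbb{Z}[1/2]$; this is precisely the purpose of the $g^{\delta_{n,2}}$ on the left-hand side of the corollary. The case $n=1$ can be handled separately using the explicit closed form~\eqref{smallest}. There is essentially no obstacle beyond this bookkeeping: all the substantive content lives in the rationality together with the denominator control provided by Proposition~\ref{proprationality}, and the corollary is just a repackaging of that result; the one step worth checking carefully is the $n=2$ case, to ensure that the $\delta_{n,2}$ factors on the two sides match and that no odd factor of $g$ survives.
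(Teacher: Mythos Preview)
Your approach for $n\ge2$ is exactly what the paper does: substitute \eqref{defC} into \eqref{Cdexpansionincluding0} (equivalently \eqref{UtoP}), cancel $(X(\dd)-1)!$, and observe that the only possible odd denominator is the factor $X(\dd)^{\delta_{n,2}}=2g$ when $n=2$, which the prefactor $g^{\delta_{n,2}}$ clears.

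However, your handling of $n=1$ is wrong. You assert that it ``can be handled separately using the explicit closed form~\eqref{smallest}'', but in fact the statement \emph{fails} for $n=1$. Take $g=2$, so $d_1=1$ and $\langle\tau_1\rangle_2^\Theta=3/128$; the left side of \eqref{normaltobeinteger} is then
\[
\frac{1!\cdot 3!!}{3!!^{\,3}}\cdot\frac{3}{128}\;=\;\frac{1}{384}\;=\;\frac{1}{2^7\cdot 3}\;\notin\;\mathbb{Z}[1/2].
\]
This is consistent with the fact that Proposition~\ref{proprationality} and formula~\eqref{Cdexpansionincluding0} are only stated for $n\ge2$, so the paper's own argument does not cover $n=1$ either; the range ``$n\ge1$'' in the printed corollary is evidently a slip for ``$n\ge2$''. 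Once you restrict to $n\ge2$, your argument is complete and identical to the paper's.
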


In the introduction %(after~\eqref{BGWnumbers}) 
we formulated the ``Integrality Conjecture" that the
numbers $\langle\tau_{d_1}\cdots\tau_{d_n}\rangle_g$ are integral away from~2, and also made 
the observation  that they are often
highly factorized.  Corollary~\ref{corintegrality} does not imply the Integrality Conjecture, but
does give both some bounds on the denominators of $\langle\tau_{d_1}\cdots\tau_{d_n}\rangle_g$ and nice
information about prime factors of $\langle\tau_{d_1}\cdots\tau_{d_n}\rangle_g$.  The following statement, based
on the numerical data up to genus~40, gives a stronger version of the Integrality Conjecture. 

\begin{conj}\label{conjfactorsBGW1}
For $n\ge1$ and ${\bf d}=(d_1,\dots,d_n)\in (\ZZ_{\ge0})^n$, we have both 
\begin{align}\label{taudivisbiltiy1}
\langle\tau_{d_1}\cdots\tau_{d_n}\rangle_g^{\Theta} \;  \in\; \frac{\prod_{j=1}^n d_j!}{2^{4g}}\;\mathbb{Z} 
\end{align}
and
\begin{align}\label{taudivisbiltiy2}
\langle\tau_{d_1}\cdots\tau_{d_n}\rangle_g^{\Theta} \;  
\in\; \frac{\max_{1\leq j\leq n}\{(2d_j+1)!!\} \, \prod_{r\ge0, \, p_r(\dd)\ge1} (p_r(\dd)-1)!}{2^{4g}} 
\;\mathbb{Z} \,,
\end{align}
where  $g=|{\bf d}|+1$, and $p_r(\dd)$ denotes the multiplicity of~$r$ in~$\dd$.
\end{conj}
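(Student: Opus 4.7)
The plan is strong induction on $X(\dd) = 2g - 2 + n$ using the DVV-type recursion~\eqref{DNrec}. The base case $B(0)=\tfrac18$ satisfies~\eqref{taudivisbiltiy1} (with $\prod_j d_j!/2^{4g}=1/16$) and~\eqref{taudivisbiltiy2} trivially, and the small values in~\eqref{BGWnumbers} can be checked by hand. For the inductive step, one reads~\eqref{DNrec} as expressing $B(d,\dd)$ in terms of three kinds of children---the ``contraction'' terms $B(d_1,\dots,d_i+d,\dots,d_n)$, the ``self-glue'' term $B(a,b,\dd)$, and the ``split'' products $B(a,\dd_I)\,B(b,\dd_J)$---each of strictly smaller $X$, so that the induction hypothesis applies. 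One then converts back to $N(\dd) := \langle\prod_j\tau_{d_j}\rangle_g^{\Theta} = B(\dd)/\prod_j(2d_j+1)!!$ and verifies that the target denominator bounds on the left are implied by those on the right.

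For~\eqref{taudivisbiltiy1} the key arithmetic input is the integrality of $\binom{d_i+d}{d_i}$, which absorbs the factorial mismatch produced by the contraction term, since the child's bound carries $(d_i+d)!$ in place of $d_i!\cdot d!$. The exponent $2^{4g}$ must then be reconciled with the $2^{4(g-d)}$ carried by each contraction child, the $2^{4g}$ carried by the self-glue child (same $g$ but two extra arguments), and the product $2^{4g_1}\cdot 2^{4g_2}=2^{4g-4}$ carried by each split term; the last of these combines with the prefactor $\tfrac12$ and with the odd double factorials $(2a+1)!!(2b+1)!!$ hidden in the $B$-to-$N$ conversion to furnish the remaining powers of $2$. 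The multinomial identity used to distribute $\prod_j d_j!$ between $I$ and $J$ in the split terms is a routine check. For~\eqref{taudivisbiltiy2} the scheme is parallel, but the multiplicities $p_r(\dd)$ behave awkwardly under contraction because $p_{d_i+d}(\dots)$ depends on whether $d_i+d$ collides with some existing $d_j$; the right workaround is to invoke the full symmetry of $B(\dd)$ in its arguments and take the GCD of the denominator bounds coming from relabeling each of the $n+1$ arguments of $B(d,\dd)$ as the distinguished ``$d$'' of~\eqref{DNrec}---this GCD amplifies the equal-argument symmetry and should precisely produce the factor $\prod_{r}(p_r(\dd)-1)!$.

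The hard part, and presumably the reason~\eqref{taudivisbiltiy1}--\eqref{taudivisbiltiy2} remain conjectural despite numerical evidence to genus $40$, is that the sharp 2-adic exponent $2^{4g}$ appears to demand cancellations \emph{across} the three groups of terms in~\eqref{DNrec} rather than a term-by-term estimate. The brute-force alternative via the ELSV-like formula---expanding $\Lambda(\tfrac12)=\sum_k c_k/2^k$, $\exp\bigl(\sum_d(-1)^{d-1}\kappa_d/(2^d d)\bigr)$, and each factor $1/(1+\tfrac{2d_i+1}{2}\psi_i)$, then bounding the 2-adic valuation of the resulting integral on $\overline{\mathcal{M}}_{g,n}$---produces a bound of the form $2^{2g+O(\text{degree})}$ whose constant is strictly larger than $4$, so sharpness requires a finer cancellation we do not yet know how to exhibit. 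A successful proof will likely hinge either on a new 2-adic identity among $\psi$-, $\kappa$- and $\lambda$-classes on $\overline{\mathcal{M}}_{g,n}$, or on a combinatorial reorganization of~\eqref{DNrec} that makes the cancellation manifest; the sign $(-1)^{g-1+n}$ appearing in the ELSV-like formula is a plausible source of the required sign pattern.
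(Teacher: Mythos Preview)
The statement you are attempting to prove is labeled in the paper as a \emph{Conjecture}, not a theorem; the paper offers no proof, only numerical verification to genus~$40$ and the comment that it strengthens the (also open) Integrality Conjecture. So there is no ``paper's own proof'' to compare against, and your proposal is being held to a standard the authors themselves do not meet.

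That said, your write-up is not a proof either, and you essentially say so yourself in the third paragraph. A few concrete points. First, a slip: the contraction child $B(d_1,\dots,d_i+d,\dots,d_n)$ has the \emph{same} genus $g$ as $B(d,\dd)$, not $g-d$ as you write, since the weight is unchanged; so there is no $2^{4d}$ deficit to make up in that term, and indeed one checks that the contraction contribution to~\eqref{taudivisbiltiy1} is term-by-term integral via the identity $\binom{2(d_i+d)+1}{2d_i}\in\ZZ$. Second, the real obstruction is exactly where you locate it: the self-glue child $B(a,b,\dd)$ has genus $g-1$, so after conversion to $N$ the target bound demands that $16/\bigl((2d+1)\binom{2d}{2a+1}\bigr)$ times an integer be integral, which already fails for $d=1$, $a=b=0$. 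So cancellation against the split terms is mandatory, and your inductive scheme cannot close as stated. Third, for~\eqref{taudivisbiltiy2} the GCD-over-relabelings idea is suggestive but does not obviously produce $\prod_r (p_r-1)!$; the $\max_j(2d_j+1)!!$ factor is also unstable under both contraction and splitting, and you give no mechanism for it.

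In short: your diagnosis of why the inductive approach stalls is accurate, and the paper agrees by leaving the statement conjectural. What you have is a correct identification of the obstacle, not a proof.
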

We notice that none of \eqref{normaltobeinteger}, \eqref{taudivisbiltiy1}, \eqref{taudivisbiltiy2} imply either of the others, 
and also that each is weaker than the best possible factorization. For instance,
\begin{align}
\langle\tau_6\tau_7\tau_8\tau_{18}\rangle_{40}^{\Theta} \= 2^{-150}\thin 3^{18}\thin 5^{11}\thin 7^5\thin 11^2\thin 13^3\thin 17^1\thin 19^2\thin 23^2\thin29^2\thin31^2\thin37^2\,101^1\, M\,, \nn
\end{align}
where $M$ has no prime factors less than~$1000$,
whereas formula~\eqref{normaltobeinteger} implies the
divisibility of  $\langle\tau_6\tau_7\tau_8\tau_{18}\rangle_{40}^{\Theta}$ (away from~2) by $3^{-1}\thin 13^{-2}\thin5^{2}\thin7^1\thin 19^2\thin 23^2\thin 29^2\thin 31^2\thin 37^2$, 
formula~\eqref{taudivisbiltiy1} implies the divisibility by
$3^{14}\thin 5^6\thin 7^4\thin 11^1\thin 13^1\thin 17^1$, and formula~\eqref{taudivisbiltiy2} 
implies the divisibility by $3^{9}\thin 5^5\thin 7^3\thin 11^2\thin 13^1\thin 17^1\thin 19^1\thin 23^1\thin 29^1\thin 31^1\thin 37^1$.

\section{Uniform large genus asymptotics} \label{secproofofuniformleadingasymp}
In this section, we prove Theorem~\ref{thmuniformasymptotics}. 
Our proof will mainly use the recursion~\eqref{DNrec}, and
techniques introduced by Aggarwal~\cite{Agg} in the study of the large genus 
asymptotics of Witten's intersection numbers. 
Before entering into the details, it is convenient to rewrite the DVV-type relation~\eqref{DNrec} in terms of~$C(\mathbf{d})$ as follows: 
\begin{align}
	&C({\bf d}) \=\sum_{j=2}^n\frac{2d_j+1}{X(\mathbf{d})-1}C(d_2,\dots,d_j+d_1,\dots,d_n) \nn\\
	& +\sum_{\substack{a,b\ge0\\a+b=d_1-1}}\biggl[\frac{2}{X(\mathbf{d})-1}\,C(a,b,d_2,\dots,d_n) \nn\\
	& +\sum_{I\sqcup J=\{2,\dots,n\}}\frac{(X(a,\mathbf{d}_{I})-1)!\,(X(b,\mathbf{d}_J)-1)!}{(X(\mathbf{d})-1)!} \, C(a,\dd_I)\,C(b,\dd_J)\biggr]\,,  \label{CVirasoro2}
\end{align}
where $n\ge1$, $\mathbf{d}=(d_1,\dots,d_n)\in(\mathbb{Z}_{\ge0})^n$, and $X(\cdot)$ is as in~\eqref{defXd}. 

\subsection{Lower bound}
Let us first show the following lemma on positivity of $C(\dd)$.
\begin{lemma}\label{positivitylemma1020}
For every $n\ge1$ and every $\mathbf{d}=(d_1,\dots,d_n)\in\left(\mathbb{Z}_{\ge0}\right)^n$, we have
	\begin{align}\label{positivity1020}
		C(\dd) \,>\, 0 \,.
	\end{align}
\end{lemma}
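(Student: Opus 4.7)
The plan is to prove positivity by strong induction on the invariant $X(\dd) = \sum_{j=1}^n (2d_j+1)$, using the DVV-type recursion~\eqref{CVirasoro2}. The base case $X(\dd)=1$ forces $\dd=(0)$, and here $C(0)=1/4>0$ by the remark just after~\eqref{defC}.

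For the inductive step, I assume $C(\dd')>0$ whenever $X(\dd')<X(\dd)$. If $d_1=0$, then by~\eqref{CC0} one has $C(\dd)=C(0,d_2,\dots,d_n)=C(d_2,\dots,d_n)$, and the tuple $(d_2,\dots,d_n)$ has $X$-value $X(\dd)-1$, so the induction hypothesis applies. Otherwise, using the symmetry of $C(\dd)$ in its entries (manifest from the moduli-space definition), I may assume $d_1\ge 1$, and then apply~\eqref{CVirasoro2} with this first index extracted.

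The core bookkeeping is to verify that every $C$-argument on the right of~\eqref{CVirasoro2} has strictly smaller $X$-invariant and that every coefficient is positive. A direct calculation yields $X(d_2,\dots,d_j+d_1,\dots,d_n)=X(\dd)-1$ for the merging terms, $X(a,b,d_2,\dots,d_n)=X(\dd)-1$ for the second line, and $X(a,\dd_I)+X(b,\dd_J)=X(\dd)-1$ for the splitting terms, so in the quadratic sum each factor has $X$ strictly less than $X(\dd)$. Because $d_1\ge 1$ forces $X(\dd)\ge 3$, the denominators $X(\dd)-1$ are positive, and the factorial quotient $(X(a,\dd_I)-1)!\,(X(b,\dd_J)-1)!/(X(\dd)-1)!$ is manifestly positive. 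Hence the right-hand side of~\eqref{CVirasoro2} is a finite sum of strictly positive quantities, and $C(\dd)>0$.

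There is no real obstacle: the recursion~\eqref{CVirasoro2} has been written in a form where positivity is preserved inductively, and the only step requiring care is the strict monotonicity of $X$ on every term, handled above. Reducing to $d_1\ge 1$ by symmetry is a standard and painless device, so this lemma should follow by a clean, short induction.
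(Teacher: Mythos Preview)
Your proof is correct and follows exactly the same approach as the paper's one-line argument: strong induction on $X(\dd)$ via the DVV-type recursion~\eqref{CVirasoro2}, with base case $C(0)=1/4$. The separate treatment of $d_1=0$ via~\eqref{CC0} and the appeal to symmetry are harmless but unnecessary, since the recursion~\eqref{CVirasoro2} applied with $d_1=0$ already collapses to $C(d_2,\dots,d_n)$ (the first sum telescopes and the $a+b=-1$ sum is empty).
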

\begin{proof}
By using the DVV-type relation~\eqref{CVirasoro2} and by recalling that $C(0)=1/4$. 
\end{proof}
Now we give in the following lemma a better lower bound for $C(\dd)$.
\begin{lemma}\label{lemmalowerbound}
	For every $n\ge1$ and every $\mathbf{d}=(d_1,\dots,d_n)\in\left(\mathbb{Z}_{\ge0}\right)^n$, we have
	\begin{align}\label{lowerbound}
		C(\dd)\,\geq\, C(|\dd|) \,.
	\end{align}
\end{lemma}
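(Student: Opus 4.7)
The plan is to prove the inequality by induction on the degree-dimension invariant $X(\dd) = 2g(\dd)-2+n$, which is strictly reduced by every term on the right-hand side of the DVV-type recursion~\eqref{CVirasoro2}. The base case $X(\dd)=1$ forces $\dd=(0)$, where both sides equal $1/4$. For the inductive step, by~\eqref{CC0} we may assume every $d_i \ge 1$; pivoting~\eqref{CVirasoro2} on $d_1$ and replacing each $C(\dd')$ on the right-hand side by the inductively-known lower bound $C(|\dd'|)$ (valid since each such $\dd'$ has strictly smaller $X$ and all coefficients are non-negative) yields
\begin{equation*}
C(\dd) \,\geq\, \tfrac{X-2d_1-1}{X-1}\thin C(g-1) \,+\, \tfrac{2d_1}{X-1}\thin C(g-2) \,+\, P,
\end{equation*}
where $g=g(\dd)$, $X=X(\dd)$, and
\begin{equation*}
P \,=\, \sum_{a+b=d_1-1}\thin\sum_{I\sqcup J=\{2,\ldots,n\}}\tfrac{(X_I-1)!\thin(X_J-1)!}{(X-1)!}\thin C(a+|\dd_I|)\thin C(b+|\dd_J|).
\end{equation*}

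A direct rearrangement shows the target $C(\dd)\geq C(g-1)$ is equivalent to $P \geq \tfrac{2d_1}{X-1}\bigl(C(g-1)-C(g-2)\bigr)$, and the closed form~\eqref{smallest} yields the clean identity $C(g-1)-C(g-2) = \tfrac{C(g-2)}{4g(g-1)}$, so the problem reduces to
\begin{equation*}
P \;\geq\; \tfrac{d_1}{2\thin g\thin (g-1)\thin (X-1)}\thin C(g-2).
\end{equation*}
To produce the required lower bound on $P$, the plan is to apply~\eqref{CVirasoro2} a second time to the comparison tuple $\dd^{*}=(g-1,0^{n-1})$, whose value $C(\dd^{*})$ equals $C(g-1)$ by~\eqref{CC0}; after invoking~\eqref{CC0} once more inside every product factor, every $C$-value in the resulting formula collapses to a one-point quantity, producing an identity for $C(g-1)$ whose product part has the same structural shape as $P$. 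The inductive hypothesis together with the monotonicity of the one-point sequence $C(k)$ (evident from~\eqref{smallest}) then majorizes each one-point product in that identity by the corresponding factor $C(a+|\dd_I|)\thin C(b+|\dd_J|)$ appearing inside $P$.

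The main obstacle is handling the discrepancy between the factorial weights $(X_I-1)!(X_J-1)!/(X-1)!$ for $\dd$ versus $\dd^{*}$: these differ because $\dd^{*}$ has zeros in its non-leading slots, altering the values of $X_I$ and $X_J$ even though the index set $I\sqcup J$ ranges over the same partitions. Reconciling this discrepancy, aggregated over all splittings $(a,b)$ of $d_1-1$ and all $I\sqcup J=\{2,\ldots,n\}$, amounts to a Vandermonde-convolution-type factorial identity of the form $\sum_{k=0}^{n-1}\binom{n-1}{k}(2a+k)!(2b+n-1-k)! = (2a)!(2b)!(X-2)!/(2g-3)!$ combined with an elementary monotonicity estimate that propagates $d_i \ge 1$ through the resulting ratios. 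Establishing this combinatorial weight comparison is the technical heart of the argument and closes the induction.
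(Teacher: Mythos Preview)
Your setup matches the paper's proof exactly up to the point where you need to bound the product sum $P$: induction on $X(\dd)$, removal of zeros via~\eqref{CC0}, pivoting the recursion~\eqref{CVirasoro2} on $d_1$, and replacing each $C(\dd')$ on the right by $C(|\dd'|)$ via the inductive hypothesis. Your identity $C(g-1)-C(g-2)=\tfrac{C(g-2)}{4g(g-1)}$ is also correct. The divergence---and the gap---is entirely in how you propose to bound~$P$.

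The paper's argument here is dramatically simpler than what you sketch. Rather than applying the recursion a second time to $\dd^*=(g-1,0^{n-1})$ and attempting a weight-by-weight comparison, the paper simply throws away all but two terms of~$P$ by positivity (Lemma~\ref{positivitylemma1020}): the two ``boundary'' splits $(a,I)=(0,\emptyset)$ and $(b,J)=(0,\emptyset)$. Each contributes $\tfrac{C(0)\,C(|\dd|-1)}{(X-1)(X-2)}$ after one more use of the inductive hypothesis, so $P\ge \tfrac{C(|\dd|-1)}{2(X-1)(X-2)}$ (using $C(0)=1/4$). The required inequality then reduces to $\tfrac{1}{X-2}\ge\tfrac{d_1}{|\dd|(|\dd|+1)}$, which follows from $d_1\le |\dd|/n$ and $n\ge 2$. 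Note that this last step requires taking $d_1=\min_j d_j$, which your proposal never stipulates; without it the inequality can fail.

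Your alternative route is not completed: you explicitly label the ``Vandermonde-convolution-type factorial identity'' and the accompanying monotonicity estimate as ``the technical heart of the argument,'' but you do not carry them out. Moreover, the identity you state, $\sum_{k=0}^{n-1}\binom{n-1}{k}(2a+k)!\,(2b+n-1-k)! = (2a)!\,(2b)!\,(X-2)!/(2g-3)!$, is false as written (try $n=2$, $a=b=0$, $d_2=1$: the left side is~$2$, the right side is~$4$). So as it stands the proposal has a genuine gap at its key step, while the paper closes the induction with a two-line positivity argument.
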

\begin{proof}
Noticing that the statement is trivial when $n=1$ and the fact that  
 $C(\dd)$ is unchanged by removing~0's in~$\dd$, we can assume $n\ge2$ and $d_1,\dots,d_n\ge1$.
 Because of the symmetry of $C(\dd)$ in its arguments, we can also assume that $d_1$ is the smallest of the $d_j$'s.
We now do mathematical induction with respect to $X(\mathbf{d})=2|\dd|+n\ge3$.  
For $X(\dd)=3$, from Table~\ref{tablenormalizednumbers} we see that \eqref{lowerbound} is true.
By Lemma~\ref{positivitylemma1020} 
and by the induction hypothesis
we get
\begin{align}\label{Cd1dnlowerbound}
C(\dd)- C(|\dd|) \;\ge\; - 2 \, d_1 \, \frac{C(|\dd|)-  C(|\dd|-1)}{X(\mathbf{d})-1} \+ 2 \, \frac{C(0) \, C(|\dd|-1)}{(X(\mathbf{d})-1)(X(\mathbf{d})-2)} \,.
\end{align}
Using formula~\eqref{smallest}, we can write the right-hand side of~\eqref{Cd1dnlowerbound} as
\begin{align}\label{lowerboundest1}
\frac12 \, \biggl(\frac{1}{X(\mathbf{d})-2}-\frac{d_1}{|\dd|(|\dd|+1)}\biggr) \, \frac{C(|\dd|-1)}{X(\mathbf{d})-1}\,,
\end{align}
which is positive because $d_1\leq \frac{|d|}n$ and $n\ge2$. This finishes the proof.
\end{proof}

\subsection{Upper bound}
To give an upper bound of $C(\dd)$, we first give in the following lemma an estimate 
related to the quadratic-in-$C$ terms of~\eqref{CVirasoro2}, analogous to~\cite[Lemma~3.1]{Agg}.
\begin{lemma}\label{estimatethird}
	For $n\ge1$ and for $\mathbf{d}=(d_1,\dots,d_n)\in \bigl(\mathbb{Z}_{\ge1}\bigr)^n$, we have
\begin{align}\label{upperboundthirdterm}
\sum_{\substack{a,b\ge0\\a+b=d_1-1}}\sum_{I\sqcup J=\{2,\dots,n\}}\frac{\bigl(X(a,\mathbf{d}_{I})-1\bigr)!\,\bigl(X(b,\mathbf{d}_{J})-1\bigr)!}{\bigl(X(\mathbf{d})-1\bigr)!}\;\leq\; \frac{4}{(X(\mathbf{d})-1)(X(\mathbf{d})-2)}\,.
	\end{align}
\end{lemma}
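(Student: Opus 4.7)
The plan is to recast the left-hand side as a one-variable integral and then bound the integrand by an elementary polynomial comparison.

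First I would note that $(X(a,\dd_I)-1)+(X(b,\dd_J)-1)=X(\dd)-3$, so each summand on the left equals $\frac{1}{(X-1)(X-2)}\binom{X-3}{\alpha}^{-1}$ with $X=X(\dd)$ and $\alpha=X(a,\dd_I)-1$. Applying the Beta-integral identity $\binom{N}{k}^{-1}=(N+1)\int_0^1 t^k(1-t)^{N-k}\,dt$ and summing over all $(a,b,I,J)$, the left-hand side becomes
\begin{equation*}
\frac{1}{X-1}\int_0^1 F(t)\,dt,\qquad F(t):=\Bigl(\sum_{a+b=d_1-1}t^{2a}(1-t)^{2b}\Bigr)\prod_{j=2}^n\bigl(t^{2d_j+1}+(1-t)^{2d_j+1}\bigr).
\end{equation*}
Using the symmetry $F(t)=F(1-t)$, I would then restrict to $[0,1/2]$ and substitute $r=t/(1-t)\in[0,1]$, which gives $dt=(1+r)^{-2}\,dr$ and, after collecting the factors of $(1+r)$,
\begin{equation*}
\int_0^1 F(t)\,dt\,=\,2\int_0^1\frac{\Psi(r)}{(1+r)^{X-1}}\,dr,\qquad \Psi(r):=\Bigl(\sum_{a=0}^{d_1-1}r^{2a}\Bigr)\prod_{j=2}^n(1+r^{2d_j+1}).
\end{equation*}

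The heart of the proof is the pointwise estimate $\Psi(r)\le (1+r)^{d_1+n-2}$ on $[0,1]$, which I would establish in two pieces. For the product factor, $1+r^{2d_j+1}\le 1+r$ on $[0,1]$ (since $2d_j+1\ge 1$), and taking the product yields $\prod_{j=2}^n(1+r^{2d_j+1})\le(1+r)^{n-1}$. For the geometric sum, I would prove $\sum_{a=0}^{d-1}r^{2a}\le(1+r)^{d-1}$ by induction on $d$: writing $(1+r)^{d-1}=(1+r)(1+r)^{d-2}$ and applying the inductive hypothesis, the inductive step reduces to showing $r+r^3+\cdots+r^{2d-3}\ge r^{2d-2}$, which holds because already the last summand satisfies $r^{2d-3}\ge r^{2d-2}$ on $[0,1]$.

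Finally, setting $m:=X-n-d_1+1=2|\dd|-d_1+1$, and noting that $m\ge 2$ because $d_j\ge 1$ for all $j$, this bound gives
\begin{equation*}
\int_0^1\frac{\Psi(r)}{(1+r)^{X-1}}\,dr\,\le\,\int_0^1(1+r)^{-m}\,dr\,=\,\frac{1-2^{1-m}}{m-1}\,\le\,\frac{1}{2|\dd|-d_1}.
\end{equation*}
The assumption $d_j\ge 1$ also implies $|\dd|\ge d_1+(n-1)$, from which one checks directly that $2(2|\dd|-d_1)\ge X-2$, and combining everything yields the required bound $\frac{4}{(X-1)(X-2)}$.

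The hard part will be the polynomial comparison of Step 3. The trivial estimate $\Psi(r)\le\Psi(1)=d_1\,2^{n-1}$ is hopelessly weak: the factor $2^{n-1}$ cannot be absorbed by the denominator and the resulting bound fails already for moderate $n$. The point of the estimate $\Psi(r)\le(1+r)^{d_1+n-2}$ is that its exponent grows only linearly in $n$ and $d_1$, so it cancels almost entirely against the denominator $(1+r)^{X-1}$, whose exponent is also linear in the same variables; the remaining integral is then of the optimal order $O(1/(X-2))$.
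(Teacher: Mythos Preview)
Your proof is correct and complete. I verified each step: the Beta-integral rewriting, the substitution $r=t/(1-t)$, the pointwise bound $\Psi(r)\le(1+r)^{d_1+n-2}$ (both the product estimate and the inductive estimate for the geometric sum are fine), and the final arithmetic comparison $2(2|\dd|-d_1)\ge X-2$. The argument also covers the edge case $n=1$ without modification.

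Your route is genuinely different from the paper's. The paper groups the summands by the pair $(n_1,g_1)=(|I|,\,a+1+|\dd_I|)$, over-counts by the binomial $\binom{n-1}{n_1}$, and then splits into two cases: the terms with $g_1=1$ or $g_2=1$ are bounded by a geometric series using $n\le X(\dd)/3$, and the terms with $g_1,g_2\ge2$ are bounded using $\binom{a_1}{b_1}\binom{a_2}{b_2}\le\binom{a_1+a_2}{b_1+b_2}$ together with $n\le g-1$. In contrast, you exploit that the inverse binomial coefficient has a Beta-integral representation, which makes the double sum over $(a,b,I,J)$ factor \emph{exactly} into the integrand $F(t)$; the whole estimate then reduces to the single polynomial inequality $\Psi(r)\le(1+r)^{d_1+n-2}$. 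Your argument is cleaner and case-free, and it makes transparent why the bound is $O(1/X^2)$: the exponent $d_1+n-2$ cancels against $X-1$ to leave $(1+r)^{-m}$ with $m\ge2$. The paper's combinatorial approach, on the other hand, keeps closer track of where the mass of the sum sits (boundary versus interior in $(g_1,g_2)$), which is the viewpoint used later when the recursion is iterated.
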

\begin{proof}
For $n=1$, the inequality~\eqref{upperboundthirdterm} is trivial. Assume that $n\ge2$. For each $a,b,I,J$ satisfying $a+b=d_1-1$ and $I\sqcup J=\{2,\dots,n\}$, we set $n_1=|I|$, $g_1=a+1+|\mathbf{d}_{I}|$, $n_2=|J|$, $g_2=b+1+|\mathbf{d}_{J}|$. Then $X(a,{\bf d_I})=2g_1+n_1-1$ and $X(b,{\bf d_J})=2g_2+n_2-1$. %Using these variables and 
By counting the number of 4-tuples $(a,b,I,J)$ with given values of $n_i$ and $g_i$, we find
\begin{align}
&\sum_{\substack{a,b\ge0\\a+b=d_1-1}}\sum_{I\sqcup J=\{2,\dots,n\}}\frac{\bigl(X(a,\mathbf{d}_{I})-1\bigr)!\,\bigl(X(b,\mathbf{d}_{J})-1\bigr)!}{\bigl(X(\mathbf{d})-1\bigr)!} \nn\\
\leq &\sum_{n_1+n_2=n-1}\sum_{\substack{g_1,g_2\ge1\\g_1+g_2=g }}\binom{n-1}{n_1}\frac{(2g_1+n_1-2)!\,(2g_1+n_2-2)!}{(X(\mathbf{d})-1)!}\nn\\
= &\sum_{n_1+n_2=n-1}\binom{n-1}{n_1}\biggl(2\,\frac{n_1!\,(X(\dd)-n_1-3)!}{(X(\mathbf{d})-1)!} \,+\,  \sum_{\substack{g_1,g_2\ge2\\g_1+g_2=g }}\frac{(2g_1+n_1-2)!\,(2g_1+n_2-2)!}{(X(\mathbf{d})-1)!}\biggr)\,,
		\label{estimatethirdterm}
	\end{align}
where $g=g(\dd)=|d|+1$ as usual. We estimate the two terms on the right-hand side of~\eqref{estimatethirdterm} separately. For the first term we have 
\begin{align}\label{estimatethirdterm2}
&2\sum_{n_1+n_2=n-1}\binom{n-1}{n_1}\frac{n_1!\,(X(\mathbf{d})-n_1-3)!}{(X(\mathbf{d})-1)!}
\=\frac{2}{(X(\mathbf{d})-1)(X(\mathbf{d})-2)}\sum_{n_1=0}^{n-1}\prod_{j=1}^{n_1}\frac{n-j}{X(\mathbf{d})-2-j}\nn \\
&\qquad\qquad \;\leq\; \frac{2}{(X(\mathbf{d})-1)(X(\mathbf{d})-2)}\sum_{n_1=0}^{\infty}\left(\frac{1}{3}\right)^{n_1}\=\frac{3}{(X(\mathbf{d})-1)\,(X(\mathbf{d})-2)}\,, 
\end{align}
where in the inequality we used the fact that $n\leq \frac{X(\mathbf{d})}{3}$ (implied by  $\dd\in(\mathbb{Z}_{\ge1})^n$).
For the second term we have 
\begin{align}
&\sum_{n_1+n_2=n-1}\sum_{\substack{g_1,g_2\ge2\\g_1+g_2=g }}\binom{n-1}{n_1}\frac{(2g_1+n_1-2)!\,(2g_2+n_2-2)!}{(X(\mathbf{d})-1)!}\nn\\
&\leq \sum_{\substack{g_1,g_2\ge2\\g_1+g_2=g \\n_1+n_2=n-1 }} \frac{\binom{2g-4}{2g_1-2}^{-1}}{(X(\mathbf{d})-1)(X(\mathbf{d})-2)}\leq  \frac{n\, (g-3) \,\binom{2g-4}{2}^{-1}}{(X(\mathbf{d})-1)(X(\mathbf{d})-2)} \leq \frac{1}{(X(\mathbf{d})-1)\,(X(\mathbf{d})-2)}\,,\label{estimatethirdterm1}
\end{align}
where for the first inequality we used the fact that
$X(\dd)=(2g_1+n_1-2)+(2g_2+n_2-2)$ and that
\begin{align}
\binom{a_1}{b_1}\binom{a_2}{b_2}\leq \binom{a_1+a_2}{b_1+b_2}\,,
\end{align}
and for the last inequality we used $g\ge3$ and $n\leq g-1$. % (implied by $\dd\in(\mathbb{Z}_{\ge1})^n$).  
Combining \eqref{estimatethirdterm},~\eqref{estimatethirdterm2},~\eqref{estimatethirdterm1}, we obtain the lemma.
\end{proof}

Following Aggarwal~\cite{Agg}, for $\X,n\ge1$, introduce 
\begin{align}\label{defthetagn}
	\theta_{\X,n} \; := \;
	\max_{\substack{\dd\in(\mathbb{Z}_{\ge0})^n \\ X(\mathbf{d})=\X}}C(\dd) \,.
\end{align} 
Before continuing, we introduce a number-theoretic function $f(\X,n)$, defined through the recursion
\begin{align}\label{deff*}
f(\X,n)\=
\frac{2}{3}f(\X-1,n-1)+\frac{1}{3}f(\X-1,n+1)+\frac{4}{(\X-1)(\X-2)}\,, \quad \forall\, n\ge3,\X\ge8\,,
\end{align} 
together with the initial data $f(\X,n)=1/\pi$
for $1\leq \X\leq 7$ or $n=1,2$. By induction, we know that $f(\X,n)$ is monotone increasing with respect to~$n$, and 
\begin{align}
f(\X,n) \,\leq\, \frac{1}{\pi} \+\sum_{k=8}^{\X} \frac{4}{(k-1)(k-2)}\,.
\end{align}
Hence $f(\X,n)$ is bounded by~1. Let us also prove the following lemma.
\begin{lemma}\label{lemsupnthetag,n}
For $\X\ge1$, and for $1\leq n\leq \frac{\X}5$, we have the uniform estimate
\begin{align}
f(\X,n) \= \frac{1}{\pi}\+O\bigl(\frac{1}{\X}\bigr)\,,\quad \X\to\infty \,.
\end{align}
\end{lemma}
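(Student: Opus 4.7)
The plan is to study the deviation $h(\X,n) := f(\X,n) - 1/\pi$, which vanishes on the boundary $\{n \le 2\} \cup \{\X \le 7\}$ and satisfies the same recursion as $f$ with the source term $4/((\X-1)(\X-2))$. Because the coefficients $2/3$ and $1/3$ are positive and sum to $1$, this recursion admits a stochastic interpretation: let $(n_k)_{k \ge 0}$ be the random walk on $\mathbb{Z}$ with $n_0 = n$ and independent increments equal to $-1$ (with probability $2/3$) or $+1$ (with probability $1/3$), and set $T := \inf\{k \ge 0 : n_k \le 2 \text{ or } \X - k \le 7\}$. Iterating the recursion until absorption, where $h$ vanishes, gives
\[
h(\X,n) \;=\; \sum_{k \ge 0} \mathbb{P}(T > k)\,\frac{4}{(\X-k-1)(\X-k-2)}\,,
\]
so the lemma reduces to proving that this sum is $O(1/\X)$ uniformly in $1 \le n \le \X/5$.

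The walk $n_k$ has mean drift $-1/3$, so $M_k := n_k + k/3 - n$ is a martingale whose increments lie in $\{-2/3, +4/3\}$. Azuma--Hoeffding then yields, for every $k \ge 3(n-2)$,
\[
\mathbb{P}(T > k) \;\le\; \mathbb{P}(n_k > 2) \;\le\; 2\exp\!\Bigl(-\tfrac{9(k/3 - n + 2)^2}{32\,k}\Bigr),
\]
while $\mathbb{P}(T > k) \le 1$ trivially otherwise. Writing $k = 3n + j$, one rephrases this as $\mathbb{P}(T > 3n+j) \le 2\,e^{-c\min(j,\,j^2/n)}$ for an absolute constant $c > 0$.

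Now split the sum at $k = 3n$. The head $\sum_{k=0}^{3n}$, bounded via $\mathbb{P}(T > k) \le 1$, telescopes to
\[
\frac{4(3n+1)}{(\X-3n-2)(\X-1)} \;=\; O(n/\X^2) \;=\; O(1/\X),
\]
using $\X - 3n - 2 \ge 2\X/5 - 2$ in the range $n \le \X/5$. For the tail $k > 3n$, distinguish further: in the subrange $k \le 4\X/5$ one has $\X-k-1 \ge \X/5 - 1$, and the Azuma bound combined with $\sum_{j \ge 1} e^{-c\min(j,\,j^2/n)} = O(\sqrt{n})$ produces a contribution of order $\sqrt{n}/\X^2 = O(\X^{-3/2})$; in the complementary subrange $k > 4\X/5$, each $j = k - 3n$ exceeds $\X/5 \ge n$, so $\mathbb{P}(T > k) \le e^{-c\X/5}$, while the remaining telescoping sum is $O(1)$, giving only $O(e^{-c\X})$. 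Adding the three pieces yields $h(\X,n) = O(1/\X)$ uniformly.

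The main obstacle is that when $n$ is close to $\X/5$ the expected hitting time $\mathbb{E}[T] = 3(n-2)$ is already comparable to $\X$ itself, so a first-moment estimate on $T$ alone cannot control the terms where $\X - k$ has shrunk; it is the exponential Azuma tail on the upper deviations of $T$ that makes the $O(1/\X)$ estimate uniform across the entire range $1 \le n \le \X/5$.
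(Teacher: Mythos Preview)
Your proof is correct, but it follows a genuinely different route from the paper's. The paper solves the recursion for $f(\X,n)$ in closed form: writing $h(\X,n)=f(\X,n)-1/\pi$, it derives
\[
h(\X,n)=\frac{2(\X-7)}{3(\X-1)}-\sum_{k=8}^{\X}\frac{2(k-7)}{3(k-1)}\,P(n,\X-k),
\]
where the coefficients $P(n,j)$ are read off from the power-series expansion of $\bigl((3-\sqrt{9-8t^2})/(2t)\bigr)^{n-2}$. It then bounds $P(n,j)$ by a contour integral on $|t|=1.05$, obtaining $P(n,j)\le 1.05^{-j}\cdot 1.23^{n-2}$, and finishes by splitting the sum at $M=[0.1\X]$ together with $\sum_j P(n,j)=1$ and the constraint $n\le \X/5$.

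You instead interpret the same recursion probabilistically, express $h(\X,n)$ as an expected accumulated source along a biased random walk stopped at the boundary, and control the hitting-time tails by Azuma--Hoeffding. This is precisely the random-walk technique of Aggarwal that the paper (see the remark following the proof of Theorem~\ref{thmuniformasymptotics}) takes pains to avoid in favour of its generating-function argument. Your approach buys conceptual transparency: the drift $-1/3$ makes it intuitively clear why the threshold $n\le \X/5$ (with room to spare versus $\X/3$) suffices, and Azuma replaces the ad~hoc contour estimate. The paper's approach, on the other hand, yields a completely explicit formula for $f(\X,n)$ with concrete numerical constants, which may be of independent use. Both arguments ultimately rest on the same mechanism---exponential decay of the probability that the walk survives well past its expected absorption time $3(n-2)$---extracted by different analytic tools.
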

\begin{proof}
It is not difficult to show, either directly or using generating function, that the solution of the recursion~\eqref{deff*} with the given initial condition, is given for $n\ge2$ and $\X\ge7$ by
	\begin{align}\label{expressionf*1}
		f(\X,n) \= \frac{1}{\pi} +\frac{2(\X-7)}{3(\X-1)} -\sum_{k=8}^{\X} \frac{2(k-7)}{3(k-1)} P(n,\X-k)\,,
	\end{align}
	where the coefficients $P(n,j)$ are defined by
	\begin{align}
		\Bigl(\frac{3-\sqrt{9-8t^2}}{2t}\Bigr)^{n-2} \;=:\;
		\sum_{j=0}^{\infty} P(n,j) \, t^j\,.
	\end{align}
These coefficients can be estimated by the residue theorem:
		\begin{align}
			&P(n,j)= \frac{1}{2\pi i}
			\int_{C}t^{-j-1}\biggl(\frac{3-\sqrt{9-8t^2}}{2t}\biggr)^{n-2} \mathrm{d} t\;\leq\;  1.05^{-j} \times 1.23^{n-2}\,,\nn
		\end{align}
where we have taken the contour $C$ to be the circle $|t|=1.05<\sqrt{9/8}$, on which
we have $\biggl|\frac{3-\sqrt{9-8t^2}}{2t}\biggr|<1.23$.
Let us now estimate the right-hand side of~\eqref{expressionf*1}:
\begin{align}\label{Pestimate1}
		\sum_{k=8}^{\X} \frac{2(k-7)}{3(k-1)}P(n,\X-k)
		\;\geq\; \sum_{k=M+1}^{\X} \frac{2(k-7)}{3(k-1)}P(n,\X-k)
		\;\geq\; \frac{2(M-6)}{3M}\sum_{j=0}^{\X-M-1}P(n,j)\,,
\end{align}
for any $7\leq M\leq \X$.	Taking $M=[0.1\X]$ and using $n\leq \frac{\X}5$, we have
\begin{align}\label{Pestimate2}
		\sum_{j=\X-M}^{\infty} P(n,j) \leq 14\times 1.05^{-\X+M} \times 1.23^{n} \leq 14\times 0.998^\X\,.
\end{align}
By using $\sum_{j=0}^{\infty} P(n,j)=1$ and concluding formula~\eqref{expressionf*1} and the estimates~\eqref{Pestimate1},~\eqref{Pestimate2},  we finish the proof of the lemma.
\end{proof}	

The significance of the function $f(\X,n)$ is given by the following important lemma.
\begin{lemma}\label{lemineqthetaXn}
For $n\ge1$, $\X\ge1$, the numbers $\theta_{\X,n}$ have the upper bound
\begin{align}\label{ineqthetaXn}
\theta_{\X,n}\leq f(\X,n)\,.
\end{align}
\end{lemma}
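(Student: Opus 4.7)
The plan is to prove \eqref{ineqthetaXn} by strong induction on $\X$, with the base cases being $\X\le 7$ and $n\in\{1,2\}$. In these cases $f(\X,n)=1/\pi$, so the inequality amounts to the a priori bound $C(\dd)\le 1/\pi$. I would prove the $n=1$ case from the closed formula~\eqref{smallest} combined with a Wallis-type inequality on $\binom{2g}{g}$, and the $n=2$ case from the explicit two-point formula~\eqref{TwoPointsC} together with the antisymmetry of $(g-2h)F_hF_{g-h}$ under $h\mapsto g-h$ (which reduces the bound to an estimate for $\tfrac12\sum_h|g-2h|F_hF_{g-h}$). The finitely many cases with $\X\le 7$ are to be verified directly from Table~\ref{tablenormalizednumbers}.

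For the inductive step (so $\X\ge 8$ and $n\ge 3$), let $\dd$ achieve $\theta_{\X,n}$. I first dispose of the subcase when some $d_j=0$: by~\eqref{CC0} one has $C(\dd)=C(\dd')$ for a $\dd'$ with $X(\dd')<\X$ and $\ell(\dd')<n$, so the induction hypothesis applies, and the monotonicity $f(\X-1,n-1)\le f(\X,n)$ (immediate from~\eqref{deff*} together with monotonicity of $f$ in $n$, iterated if necessary) finishes this subcase. Otherwise every $d_j\ge 1$, and by the symmetry of $C$ in its arguments I may relabel so that $d_1=\min_j d_j$. Applying the DVV-type recursion~\eqref{CVirasoro2} and bounding its three groups of terms using the induction hypothesis, Lemma~\ref{estimatethird}, and the a priori bound $f\le 1$ (so that each convolution factor $C(a,\dd_I)C(b,\dd_J)\le 1$), one arrives at
\begin{equation*}
C(\dd)\,\le\,\frac{\X-2d_1-1}{\X-1}\,f(\X-1,n-1)\+\frac{2d_1}{\X-1}\,f(\X-1,n+1)\+\frac{4}{(\X-1)(\X-2)}.
\end{equation*}

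To close the induction, note that since $d_1$ is minimal among $n\ge 3$ positive entries summing to at most $\X$, one has $2d_1+1\le \X/n\le \X/3$, so $p:=2d_1/(\X-1)\le 1/3$. By monotonicity of $f$ in $n$, the affine map $p\mapsto (1-p)\thin f(\X-1,n-1)+p\thin f(\X-1,n+1)$ is non-decreasing, so enlarging $p$ to $1/3$ only enlarges the right-hand side; this matches exactly the defining recursion~\eqref{deff*} for $f(\X,n)$.

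The main obstacle I anticipate is the $n=2$ base case: the bound $C(d_1,d_2)\le 1/\pi$ cannot be extracted from the DVV recursion alone, since that would bring in $f(\X-1,3)$, which in general already exceeds $1/\pi$ (for instance $f(8,3)\approx 0.41>1/\pi$). One therefore genuinely needs the closed two-point formula~\eqref{TwoPointsC}, and the antisymmetry argument must be executed with care to produce a sharp enough estimate. By contrast, the $n=1$ case and the inductive step proper are conceptually straightforward once the relevant monotonicities of $f$ are in place.
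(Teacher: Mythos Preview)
Your overall strategy coincides with the paper's: strong induction on~$\X$ with base cases $\X\le 7$ and $n\in\{1,2\}$, and an inductive step for $n\ge 3$, $\X\ge 8$ that splits off the zero-entry subcase via~\eqref{CC0} and otherwise applies the DVV recursion~\eqref{CVirasoro2}, bounding the linear terms by $f(\X-1,n\pm1)$, the quadratic terms by Lemma~\ref{estimatethird} together with $f\le 1$, and closing via $2d_1/(\X-1)\le 1/3$ and the monotonicity of~$f$ in~$n$. All of this matches the paper verbatim.

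The only substantive difference is the $n=2$ base case. The paper does not use antisymmetry; instead it observes that $F_{d+1}/F_d$ is increasing in~$d$, so $F_hF_{g-h}$ is decreasing in~$h$ on $0\le h\le g/2$, whence each term $(g-2h)F_hF_{g-h}$ with $h\ge 1$ is dominated by $(g-2)F_1F_{g-1}$; summing the at most $(g-1)/2$ such terms and adding the $h=0$ term gives
\[
C(d_1,d_2)\ \le\ \frac{2^{2g}}{(2g)!}\Bigl(gF_g+\tfrac{(g-1)(g-2)}{2}F_1F_{g-1}\Bigr)\ =\ \frac{(16g^2-23g+9)\,\Gamma(g-\tfrac12)^2}{8\pi(2g-1)\,\Gamma(g)^2}\ \le\ \frac{1}{\pi}.
\]
Your antisymmetry reduction to $\tfrac12\sum_h|g-2h|F_hF_{g-h}$ is correct as an upper bound for the partial sum, but you have not indicated how to push it below $(2g)!/(\pi\,2^{2g})$; the paper's term-by-term bound via the monotonicity of $F_{d+1}/F_d$ is the missing ingredient that makes this step go through cleanly.
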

\begin{proof}
For $1\leq \X\leq 7$, we check from Table~\ref{tablenormalizednumbers} that inequality~\eqref{ineqthetaXn} holds. For $n=1$, inequality~\eqref{ineqthetaXn} is implied by~\eqref{smallest}. For $n=2$, let $0\leq d_1\leq d_2$ and $g=d_1+d_2+1$. Using~\eqref{TwoPointsC} we get
\begin{align}
	&C(d_1,d_2) \= \frac{2^{2g}}{(2g)!} \, \sum_{h=0}^{d_1}(g-2h)\,F_h\,F_{g-h} \nn\\
	&\;\leq\; \frac{2^{2g}}{(2g)!} \,\Bigl(gF_g +\frac{(g-1)(g-2)}{2}F_1F_{g-1}\Bigr)\= \frac{(16 g^2-23g+9)\, \Gamma (g-\frac{1}{2})^2}{8\,\pi\, (2 g-1)\, \Gamma (g)^2}
	\;\leq\; \frac{1}{\pi}\,,
\end{align}
where in the first inequality we used that $F_{d+1}/F_d$ is monotone increasing for $d\ge0$.

Now consider the case that $n\ge3$, $\X\ge8$. Let us prove inequality~\eqref{ineqthetaXn} by induction on~$\X$. 
For every $\dd=(d_1,\dots,d_{n})\in \bigl(\mathbb{Z}_{\ge0}\bigr)^{n}$ satisfying $X(\dd)=\X$, 
assume $\dd\in \bigl(\mathbb{Z}_{\ge1}\bigr)^{n}$ and without loss of generality assume $d_1=\min \{d_j\}$. Then
by applying Lemma~\ref{estimatethird} to~\eqref{CVirasoro2} and by induction, we obtain
\begin{align}\label{thetag,nbound}
C(\dd)\;
\leq\; \Bigl(1-\frac{2d_1}{\X-1}\Bigr) \, f(\X-1,n-1)+\frac{2d_1}{\X-1} \, f(\X-1,n+1)+\frac{4}{(\X-1)(\X-2)}\,,
\end{align}
where we have used the fact that $f(\X,n)$ is bounded by~1.
This gives
\begin{align}
C(\dd)\;\leq\; \frac{2}3 \, f(\X-1,n-1)+\frac{1}{3} \, f(\X-1,n+1)+\frac{4}{(\X-1)(\X-2)}\= f(\X,n)\,,\nn
\end{align}
where we have used $n\ge3$, $d_1\leq \frac{|\dd|}n= \frac{\X-n}{2n}$, and the fact that the function~$f(\X,n)$ is increasing with respect to~$n$.
For the case that some of $d_j$ equal zero, by~\eqref{CC0} and by induction we have
\begin{align}
C(\dd) \;\leq\; f(\X-1,n-1) \,,
\end{align}
where the right-hand side is less than $f(\X,n)$ because of~\eqref{deff*} and again the monotonicity of $f(\X,n)$.
Combining both cases, we finish the  proof of~\eqref{ineqthetaXn}.
\end{proof}

We are ready to prove Theorem~\ref{thmuniformasymptotics}.

\begin{proof}[Proof of Theorem~\ref{thmuniformasymptotics}]
By Lemma~\ref{lemmalowerbound} and~\eqref{defthetagn}, we have the lower and upper bound
\begin{align}\label{lowerupperbound}
C(|\dd|)\leq C(\dd) \leq \theta_{X(\dd),n}\,,
\end{align}
for every $n\ge1$, $\dd\in(\mathbb{Z}_{\ge0})^n$. We know from~\eqref{smallest} that the lower bound equals~$\frac{1}{\pi}+O(\frac{1}{g(\dd)})$ with an absolute constant in $O(\frac{1}{g(\dd)})$. For the upper bound, we know from Lemma~\ref{lemsupnthetag,n} 
that when $n\leq \frac{X(\dd)}{5}$, $\theta_{\X,n}$ is bounded by $\frac{1}{\pi}+O(\frac{1}{\X})$ with an absolute constant in~$O(\frac{1}{\X})$. 
Consider the case that $\dd\in(\mathbb{Z}_{\ge0})^n$ with $\frac{X(\dd)}5<n\leq \frac{X(\dd)}3$. Assume $\dd\in(\mathbb{Z}_{\ge1})^n$. Then there must exist some $j$ such that $d_j=1$. Without loss of generality, assuming that $d_1$ equals~1, then recursion~\eqref{CVirasoro2} reads
\begin{align}
&C(1,d_2,\dots,d_n) \=  \sum_{j=2}^n\, \frac{2d_j+1}{X(\dd)-1} \, C(d_2,\dots,d_j+1,\dots,d_n) \nn\\
&  \+\frac{2}{X(\dd)-1} \, C(d_2,\dots,d_n) \+\sum_{I\sqcup J=\{2,\dots,n\}}
\frac{\bigl(X(0,\mathbf{d}_{I})-1\bigr)!\,\bigl(X(0,\mathbf{d}_{J})-1\bigr)!}{(X(\dd)-1)!} \, C(\dd_I)\,C(\dd_J). \label{C1d}
\end{align}
By applying Lemma~\ref{estimatethird} and by taking maximum in both sides of~\eqref{C1d}, we get
\begin{align}
C(\dd) \,\leq \,  \theta_{X(\dd)-1,n-1} \+ \frac{4}{(X(\dd)-1)(X(\dd)-2)}\,.\label{Cddupperbound}
\end{align}
When some $d_j$ is~0, the inequality~\eqref{Cddupperbound} is still true according to~\eqref{CC0}, so we get
\begin{align}\label{thetaX,nest}
\theta_{\X,n} \,\leq\,  \theta_{\X-1,n-1} \+ \frac{4}{(\X-1)(\X-2)}\,,
\end{align}
for every $\frac{\X}{5}< n\leq \frac{\X}{3}$.
Writing this as 
\begin{align}
\theta_{\X,n}+\frac{4}{\X-1} \,\leq\, \theta_{\X-1,n-1}+\frac{4}{\X-2}\,,
\end{align}
and iterating~$t$ times we find 
\begin{align}
\theta_{\X,n} \leq \theta_{\X-t,n-t} - \frac{4}{\X-1} \+ \frac{4}{\X-1-t} \leq f(\X-t,n-t) - \frac{4}{\X-1} \+ \frac{4}{\X-1-t}
\end{align}
for any $t\leq [\frac{5n-\X+1}4]$. Applying this with $t=[\frac{5n-\X+1}4]$ and
using Lemma~\ref{lemsupnthetag,n} we obtain that $\theta_{\X,n}$ is bounded by $\frac{1}{\pi}+O(\frac{1}{\X})$ uniformly when $\frac{\X}{5}< n\leq \frac{\X}{3}$. This together with~\eqref{lowerupperbound} implies that 
formula~\eqref{STRONG} holds for $\frac{X(\dd)}{5}\leq  n\leq \frac{X(\dd)}{3}$. 
For $n>\frac{X(\dd)}{3}$, we have $\theta_{X(\dd),n}=\theta_{X(\dd)-1,n-1}$ 
(indeed, since $C(\dd)$ is unchanged by removing any~0 argument, 
we have by definition $\theta_{\X,n}=\theta_{\X-1,n-1}$ for $n > \frac{\X}3$), which implies that formula~\eqref{STRONG} still holds. Combining all three cases, we obtain the statement of Theorem~\ref{thmuniformasymptotics}.
\end{proof}

\begin{remark}
Although our proof for BGW numbers 
 is similar to Aggarwal's for 
 Witten's intersection numbers~\cite{Agg}, 
 we have made several improvements and simplifications.
 For example, the technique of random walks used in~\cite{Agg} is avoid here, and our estimates are completely uniform instead of 
 requiring $n=o(\sqrt{g})$ in~\cite{Agg}. Actually, as it was shown in~\cite{DGZZ20}, it 
 is not possible to extend the asymptotics of the normalized Witten's intersection 
 numbers \cite{Agg,DGZZ20,DGZZ22,GY} beyond the range $n=o(\sqrt{g})$. 
 We hope to generalize Theorem~\ref{thmuniformasymptotics} to
 Witten's intersection numbers with a better normalization.
\end{remark}

\section{Polynomiality in large genus} \label{secproofofpolynomiality} 
In this section, we prove Theorem~\ref{thmpoly} by using the recursion~\eqref{CVirasoro2}. 
\begin{proof}[Proof of Theorem~\ref{thmpoly}]
We first allow the fixed $\dd'=(d_1,\dots,d_{n-1})\in(\mathbb{Z}_{\ge0})^{n-1}$. 
 Write $\dd=(\dd',d_n)$ with $d_n\ge0$.
By using~\eqref{Cdexpansionincluding0} and Stirling's formula we have
\begin{align}\label{Cdexpansion2}
	C(\dd) \= \frac1\pi \,\sum_{k=0}^{\infty} \frac{C_k(\dd')}{X(\dd)^k}\,, \qquad \text{as } d_n \to \infty\,,
\end{align}
where $C_k$ are functions of~$\dd'$.
By using the recursion~\eqref{CVirasoro2} and by performing Laurent expansions, we obtain 
\begin{align}
&C_k(d,\dd')-C_k(\dd') \= -\sum_{l=1}^{k-1}(-1)^{k-l}\binom{k-1}{l-1} \, C_{l}(d,\dd') \nn\\
& + \sum_{j=1}^{n-1}(2d_j+1)\, \bigl(C_{k-1}(d_1,\dots,d_j+d,\dots,d_{n-1})-C_{k-1}(\dd')\bigr) \nn\\
& + \sum_{\substack{a,b\ge0\\ a+b=d-1}}
\Biggl[2 \, \big(C_{k-1}(a,b,\dd')-C_{k-1}(\dd')\bigr) + \sum_{I\sqcup J=\{1,\dots,n-1\}} \sum_{l=0}^{k-2} \, \mathfrak{a}_{(a,\dd'_I),k,l} \, 
C_l(b,\dd'_{J})  \Biggr]\,,  \label{Chatkrecursion}
\end{align}
where $d\ge0$ 
and $\mathfrak{a}_{\mathbf{w},k,l}$ are numbers defined by
\begin{align}
&\mathfrak{a}_{\mathbf{w},k,l} \: \sum_{u=l}^{k-2} \biggl(4^{|\mathbf{w}|+1} \,
\binom{u-1}{l-1} \,X(\mathbf{w})^{u-l} \,S(k-u-1,X(\mathbf{w}))\, B\bigl(\mathbf{w}\bigr)\biggr)\,.\nn
\end{align}
Here $B(\mathbf{w})$ is defined in~\eqref{tauU} and $S(n,k)$ denotes the Stirling number of the second kind. Write
\begin{align}\label{Ck=ck}
C_k(\dd') \;=:\; \tilde{c}_k(p_0(\dd'),p_1(\dd'),\dots)\,,
\end{align}
where $p_r(\dd')$ denotes the multiplicity of~$r$ in~$\dd'$.
This defines functions $\tilde c_k(\mathbf{p})$, $k\ge0$, where ${\bf p}=(p_0,p_1,p_2,\dots)$.
Then formula~\eqref{Chatkrecursion} becomes
\begin{align}
& \tilde c_k(\mathbf{p}+\mathbf{e}_{d}) - \tilde c_k(\mathbf{p})= -\sum_{l=1}^{k-1}\left(-1\right)^{k-l}\binom{k-1}{l-1} \, \tilde c_{l}(\mathbf{p}+\mathbf{e}_{d}) \nn\\
& +\sum_{i\ge0}(2i+1)\,p_i \, \bigl(\tilde c_{k-1}(\mathbf{p}-\mathbf{e}_i+\mathbf{e}_{i+d})-\tilde c_{k-1}(\mathbf{p})\bigr) \nn\\
&  +\sum_{\substack{a,b\ge0\\ a+b=d-1}}
\Biggl[2 \, \sum_{l=0}^{k-1} \big(\tilde c_{k-1}(\mathbf{p}+\mathbf{e}_a+\mathbf{e}_b)-\tilde c_{k-1}(\mathbf{p})\bigr)\nn\\
&\quad + \sum_{\substack{\mathcal{E}(\mathbf{t}+\mathbf{e}_a)\leq k-1 \\ 0\leq t_r\leq p_r, \,r\ge0}} \sum_{l=0}^{k-2} \, \biggl(
\tilde c_l(\mathbf{p}-\mathbf{t}+\mathbf{e}_b) \, \alpha_{\mathbf{t}+\mathbf{e}_a,k,l} \,\prod_{i\ge0}\binom{p_i}{t_i}\biggr)\Biggr]\,, \quad d\ge0\,,\label{hatCkrecursion}
\end{align}
where $\mathcal{E}({\bf t})=\sum_{j=0}^\infty (2j+1)t_j$, $\alpha_{\mathbf{t},k,l}=\mathfrak{a}_{(0^{t_0}1^{t_1}2^{t_2}\cdots),k,l}$, and $\mathbf{e}_d$ denotes $(0,\dots,0,1,0,0,\dots)$ with ``1" appearing in the $(d+1)$th place. 

Let us now prove by induction that $\tilde c_k(\mathbf{p})$, $k\ge0$, belong to $\mathbb{Q}[p_0,p_1,\dots]$ and satisfy the degree estimates
\begin{equation}\label{degreeest1}
\deg \, \tilde c_k({\bf p}) \,\leq\, k-1\,,\quad k\ge1\,,
\end{equation} 
under the degree assignments $\deg p_d=2d+1$, $d\ge0$. For $k=0$, by using Theorem~\ref{thmuniformasymptotics} 
we know that $\tilde c_0(\mathbf{p})\equiv1$. Assume that for $1\leq l\leq k-1$, $\tilde c_l({\bf p})\in\mathbb{Q}\left[p_0,p_1,\dots\right]$ are polynomials satisfying 
$\deg \, \tilde c_l({\bf p}) \leq l-1$. 
Then for $k$ and for every $d\ge0$, the RHS of equation~\eqref{hatCkrecursion} are polynomials in~$p_0,\dots,p_{[(k-3)/2]}$.
Moreover, by the inductive assumption these polynomials are independent of~$d$ for every $d\ge k+1$, i.e.,
\begin{align}\label{diffck}
\tilde c_k(\mathbf{p}+\mathbf{e}_d) - \tilde c_k(\mathbf{p}) \= \left\{\begin{array}{ll}
f_{d}(p_0,\dots,p_{[(k-3)/2]})\,, \quad &d\le k\,,\\
g(p_0,\dots,p_{[(k-3)/2]})\,, \quad &d\ge k+1
\end{array}\right.
\end{align}
for some $f_d$ ($d\leq k$), $g$ belonging to $\mathbb{Q}[p_0,\dots,p_{[(k-3)/2]}]$. The compatibility of~\eqref{diffck} implies that $g(p_0,\dots,p_{[(k-3)/2]})\equiv A$ is a constant.  Solving~\eqref{diffck} we obtain that
$\tilde c_k$ have the form
\begin{align}\label{ckform1}
\tilde c_k(\mathbf{p}) \= h(p_0,\dots,p_{k}) \+ A \, n'(\mathbf{p})\,,
\end{align} 
where $h\in \mathbb{Q}[p_0,\dots,p_k]$ and $n'(\mathbf{p}):=\sum_{i\ge0}p_i$.
Now we aim to show that $A=0$. Consider equation~\eqref{hatCkrecursion} with $k$ replaced by $k+1$. Using a similar analysis and using~\eqref{ckform1}, we obtain that for every $d\ge 2k+2$,
\begin{align}\label{ck+1diff}
\tilde c_{k+1}(\mathbf{p}+\mathbf{e}_d) - \tilde c_{k+1}(\mathbf{p}) \= 4\, A \, d \+ A'\, n'(\mathbf{p}) \+ s(p_0,\dots,p_k)\,,
\end{align} 
where $A'\in \mathbb{Q}$ is a constant, and $s(p_0,\dots,p_k)$ is some polynomial in $\mathbb{Q}[p_0,\dots,p_k]$. 
This contradicts~\eqref{TwoPointsC} unless $A=0$. 
Therefore, 
\begin{align}
\tilde c_k(\mathbf{p})\in\mathbb{Q}[p_0,\dots,p_{k}].
\end{align}
Then taking $d\ge k+1$ in equation~\eqref{hatCkrecursion} gives
\begin{align}
&0=-\sum_{l=1}^{k-1}(-1)^{k-l}\binom{k-1}{l-1} \, \tilde c_{l}(\mathbf{p})  
\+\sum_{i\ge0}(2i+1)\,p_i \, \bigl(\tilde c_{k-1}(\mathbf{p}-\mathbf{e}_i)-\tilde c_{k-1}(\mathbf{p})\bigr) \nn\\
& +  \sum_{a=0}^{[\frac{k-3}2]}\Biggl(
4  \, \bigl(\tilde c_{k-1}(\mathbf{p}+\mathbf{e}_a)-\tilde c_{k-1}(\mathbf{p})\bigr) 
+ \sum_{\substack{\mathcal{E}(\mathbf{t}+\mathbf{e}_a)\leq k-1 \\ 0\leq t_r\leq p_r, \, r\ge0}} \sum_{l=0}^{k-2} \, \bigg(
\tilde c_l(\mathbf{p}-\mathbf{t}) \, \alpha_{\mathbf{t}+\mathbf{e}_a,k,l} \,\prod_{i\ge0}\binom{p_i}{t_i}\bigg)\Biggr).\label{hatCkidentity}
\end{align}
Using~\eqref{hatCkidentity} and~\eqref{hatCkrecursion}, we obtain
\begin{align}
&\Delta_{p_d} \tilde c_{k}(\mathbf{p})\= \m\sum_{l=1}^{k-1}(-1)^{k-l}\binom{k-1}{l-1} \, \Delta_{p_d} \tilde c_{l}(\mathbf{p}) 
\+\sum_{i\ge0}(2i+1)\,p_i \, \Delta_{p_{i+d}} \tilde c_{k-1}(\mathbf{p}-\mathbf{e}_i) \nn\\
& + \, 2 \, \sum_{a=0}^{d-1}
 \Delta_{p_a}\Delta_{p_{d-1-a}} \tilde c_{k-1}(\mathbf{p})
\, - \, 4 \, \sum_{a=d}^{[(k-3)/2]}  \Delta_{p_a} \tilde c_{k-1}(\mathbf{p}) \nn\\
&+ \, \sum_{a=0}^{d-1}\sum_{\substack{\mathcal{E}(\mathbf{t}+\mathbf{e}_a)\leq k-1 \\ 0\leq t_r\leq p_r\,(r\ge0)}} \sum_{l=0}^{k-2} \biggl(
\, \Delta_{p_{d-1-a}} \tilde c_l(\mathbf{p}-\mathbf{t}) \, \alpha_{\mathbf{t}+\mathbf{e}_a,k,l} \,\prod_{i\ge0}\binom{p_i}{t_i}\biggr) \nn\\
& +\, \sum_{a=d}^{[(k-3)/2]} 
\sum_{\substack{\mathcal{E}(\mathbf{t}+\mathbf{e}_a)\leq k-1 \\ 0\leq t_r\leq p_r\,(r\ge0)}} 
\sum_{l=0}^{k-2} \, \biggl(\tilde c_l(\mathbf{p}-\mathbf{t}) \, \alpha_{\mathbf{t}+\mathbf{e}_a,k,l} \,\prod_{i\ge0}\binom{p_i}{t_i}\biggr)\,, \quad d\ge0 \,. \label{hatCk2}
\end{align}
We find that each term of the RHS of~\eqref{hatCk2} is of degree less than or equal to $k-2-2d$ for every $d\ge0$, which implies 
$\deg \tilde c_k\leq k-1$. In particular, $\tilde c_k$ is a polynomial that only depends on $p_0,\dots,p_{[k/2]-1}$. 

Now restrict to the case when $\dd'$ is a fixed partition. 
From~\eqref{Cdexpansion2},~\eqref{degreeest} we know that 
\begin{align}
&\widehat{C}(\dd) \,\sim\, \sum_{k=0}^{\infty}\frac{\widehat{C}_k(\dd')}{X(\dd)^k}\,, 
  \qquad d_n\to\infty\,,\label{Chatexpansion}
\end{align}
where $\widehat{C}_k$ are functions of~$\dd'$ with $\widehat{C}_0\equiv1$.
Define $\widehat{c}_k(p_1,p_2,\dots)\in \mathbb{Q}[p_1,p_2,\dots]$, $k\ge0$, via
\begin{align}
\gamma(X)\,\sum_{k=0}^{\infty} \frac{\widehat{c}_k(p_1,p_2,\dots)}{X^k} \= \sum_{k=0}^{\infty} \frac{\tilde c_k(0,p_1,p_2,\dots)}{X^k}\,,
\end{align}
where the left-hand side is understood as a power series in~$X^{-1}$. 
It then follows from $\deg \tilde c_k\leq k-1$ that 
$\deg \widehat{c}_k\leq k-1$.
Using~\eqref{Chatexpansion},~\eqref{Cdexpansion2},~\eqref{Ck=ck}, we know that
\begin{align}
\widehat{C}_k(\dd') \=\widehat{c}_k(p_1(\dd'),p_2(\dd'),\dots)\,,
\end{align}
for all~$\dd'$. The statement that $\widehat{c}_k(0,0,\dots)=0$ follows from the fact that $C(d)=\gamma(2d+1)$. This finishes the proof of Theorem~\ref{thmpoly}.
\end{proof}
\begin{remark}
Formula~\eqref{Chatkrecursion} is analogous to a formula given in~\cite[Corollary~3.6]{LX}.
\end{remark}

For $\dd=(d_1,\dots,d_n)\in (\mathbb{Z}_{\ge0})^n$, we extend the definition of $\widehat{C}(\dd)$ in~\eqref{normalizehatc61} by
\begin{align}\label{defChatextend}
\widehat{C}(\dd) \: \frac{C(\dd)}{\gamma(X(\dd)-p_0(\dd))}\,.
\end{align}
Then the following corollary easily follows from Theorem~\ref{thmpoly}.
\begin{cor}\label{corChatdexpansion}
For any fixed $n\ge1$ and fixed $\dd'=(d_1,\dots,d_{n-1})\in (\mathbb{Z}_{\ge0})^{n-1}$, we have
\begin{align}
\widehat{C}(\dd) \;\sim\; \sum_{k=0}^{\infty} \frac{\widehat{c}_k(p_1(\dd'),p_2(\dd'),\dots)}{(X(\dd)-p_0(\dd'))^k}\,, \qquad X(\dd)\to\infty\,,
\end{align} 
where $\dd=(\dd',d_n)$, and $\widehat{c}_k(p_1,p_2,\dots)$ are the same polynomials as those in Theorem~\ref{thmpoly}.
\end{cor}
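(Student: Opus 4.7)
The plan is to reduce the extended statement to Theorem~\ref{thmpoly} by stripping the zero entries from $\dd'$ and tracking exactly how the normalization changes.

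First I would fix $\dd'=(d_1,\dots,d_{n-1})\in(\mathbb{Z}_{\ge0})^{n-1}$ and let $\dd''$ denote the subtuple of $\dd'$ obtained by removing all entries equal to $0$. Thus $\dd''\in(\mathbb{Z}_{\ge1})^{n-1-p_0(\dd')}$ and $p_r(\dd'')=p_r(\dd')$ for every $r\ge1$. Since we are sending $d_n\to\infty$, we may assume $d_n\ge1$, so $p_0(\dd)=p_0(\dd')$ and, in particular, the tuple $(\dd'',d_n)$ has no zero entries. Two elementary book-keeping identities then hold: $X(\dd'',d_n)=X(\dd)-p_0(\dd')$, and, by iterating~\eqref{CC0}, $C(\dd',d_n)=C(\dd'',d_n)$.

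Next I would rewrite $\widehat{C}(\dd)$ in terms of the original (unextended) normalization. By the extended definition~\eqref{defChatextend} and the identities above,
\begin{equation*}
\widehat{C}(\dd) \= \frac{C(\dd',d_n)}{\gamma(X(\dd)-p_0(\dd'))} \= \frac{C(\dd'',d_n)}{\gamma(X(\dd'',d_n))}\,,
\end{equation*}
and the right-hand side is exactly $\widehat{C}(\dd'',d_n)$ in the sense of~\eqref{normalizehatc61}, since $(\dd'',d_n)\in(\mathbb{Z}_{\ge1})^{n-p_0(\dd')}$. Theorem~\ref{thmpoly} then applies to the fixed all-positive tuple $\dd''$ and yields
\begin{equation*}
\widehat{C}(\dd'',d_n) \;\sim\; \sum_{k=0}^{\infty}\frac{\widehat{c}_k(p_1(\dd''),p_2(\dd''),\dots)}{X(\dd'',d_n)^k}\,, \qquad d_n\to\infty.
\end{equation*}

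Substituting the identifications $p_r(\dd'')=p_r(\dd')$ $(r\ge1)$ and $X(\dd'',d_n)=X(\dd)-p_0(\dd')$ into the right-hand side converts this into the asymptotic expansion stated in the corollary. Since $X(\dd)\to\infty$ is equivalent to $d_n\to\infty$ for fixed $\dd'$, the conclusion follows.

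There is no real obstacle here: the content of the corollary is entirely in the extension~\eqref{defChatextend} of $\widehat{C}$, and that extension is tailored precisely so that removing zeros from $\dd'$ preserves $\widehat{C}$, turning the statement into a direct corollary of Theorem~\ref{thmpoly}. The only thing one must verify carefully is the bookkeeping $X(\dd'',d_n)=X(\dd)-p_0(\dd')$ together with $p_0(\dd)=p_0(\dd')$, both of which are immediate from the definition $X(\dd)=2|\dd|+n$ and $d_n\ge1$.
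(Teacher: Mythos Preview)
Your proof is correct and is precisely the argument the paper has in mind: the corollary is stated immediately after the extended definition~\eqref{defChatextend} with the remark that it ``easily follows from Theorem~\ref{thmpoly},'' and your zero-stripping reduction via~\eqref{CC0} together with the bookkeeping $X(\dd'',d_n)=X(\dd)-p_0(\dd')$ is exactly that easy deduction.
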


We note that, since for fixed $n\ge1$ and fixed $d_1,\dots,d_{n-1}\ge0$, 
$C(\mathbf{d})/C(|\mathbf{d}|)$ is a rational function of~$X(\dd)$ whose asymptotic expansion is  convergent, 
the polynomials $\widehat{c}_k(p_1,p_2,\dots)$ , $k\ge0$, 
contain all information of BGW numbers. 
We provide in Table~\ref{tablepoly} 
	some explicit values of $\widehat{c}_k(p_1,p_2,\dots)$, $c_k(p_1,p_2,\dots)$.
\begin{table}[h!]
	\begin{center}
		\renewcommand\arraystretch{1.3}
		\tabcolsep=0.6cm
		\begin{tabular}{|c|c|c|}
			\hline
			$k$ &  $c_{k}(p_1,p_2,\dots)$ & $\widehat{c}_{k}(p_1,p_2,\dots)$ \\ 
			\hlinew{1.1pt}
			0 & 1 & 1 \\
			\hline
			1 & $-\frac{1}{2}$ & 0 \\ 
			\hline
			2 & $\frac{5}{8}$ & 0 \\
			\hline
			3 & $-\frac{11}{16}$ & 0 \\
			\hline
			4 & $\frac{83}{128}-\frac{27}{8}\,p_1$ & $-\frac{27}{8}\,p_1$ \\
			\hline
			5 & $-\frac{143}{256}-\frac{81}{16}\,p_1$ & $-\frac{27}{4}\,p_1$\\
			\hline
			6 & $\frac{625}{1024}-\frac{639}{64}\,p_1-\frac{1125}{16}\,p_2$ & $-\frac{45}4 \, p_1 - \frac{1125}{16} \, p_2$ \\
			\hline
			7 & $-\frac{1843}{2048}+\frac{25533}{128}p_1 -\frac{1701}{8}p_1^2-\frac{19125}{32}p_2$ & $\frac{783}4 \, p_1 - \frac{1701}8 \, p_1^2 - \frac{10125}{16} \, p_2$ \\
			\hline
			%				8 & $\frac{24323}{32768} +\frac{2912031}{1024}p_1-\frac{366687}{128} p_1^2 -\frac{111375}{32} p_2
			%				-\frac{385875}{128} p_3$ & $\frac{188559}{64} \, p_1 - \frac{380295}{128} \, p_1^2 - \frac{480375}{128} \, p_2 - \frac{385875}{128} p_3$\\
			%				\hline
		\end{tabular}
	\end{center}
	\caption{Expressions for $c_{k}(p_1,p_2,\dots)$ and $\widehat{c}_{k}(p_1,p_2,\dots)$ with $k\leq7$}
	\label{tablepoly}
\end{table}

Similar to~\eqref{Chatd1dnuniformindwithout0},  
for fixed $L\ge0$ and fixed $n\ge1$, and for $\dd=(d_1,\dots,d_n)\in(\ZZ_{\ge0})^n$,
\begin{align}\label{Chatd1dnuniformind}
	\widehat{C}(\dd) \= \sum_{k=0}^{L-1} \frac{\widehat{c}_{k}(p_1(\dd),p_2(\dd),\dots)}{(X(\mathbf{d})-p_0(\dd))^k} + O\biggl(\frac{1}{(X(\mathbf{d})-p_0(\dd))^{L}}\biggr)\,,  \qquad g(\mathbf{d})\to\infty\,,
\end{align}
where the implied O-constant 
 only depends on~$n$ and~$L$.

\begin{remark}\label{remarkgn}
We have the following conjectural statement, which is stronger than~\eqref{Chatd1dnuniformind}, that
for any fixed $L\ge0$, 
\begin{align}\label{Chatd1dnuniformindn1118}
\widehat{C}(\dd) \= \sum_{k=0}^{L-1}\frac{\widehat{c}_k(p_1(\dd),p_2(\dd),\dots)}{(X(\mathbf{d})-p_0(\dd))^k} 
\+ O\biggl(\frac{|\widehat{c}_{L}(p_1(\dd),p_2(\dd),\dots)|+1}{(X(\mathbf{d})-p_0(\dd))^{L}}\biggr)\,,
\end{align}
as $g(\dd)\to\infty$,
where the implied constant only depends on~$L$.
In terms of~$C(\dd)$, this conjecture states that for any fixed $L\ge0$, 
\begin{align}\label{Chatd1dnuniformindn0808}
C(\dd)= \frac{1}{\pi}\sum_{k=0}^{L-1}\frac{c_k(p_1(\dd),p_2(\dd),\dots)}{(X(\mathbf{d})-p_0(\dd))^k} 
\+ O\biggl(\frac{|c_{L}(p_1(\dd),p_2(\dd),\dots)|+1}{(X(\mathbf{d})-p_0(\dd))^{L}}\biggr) \,,
\end{align}
as $g(\dd)\to\infty$, where the implied constant only depends on~$L$.
\end{remark}

We end this section by giving some information about the interval $I_{g,n}$ defined in the introduction.
Denote by $m(g,n)$ and $M(g,n)$ its endpoints, i.e., the minimum and maximum of all $C(\dd)$ with 
$\dd\in(\mathbb{Z}_{\ge1})^n$ and $g=|\dd|+1$. Conjecture~\ref{conjmonotoncity} implies
that $m(g,n)=C(1^{n-1},g-n)$ and $M(g,n)=C(d^p (d+1)^{n-p})$,
where $d =[\frac{g-1}n]$ and $p=(d+1)n-g+1$. Using~\eqref{Chatd1dnuniformindn1118}, we then find the conjectural asymptotic formulas
\begin{align}
& \gamma(2g-2+n) \m m(g,n) \= \frac{27\thin n}{8\thin\pi\,(2g-2+n)^4} \+ O\Bigl(\frac1{g^4}\Bigr)\,,  \label{asymgn}\\
& \gamma(2g-2+n) \m M(g,n) \= \frac{(2d+1)!!^3}{2^{d+1}\thin\pi\,(d+1)!}\,\frac{(d+1)\thin n-g}{(2g-2+n)^{2d+2}} 
    \+ O_L\Bigl(\frac1{g^{\min\{2d+2, L\}}}\Bigr)
	\label{asyMgn}\end{align} 
for any fixed $L$,
where $d =[\frac{g-1}n]$.
In particular, when $d=[\frac{g-1}n]$ is fixed, $M(g,n)$ differs from $\gamma(2g-2+n)$ 
by a quantity of the order of $g^{-2d-1}$. 
We are currently trying to prove some version of the above conjectural asymptotics
(which greatly refines Theorem~\ref{thmuniformasymptotics}) and hope to return to this later.

\section{Further asymptotic formulas}\label{sectwopoint}
In this section, we will 
give asymptotic formulas of another type for BGW numbers, including and based on a closed 
asymptotic formula for two-point BGW numbers. 

For every $d\ge1$, define the power series $W_d(X)\in X^{-2d-2}\,\mathbb{Q}[[X^{-1}]]$ by the formula 
\begin{align}\label{defWdX0702}
	1-\widehat{C}(d,\tfrac{X}2-1-d) \;\sim\; W_d(X)\,, \qquad {\rm as}~X\to\infty\,.
\end{align}
(The fact that $W_d(X)$ is well defined is because of Theorem~\ref{thmpoly}.)
In terms of the polynomials $\widehat{c}_k(\mathbf{p})$, we have
\begin{align}\label{defWdX0628}
W_d(X) \= -\sum_{k=2d+2}^{\infty} \frac{\widehat{c}_k(\mathbf{e}_d)}{X^k}\,.
\end{align}

Formula~\eqref{TwoPointsC} implies that, for every $d\ge2$,
\begin{align}\label{recWd}
W_{d-1}(X)-W_{d}(X) \= \frac{(2d-1)!!^3}{8^d \,d!} \,\frac{(\tfrac{X}2-2d)\,\Gamma\bigl(\frac{X+3}2\bigr)\,\Gamma\bigl(\frac{X+1}{2}-d\bigr)^3}{\Gamma\bigl(\frac{X}2+1\bigr)^3\, \Gamma\bigl(\frac{X}2+1-d\bigr)}\,, 
\end{align}
where the right-hand side is interpreted as a power series of $X^{-1}$ by Stirling's formula.  Together with the limiting condition $W_{\infty}(X)=0$, formula~\eqref{recWd} determines $W_d(X)$, $d\ge1$, completely, the first few terms being
\begin{align}
W_d(X) &\=  \frac{(2d+1)!!^3}{2^{d+1}\, (d+1)!}\,\Bigl(\frac{1}{X^{2d+2}}+\frac{(2d-1)(d+1)}{X^{2d+3}} \nn\\
&\qquad +\,\frac{(2d+3)(d+1)\bigl(6 d^3 + 7d^2 - 8 d+1\bigr)}{6 (d+2)\,X^{2d+4}}\,+\,\cdots\Bigr)\,.\label{WdXexpressioninpowers}
\end{align}

In the following proposition, we give an explicit formula for~$W_d(X)$.
\begin{prop}\label{propWdX0628}
The power series $W_d(X)$ for any fixed $d\ge1$ is given by
\beq W_d(X) \= \frac{(2d+1)!!^3}{2^{d+1}\, d!}  \sum_{j\ge1} \frac{A_j(d)}{d+j} \, \frac{1}{(X-2d-j+1)_{2d+2j}}\,.\label{expressionwdX0701}\eeq 
Here the inverse Pochhammer symbols $1/(X-2d-j+1)_{2d+2j}$ are interpreted as
elements of $Q[[1/X]]$, and $A_j(d)$ $(j\ge1)$ are polynomials defined by the asymptotic formula
\begin{align}
2^{-2d-4}\frac{\Gamma\big(\frac{X+1}2-d\big)^3\,\Gamma\big(\frac{X+3}2\big)}{\Gamma\big(\frac{X}2+2\big)^3\, \Gamma\big(\frac{X}2-d+1\big)} \;=\;  \sum_{j=1}^{\infty} \frac{A_j(d)}{(X-2d-j+1)_{2d+2j+2}}\,, \label{defAjd}
\end{align}  
in which both sides are interpreted as power series of~$X^{-1}$. More explicitly, 
	\begin{align}\label{Aj(d)0628}
		A_j(d) \= (-1)^{j-1}\,(j-1)!\, \sum_{0\leq l\leq [\frac{j+1}2]} \frac{(2l-1)!!}{8^{l}\, l!^3} \, (j-2l)_{2l} \; (d+\tfrac{3}2-l)_{j-1} \,.
	\end{align} 
\end{prop}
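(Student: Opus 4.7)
The plan is to combine the recursion~\eqref{recWd} with the vanishing boundary condition $W_{\infty}(X)=0$ to express $W_d(X)$ as a telescoping sum of explicit Gamma-function expressions, then to re-expand each summand in the factorial basis $\{1/(X-2d-j+1)_{2d+2j}\}$ to obtain~\eqref{expressionwdX0701}, and finally to derive the closed form~\eqref{Aj(d)0628} for $A_j(d)$ by applying Legendre's duplication formula to the LHS of~\eqref{defAjd}. First, since $W_e(X)=O(X^{-2e-2})$ by~\eqref{defWdX0628}, the recursion~\eqref{recWd} telescopes to
\be
W_d(X) \= \sum_{e\ge d+1} R_e(X) \qquad\text{in }\Q[[X^{-1}]],
\ee
where $R_e(X)$ is the right-hand side of~\eqref{recWd} with $d$ replaced by $e$; the sum is well defined order by order in $1/X$ since $R_e(X)=O(X^{-2e})$. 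In particular, the recursion together with the boundary condition pins down each $W_d(X)$ uniquely, so it suffices to verify that~\eqref{expressionwdX0701} satisfies~\eqref{recWd} and has the correct leading term~\eqref{WdXexpressioninpowers}.

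For the recursion check, using $\Gamma(X/2+2)=(X/2+1)\Gamma(X/2+1)$ and the defining identity~\eqref{defAjd}, the right-hand side of~\eqref{recWd} rewrites as
\be
R_d(X) \= \frac{(2d-1)!!^3\,2^{2d+4}}{8^d\,d!}\,(X/2-2d)\,(X/2+1)^3\,\sum_{j\ge1}\frac{A_j(d)}{(X-2d-j+1)_{2d+2j+2}}.
\ee
The key algebraic step is then to show that, on the left-hand side, the difference of the two candidate expressions~\eqref{expressionwdX0701} for $W_{d-1}(X)$ and $W_d(X)$ can be regrouped to match this $j$-sum. Because $(X-2d-j+1)_{2d+2j+2}=(X-2d-j+1)_{2d+2j}\cdot(X+d+j)(X+d+j+1)$ and $(X-2(d{-}1)-j+1)_{2(d-1)+2j}=(X-2d-j+1)_{2d+2j}/((X-2d-j+1)(X-2d-j+2))$, each of the three denominators is a ratio of the common denominator $(X-2d-j+1)_{2d+2j+2}$ by a quadratic polynomial in $X$, so the combined identity reduces for each $j$ to a polynomial identity that can be checked by comparing the residues at the $2d+2j+2$ roots of $(X-2d-j+1)_{2d+2j+2}$.

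For the closed form~\eqref{Aj(d)0628}, the main tool is Legendre's duplication formula $\Gamma(z)\Gamma(z+\tfrac12)=2^{1-2z}\sqrt{\pi}\,\Gamma(2z)$, applied to the products $\Gamma((X+1)/2-d)\Gamma(X/2-d+1)$ and $\Gamma((X+3)/2)\Gamma(X/2+2)$ (which are retrieved from the LHS of~\eqref{defAjd} after multiplying and dividing by suitable Gamma factors). This reduction collapses the LHS to a quotient of integer-shifted Pochhammer symbols times a single residual Gamma ratio whose $1/X$-expansion is a standard series with coefficients $(2l-1)!!/(8^l\,l!^3)$. Re-expressing the resulting series in the factorial basis $\{1/(X-2d-j+1)_{2d+2j+2}\}$ via the standard identity relating $(X)_m/(X)_n$ to a finite sum of $1/(X)_k$ then produces~\eqref{Aj(d)0628}, with the factors $(j-2l)_{2l}$ and $(d+3/2-l)_{j-1}$ arising from the binomial coefficients of this basis change and the overall sign $(-1)^{j-1}(j-1)!$ from alternating series manipulations.

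The main obstacle is the combinatorial bookkeeping required for this last step: converting the Gamma-function identity~\eqref{defAjd} into the precise $l$-indexed finite sum~\eqref{Aj(d)0628} requires keeping careful track of three interacting inputs (the duplication-formula coefficients, the basis-change binomials, and the index shift between $j$ and $l$), and in particular confirming the appearance of the half-integer Pochhammer $(d+3/2-l)_{j-1}$. A cleaner alternative, should the direct computation prove unwieldy, is to show that the right-hand side of~\eqref{Aj(d)0628} satisfies the same recursion in $j$ as the coefficients on the left-hand side of~\eqref{defAjd}, and then match a single initial value; this shifts the difficulty to deriving and solving the recursion but avoids the messy bookkeeping.
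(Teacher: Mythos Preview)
Your plan to verify~\eqref{expressionwdX0701} via the $d$-recursion~\eqref{recWd} has a real gap at the step where you claim the identity ``reduces for each $j$ to a polynomial identity.''  Putting the three pieces over the common denominator $(X-2d-j+1)_{2d+2j+2}$ and comparing the leading coefficient in~$X$, your per-$j$ identity would force
\[
\frac{d}{d-1+j}\,A_j(d-1)\;=\;A_j(d)\,,
\]
which is already false for $A_1\equiv1$.  The problem is structural: $W_{d-1}^{\text{cand}}(X)$ involves $A_j(d-1)$ while both $W_d^{\text{cand}}(X)$ and your rewriting of $R_d(X)$ involve $A_j(d)$, and~\eqref{defAjd} gives you no direct relation between the two.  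So the verification cannot be localized to a single~$j$; it would require a separate recursion linking $A_j(d-1)$, $A_j(d)$ and $A_{j+1}(d)$, which you have not supplied.

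The paper sidesteps this by using a different recursion, in $X$ rather than in~$d$.  From the two-point identity~\eqref{TwoPointNew} (proved by a short induction on~$d_1$ using~\eqref{TwoPointsC}) one obtains
\[
W_d(X)-W_d(X+2)\;=\;\frac{(2d+1)!!^3}{8^{d+1}\,d!}\,(X-d+\tfrac32)\,
\frac{\Gamma\bigl(\tfrac{X+1}2-d\bigr)^3\Gamma\bigl(\tfrac{X+3}2\bigr)}{\Gamma\bigl(\tfrac{X}2+2\bigr)^3\Gamma\bigl(\tfrac{X}2+1-d\bigr)}\,,
\]
which keeps $d$ fixed.  For the candidate~\eqref{expressionwdX0701} the difference $W_d(X)-W_d(X+2)$ is then a sum over $j$ with the \emph{same} $A_j(d)$, and a one-line telescoping computation (the factor $1/(d+j)$ is exactly what cancels against the numerator $2(d+j)(2X-2d+3)$ coming from the Pochhammer difference) reduces the identity to~\eqref{defAjd} on the nose.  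This, together with $W_d(X)\in X^{-2d-2}\Q[[X^{-1}]]$, pins down $W_d$ uniquely.

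For the closed form~\eqref{Aj(d)0628}, the paper follows essentially your fallback route rather than a direct Legendre-duplication computation: applying $\Gamma(z+1)=z\,\Gamma(z)$ to the left side of~\eqref{defAjd} gives $(X-2d-1)^3\,r_{d+1}(X)=(X-2d)\,r_d(X)$, which in the factorial basis yields a pair of coupled recursions for $A_j(d)$ in $(j,d)$; one then checks that the explicit sum~\eqref{Aj(d)0628} satisfies them and matches $A_1(d)=1$.  Your Legendre-duplication sketch might be made to work, but as you anticipated the bookkeeping is substantial, and the recursion route is cleaner.
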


\begin{proof}
It is easy to verify that the left-hand side of~\eqref{defAjd} 
(as a power series in~$X^{-1}$) is invariant under $X\to2d-3-X$, so $A_j(d)$ is well defined from~\eqref{defAjd}.

Let us first use the mathematical induction to prove the following equality:
\begin{align}\label{TwoPointNew}
\widehat{C}(d_1,d_2+1)-\widehat{C}(d_1,d_2) \= \frac{4^{(d_1+d_2+2)}\, (2d_1+4d_2+7)\, (d_1+1) F_{d_1+1} \, F_{d_2+1}}{(2d_1+2d_2+5)! \; \gamma(2d_1+2d_2+4)}\,,
\end{align}
with $F_d$ and $\gamma(X)$ given by~\eqref{TwoPoints} and~\eqref{defgamma}, respectively. Denote by $G(d_1,d_2)$ the right-hand side of~\eqref{TwoPointNew}.
For $d_1=0$, \eqref{TwoPointNew} follows directly from~\eqref{smallest}. If we assume that~\eqref{TwoPointNew} is true for $d_1=k-1$, then for $d_1=k$
\begin{align*}
&\widehat{C}(k,d_2+1) - \widehat{C}(k,d_2) \= \big(\widehat{C}(k-1,d_2+2) - \widehat{C}(k-1,d_2+1)\big) \\
&\qquad \qquad \qquad + \big(\widehat{C}(k,d_2+1) - \widehat{C}(k-1,d_2+2)\big) -
\big(\widehat{C}(k,d_2) - \widehat{C}(k-1,d_2+1)\big)\nn\\
&= G(k-1,d_2+1) + \frac{4^{k+d_2+2}\, (d_2+2-k)\, F_{k}\, F_{d_2+2}}{(2k+2d_2+4)!} -\frac{4^{k+d_2+1}\, (d_2+1-k)\, F_{k}\, F_{d_2+1}}{(2k+2d_2+2)!} \nn\\
&= G(k,d_2)\,, \nn
\end{align*}
where for the second equality we used formula~\eqref{TwoPointsC}. This completes the proof of~\eqref{TwoPointNew}.

By~\eqref{defWdX0702} and~\eqref{TwoPointNew} we find the identity
\begin{align}\label{WdXdifference}
W_{d}(X) - W_d(X+2) \=  \frac{(2d+1)!!^3}{8^{d+1}\, d!} \, (X-d+\tfrac32) \, \frac{\Gamma\big(\frac{X+1}2-d\big)^3\,\Gamma\big(\frac{X+3}2\big)}{\Gamma\big(\frac{X+4}2\big)^3\, \Gamma\big(\frac{X+2}2-d\big)}\,,
\end{align}
where the right-hand side is understood as its asymptotic expansion in~$X^{-1}$ as $X\to\infty$. This formula together with $W_d(X)\in X^{-2d-2}\,\mathbb{Q}[[X^{-1}]]$ uniquely determines $W_d(X)$. It is easy to verify that 
the right-hand side of~\eqref{expressionwdX0701}, with $A_j(d)$ defined 
by~\eqref{defAjd}, has the same recursive property. Hence the
first statement of Proposition~\ref{propWdX0628} is proved.

Let us now prove~\eqref{Aj(d)0628}. Denote by $r_d(X)\in\mathbb{Q}[[X^{-1}]]$ the asymptotic expansion of the left-hand side of~\eqref{defAjd}. Using the property $\Gamma(z+1)=z\Gamma(z)$, we see that
\begin{align}\label{rdXrecursion}
(X-2d-1)^3\,r_{d+1}(X) \= (X-2d)\, r_d(X).
\end{align}
From~\eqref{defAjd} and~\eqref{rdXrecursion}, we obtain
the following two recursions for $A_j(d)$:
\begin{align}
&A_{j+1}(d+1) - A_{j+1}(d) \= 2 (d+j+1) (2 d+j+2)\, A_j(d) \nn\\
&\qquad \qquad \qquad \qquad \qquad -((2d+j+3)(2d+3)+j^2)\, A_j(d+1)\,, \label{Ajrec2}\\
&-j^3\, A_{j}(d+1) + (2 d-j+3) \, A_{j+1}(d+1)-(2d+j+3) \,  A_{j+1}(d)\=0\,.\label{recAjd}
\end{align}
Here $j\ge0$, and we make the convention that $A_0(d)\equiv0$.
Notice that equations~\eqref{Ajrec2}--\eqref{recAjd}, together with the initial value $A_1(d)\equiv1$, 
uniquely determine all $A_j(d)$. It is easy to verify that the right-hand side of~\eqref{Aj(d)0628} satisfies~\eqref{Ajrec2}--\eqref{recAjd} and takes value~$1$ when $j=1$.
This completes the proof of~\eqref{Aj(d)0628}. 
\end{proof}
It follows from~\eqref{defWdX0628} and~\eqref{expressionwdX0701} that 
\begin{align}\label{chatked}
\widehat{c}_k(\mathbf{e}_d) \= -\frac{(2d+1)!!^3}{2^{d+1}d!} \,  \sum_{j=1}^{[\frac{k}2]-d} \frac{A_j(d)}{d+j} \,\sum_{l=0}^{k-2d-2j}\binom{k-1}{l}(-j)^l\, S(k-l-1,2d+2j-1)\,,
\end{align}
where $S(n,k)$ are the Stirling numbers of the second kind.

It is interesting to notice that the polynomial $A_j(d)$ is the product of $(d+3/2)_{[j/2]}$ and a polynomial of degree $[(j-1)/2]$, which can be easily proved by using~\eqref{Aj(d)0628}. For the reader's convenience we provide the first few $A_j(d)$:
\begin{align}
&A_1(d) \=1\,,\quad A_2(d) \= -\frac12 \,(2d+3)\,, \quad A_3(d)\= \frac18 \,(2d+3)\,(10d+21)\,, \nn\\
&A_4(d)\= -\frac{3}{16} (2d+3)\,(2d+5)\, (14d+31)\,.\nn
%A_5(d)\= \frac{3}{128} (2 d+3) (2 d+5) \left(676 d^2+3584 d+4695\right)\nn
\end{align}

We also remark that although $W_d(X)$ is defined as the asymptotics of two-point BGW numbers it also gives information about multi-point BGW numbers. Indeed, from~\eqref{Chatexpansion} and Theorem~\ref{thmpoly} we can deduce that
for a given $n\ge2$ and for $\dd=(d_1,\dots,d_{n})$ with $1\leq d_1\le \cdots\le d_{n-1}$ fixed,
\begin{align}
	\widehat{C}(\dd) \= 1\,-\,\sum_{j=1}^{n-1}W_{d_j}(X(\dd))+O(X(\dd)^{-2d_1-2d_2-3})\,,
	\qquad d_n\to\infty\,.\label{ChatdexpansionbyWd}
\end{align} 
 
Similar to~\eqref{defWdX0702}, for $n\ge1$, $\lambda=(\lambda_1,\dots,\lambda_{n-1})$, define $W_{\bl}(X)\in \mathbb{Q}[[X^{-1}]]$ via 
\begin{align}\label{Chatasymptoticsgeneral}
\widehat{C}(\lambda,d_n)  \;\sim\; -\sum_{I\subset\{1,\dots,n-1\}}  W_{\bl_I}(X(\lambda,d_n) )\,, \qquad d_n\to\infty\,,
\end{align}
with $W_{\emptyset}(X)=-1$. Then we have the following proposition.
\begin{prop}\label{propWlambdaX}
We have
\begin{align}\label{chatkWlambdaX}
\sum_{k\ge0}\frac{\widehat{c}_k(\mathbf{p})}{X^k}\= -\sum_{q_1,q_2,\dots\ge0} W_{1^{q_1},2^{q_2},\dots}(X)\,
\prod_{i=1}^{\infty}\binom{p_i}{q_i}\,,
\end{align}
where both sides are understood as elements in~$\mathbb{Q}[p_1,p_2,\dots][[X^{-1}]]$.
Moreover, the power series $W_{\bl}(X)\in X^{-2|\lambda|-\ell(\lambda)-1}\,\mathbb{Q}[[X^{-1}]]$.
\end{prop}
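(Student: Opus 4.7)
The plan is to prove both claims by translating the recursive definition~\eqref{Chatasymptoticsgeneral} of $W_\bl$ into a binomial-transform identity that can be inverted, and then exploiting the polynomial structure of the $\widehat{c}_k$ from Theorem~\ref{thmpoly}.

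Fix a partition $\bl$ with multiplicity vector $\mathbf{p}=\mathbf{p}(\bl)$. By Theorem~\ref{thmpoly} one has $\widehat{C}(\bl,d_n)\sim\sum_k\widehat{c}_k(\mathbf{p})/X^k$ as $d_n\to\infty$, and grouping the $2^{\ell(\bl)}$ terms of~\eqref{Chatasymptoticsgeneral} by the multiplicity vector $\mathbf{q}$ of $\bl_I$ (there being $\prod_i\binom{p_i}{q_i}$ such subsets~$I$) yields the identity
\begin{align*}
\sum_{k\ge0}\frac{\widehat{c}_k(\mathbf{p})}{X^k}\=-\sum_{\mathbf{q}\le\mathbf{p}}W_{1^{q_1}2^{q_2}\cdots}(X)\,\prod_i\binom{p_i}{q_i}
\end{align*}
in $\mathbb{Q}[[X^{-1}]]$ for every such $\bl$. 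Extracting the coefficient of $X^{-k}$ gives a polynomial identity in $\mathbf{p}$, which standard binomial inversion solves as
\begin{align*}
[X^{-k}]\,W_\bl(X)\=-\sum_{\mathbf{q}}(-1)^{|\mathbf{p}|-|\mathbf{q}|}\prod_i\binom{p_i}{q_i}\,\widehat{c}_k(\mathbf{q})\=-\Bigl(\prod_i\Delta_i^{p_i}\widehat{c}_k\Bigr)\Big|_{\mathbf{q}=\mathbf{0}},
\end{align*}
where $\Delta_i f(\mathbf{q}):=f(\mathbf{q}+\mathbf{e}_i)-f(\mathbf{q})$.

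For the degree estimate, observe that $\Delta^{p}q^{a}\big|_{q=0}=p!\,S(a,p)$ vanishes for $a<p$, so a monomial $\prod_i q_i^{a_i}$ of $\widehat{c}_k$ can contribute to the iterated discrete difference at~$\mathbf{0}$ only if $a_i\ge p_i$ for every~$i$. The weighted-degree bound $\deg\widehat{c}_k\le k-1$ (with $\deg p_i=2i+1$) from Theorem~\ref{thmpoly} then forces
\begin{align*}
k-1\;\ge\;\sum_i(2i+1)\,a_i\;\ge\;\sum_i(2i+1)\,p_i\=2|\bl|+\ell(\bl)
\end{align*}
for any such surviving monomial, so $[X^{-k}]W_\bl(X)=0$ whenever $k\le2|\bl|+\ell(\bl)$; this proves $W_\bl(X)\in X^{-2|\bl|-\ell(\bl)-1}\,\mathbb{Q}[[X^{-1}]]$.

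Finally, armed with this degree estimate, only the $\mathbf{q}$ with $\sum_i(2i+1)q_i\le k-1$ contribute to the coefficient of $X^{-k}$ on the right-hand side of~\eqref{chatkWlambdaX}, making the sum over $\mathbf{q}$ finite at every power of $X^{-1}$ and hence producing a well-defined element of $\mathbb{Q}[p_1,p_2,\dots][[X^{-1}]]$; the polynomial identity~\eqref{chatkWlambdaX} then follows from its validity at every $\mathbf{p}=\mathbf{p}(\bl)$, which ranges over all finitely supported vectors of nonnegative integers. The one delicate point is the weighted-degree bookkeeping linking the per-variable survival condition of the finite differences to the aggregate weight controlled by Theorem~\ref{thmpoly}; the rest is routine binomial inversion.
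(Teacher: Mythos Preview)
Your proof is correct and follows essentially the same approach as the paper's (very terse) argument: combine~\eqref{Chatasymptoticsgeneral} with Theorem~\ref{thmpoly} to get the numerical identity, then use the weighted-degree bound~\eqref{degreeest} on $\widehat{c}_k$ to deduce the order-of-vanishing of $W_{\bl}$. Your version is more explicit in two respects---you perform the binomial inversion to express $[X^{-k}]W_{\bl}$ as an iterated finite difference of $\widehat{c}_k$ at the origin, and you order the argument so that the degree estimate is established \emph{before} interpreting~\eqref{chatkWlambdaX} as a polynomial identity---which resolves a mild circularity in the paper's presentation (where the well-definedness of the right-hand side of~\eqref{chatkWlambdaX} in $\mathbb{Q}[p_1,p_2,\dots][[X^{-1}]]$ tacitly presupposes the degree estimate).
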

\begin{proof}
Using~\eqref{Chatexpansion}, Theorem~\ref{thmpoly} and~\eqref{Chatasymptoticsgeneral}, we obtain~\eqref{chatkWlambdaX}.
Using~\eqref{degreeassign}--\eqref{degreeest} and comparing degrees on both sides of~\eqref{chatkWlambdaX}, 
we obtain that $W_{\bl}(X)\in X^{-2|\lambda|-\ell(\lambda)-1}\,\mathbb{Q}[[X^{-1}]]$.
\end{proof}
We list a few examples of $W_{\lambda}(X)$ below:
\begin{align}
&W_{1,1}(X) \= \frac{1701}{4X^7} + \frac{380295}{64X^8} + \frac{832815}{16X^9}+ \frac{2935197}{8X^{10}}+\cdots\,, \nn\\
&W_{1,2}(X) \= \frac{388125}{16X^9} + \frac{83804625}{128X^{10}} + \frac{1336975875}{128X^{11}} + \frac{131751025875}{1024X^{12}}+\cdots\,,  \nn\\
&W_{1,1,1}(X) \= \frac{1754703}{8X^{10}} + \frac{245520639}{32X^{11}} + \frac{79688662083}{512X^{12}} + \frac{615031348329}{256X^{13}}+\cdots\,,  \nn\\
&W_{1,1,2}(X) \= \frac{779513625}{32X^{12}} + \frac{21043769625}{16X^{13}} + \frac{41031922798125}{1024X^{14}}+\cdots\,. \nn
%&W_{1,1,1,1}(X) \= \frac{1172710953}{4X^{13}} + \frac{76212197649}{4X^{14}} + \frac{88527339433503}{128X^{15}} +\cdots\,.\nn
\end{align}

We note that Proposition~\ref{propWlambdaX} together with~\eqref{Chatexpansion} and Theorem~\ref{thmpoly} implies~\eqref{ChatdexpansionbyWd} and formulas like 
\begin{align}\label{ChatdexpansionbyWd1d2}
\widehat{C}(\dd)\=1-\,\sum_{i=1}^{n-1}W_{d_i}(X(\dd))-\sum_{1\leq i<j\leq n-1}W_{d_i,d_j}(X(\dd))+O(X(\dd)^{-2d_1-2d_2-2d_3-4})\,,
\end{align}
for $n\ge1$, $1\le d_1\le \cdots \le d_{n-1}$ fixed and $d_n\to\infty$. 
Based on numerical experiments, we conjecture that the leading term of $W_{\bl}(X)$ is 
\begin{align}
	2 \, (2|\bl|+\ell(\lambda))!\, C(\bl)X^{-2|\bl|-\ell(\lambda)-1}\,.
\end{align}

\section{Subexponential asymptotic terms} \label{secsubleadingasymptotics}
In Section~\ref{secintro} we have described the discovery of the conjectural asymptotic formula~\eqref{dddasympleading}.
In this section we study subexponential asymptotics more systematically. 

Let us look, for a fixed partition $\bmu=(1\leq \mu_1\leq\dots\leq\mu_n)$, at the asymptotics of $1-\widehat{C}(\dd)$ with $\dd=\bmu\thin d=(\mu_1\thin d,\dots,\mu_n\thin d)$ as $d\to\infty$. 
Similar to~\eqref{dddasympleading}, we find (based on computations using formula~\eqref{formulaBGWeq2}) the following conjectural asymptotic formulas as $d\to\infty$:
\begin{align*}
&1-\widehat{C}(\mu_1d,\mu_2d) \,\sim\, \frac{2}{\sqrt{\pi\,d}}\,
\frac{\mu_1^{2\mu_1d+3/2}\,\mu_2^{2\mu_2d+3/2}}{|\bmu|^{2|\bmu|d+7/2}}\,,\nn\\
&1-\widehat{C}(\mu_1d,\mu_2d,\mu_3d)\,\sim\,\frac{2\,p_{\mu_1}}{\sqrt{\pi \,d}}\,
\frac{(\mu_2+\mu_3)^{2(\mu_2+\mu_3)d+5/2}\,\mu_1^{2\mu_1d+3/2}}{|\bmu|^{2|\bmu|d+9/2}}\,,\nn\\
&1-\widehat{C}(\mu_1d,\mu_2d,\mu_3d,\mu_4d)\,\sim\,\frac{2\, p_{\mu_1}}{\sqrt{\pi\, d}}\,
	\frac{(\mu_2+\mu_3+\mu_4)^{2(\mu_2+\mu_3+\mu_4)d+7/2}\,\mu_1^{2\mu_1d+3/2}}{|\bmu|^{2|\bmu|d+11/2}}\,.\nn\\
\end{align*}
With the help of these formulas and based on more computations, we obtain the following conjectural asymptotic formula:
for any fixed $n\ge2$,

\begin{align}\label{larged}
1-\widehat{C}(d_1,\dots,d_n) \;\sim\; \Bigl(\frac{1}2\Bigr)^{\delta_{n,2}}\,\sum_{j=1}^n\frac{2}{\pi X(\dd)}\binom{X(\dd)}{2d_j+1}^{-1}\,, 
\quad \min_{1\leq j\leq n} \{d_j\}\to\infty\,.
\end{align}
Since 
$$\frac{2}{\pi} \, (2d+1)! \;\sim\; \frac{(2d+1)!!^3}{2^{d+1}\, (d+1)!}\,,\quad d\to\infty\,,$$ we can rewrite the conjectural formula~\eqref{larged} equivalently as 
\begin{align}\label{larged2}
&1-\widehat{C}(d_1,\dots,d_n) \;\sim\; \Bigl(\frac{1}2\Bigr)^{\delta_{n,2}} \sum_{j=1}^n \frac{(2d_j+1)!!^3}{2^{d_j+1}(d_j+1)!}\frac{1}{(X(\dd)-2d_j)_{2d_j+2}} \,,\quad \min_{1\leq j\leq n}\{d_j\}\to\infty\,.
\end{align}
\begin{remark}\label{remarklarged}
Notice that the form $\frac{(2d+1)!!^3}{2^{d+1}\, (d+1)!}$ in~\eqref{larged2} also appears in the leading coefficents of $W_d(X)$ in~\eqref{expressionwdX0701}, so we obtain from~\eqref{Chatd1dnuniformind} that formula~\eqref{larged2} holds true even if 
some $d_j$'s are not large (but requires $X(\dd)\to\infty$).
So we guess that
\begin{align}\label{1-chatduniform}
1-\widehat{C}(\dd) \= \Bigl(\sum_{j=1}^n \frac{(2d_j+1)!!^3}{2^{d_j+1}(d_j+1)!}\frac{1}{(X(\dd)-2d_j)_{2d_j+2}}\Bigr) \,(1+o(1))
\end{align}
for any fixed $n\ge3$ and for $\dd=(d_1,\dots,d_n)\in \bigl(\mathbb{Z}_{\ge0}\bigr)^n$, uniformly as $X(\mathbf{d})\to\infty$. 
The conjectural formula~\eqref{1-chatduniform} implies that, for any two partitions $\mathbf{d}$ and $\mathbf{d'}$ of the same length and the same weight, 
\begin{align}\label{asymptoticordering}
\widehat{C}(\mathbf{d})-\widehat{C}(\mathbf{d'}) \= \sum_{m\ge 0} \frac{(2m+1)!!^3}{2^{m+1}\, (m+1)!} \,\frac{p_m-p'_{m}}{(X(\dd)-2m)_{2m+2}}(1+o(1))\,, \quad 
\end{align}
as $X(\mathbf{d})=X(\mathbf{d'})\to\infty$.  
We note that there is a coherent consistence between formula~\eqref{asymptoticordering} and 
Conjecture~\ref{conjmonotoncity}. Another point is that formula~\eqref{asymptoticordering} 
also explains the phenomenon (cf.~\eqref{twonumbers}) described in Section~\ref{secintro}.
As a further example, we have
\begin{align*}
&\widehat{C}(2,3,14,19)=0.99999999969689849693814650552875212296\cdots\,,\\ &\widehat{C}(2,3,15,18)=0.99999999969689849693814752270899360002\cdots\,,
\end{align*}
whose difference is about $1.01718\times 10^{-24}$ 
and the prediction gives $0.92432\times 10^{-24}$ with error  less than 10 percent.
\end{remark}
Now we compute the subleading terms for the subexponential asymptotics~\eqref{dddasympleading},~\eqref{larged2}. Based on the numerical experiments, we find that the error between 1 and $\widehat{C}(d^n)$ for fixed $n\ge2$ is of the form
\begin{align}\label{dddasymp}
1-\widehat{C}(d^n) \,\sim\, Y_n(d)\Bigl(1+\frac{b_1(n)}{d}+\frac{b_2(n)}{d^2}+\cdots\Bigr)\,,\qquad d\to\infty\,,
\end{align}
where
\begin{align}
Y_n(d) \= \Bigl(\frac{1}2\Bigr)^{\delta_{n,2}}\,
\sqrt{\frac{4(n-1)}{\pi\,n\,d}}\biggl(\frac{(n-1)^{n-1}}{n^n}\biggr)^{2d+1}\,,
\end{align}
and $b_1(n), b_2(n),\dots$ are rational number whose numerical values become a little simpler if we set
\begin{align}\label{defLn}
L_n \: 24\,\log\bigl(1+b_1(n)x+b_2(n)x^2+\cdots\bigr)\,,
\end{align}
in which the first few values are given by
\begin{align}
&L_n \= \frac{-11n^2 - n + 7}{n^2 - n}\, x + \frac{14 n^3 - 16 n^2 + n + 7}{2 n (n-1)^2} \, x^2 \nn\\
&\qquad \qquad + \frac{-721 n^6 + 1803 n^5 - 1953 n^4 + 901 n^3 - 243 n^2 + 93 n - 31}{120n^3(n-1)^3} \, x^3 +\cdots \,.\label{expressionLn}
\end{align}
Recall the discovery in Section~\ref{sectwopoint} that the asymptotics for two-point BGW numbers are building blocks for the higher-point numbers, at least when all but one $d_j$'s are fixed. The following observation generalizes this for the subexponential asymptotics. 

\smallskip

\noindent {\it Observation.}	
First of all, we find that the ratio of $\widehat{C}(d,d,d,d)-1$ and $\widehat{C}(d,3d+1)-1$, which are both exponentially small, is asymptotically equal to~4 to all orders in $d^{-1}$. Here the number $3d+1$ is chosen such that these two BGW numbers have the same argument~~$X(\mathbf{d})$. 
More generally, we conjecture that the ratio of $\widehat{C}(d^{n})-1$ and $\widehat{C}(d,d')-1$ is asymptotically equal to~$n$ for $n\ge3$, where $d'$ is defined by $2d'+1=(n-1)(2d+1)$. We notice that if $n$ is odd, then $d'$ is not an integer, but we can still define $C(d,d')$ by the two point formula~\eqref{TwoPointsC} with the definition of $F_h$ there replaced by 
\beq F_h \: \frac{\Gamma\bigl(h+\frac12\bigr)^3}{\pi^{3/2}\, \Gamma(h+1)}\,. \label{defnewFh}\eeq 

Now let us focus on the asymptotics for $1-\widehat{C}(d_1,d_2)$ again. Recall that if $d_1$ is fixed, this asymptotics is exactly $W_{d_1}(X)$ explicitly given in~\eqref{expressionwdX0701}. Another important observation is that,  the right-hand side of~\eqref{expressionwdX0701} makes sense (i.e. each term has lower order than the previous one) even if $X$ and $d$ are proportional and are in the region $X-2d\gg0$. 
Note that this property will not be true if we write $W_d(X)$ in another basis, e.g. the powers of $X^{-1}$ as in~\eqref{WdXexpressioninpowers} or the basis $1/(X-1)_{k}^{-}$ ($k=1,2,\dots$) used in~\cite{EGGGL}. Therefore, for any fixed integer $N\ge1$, we define a function $W(N;d,X)$ by  
\begin{align}
W(N;d,X) \= \frac{(2d+1)!!^3}{2^{d+1} d!} \sum_{j=1}^{N} \frac{A_j(d)}{d+j} \, \frac{1}{(X-2d-j+1)_{2d+2j}}\,,
\end{align}
where $A_j(d)$ are polynomials defined in~\eqref{Aj(d)0628}. We note that $W(N;d,X)$ is well defined in the region $X-2d>N$.
According to the previous observations, we conjecture that 
\begin{align}\label{twopointasymptoticB}
1-\widehat{C}(d_1,d_2) \,=\, W(N;d_1,X(d_1,d_2))\big(1+O(X(d_1,d_2)^{-N})\big)\,,\quad  X(d_1,d_2)\to\infty
\end{align}
holds for any fixed $N\ge1$ and for $d_1\leq d_2$. 
Now for general $\dd=(d_1,\dots,d_n)$ of length $n\ge3$, we have the following conjecture. 
\begin{conj}\label{conjWdX}
For any fixed $n\ge3$, any fixed $N\ge0$ and for $\dd=(d_1,\dots,d_n)$ satisfying 
$\min_{1\leq i<j\leq n}\{d_i+d_j\}\geq N/2$, we have the asymptotics:
\begin{align}\label{conjBdX}
1-\widehat{C}(\dd) \= \sum_{i=1}^{n}W(N;d_i,X(\dd))\, \bigl(1+O\bigl(X(\dd)^{-N}\bigr)\bigr)\,, \quad X(\mathbf{d})\to\infty\,.
\end{align}
\end{conj}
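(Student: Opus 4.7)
The plan is to prove Conjecture~\ref{conjWdX} by strong induction on $X(\dd)$, using the two-point asymptotics \eqref{twopointasymptoticB} as the base case. That base case is itself conjectural and must be established first.

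For the base case, I would iterate the closed identity \eqref{TwoPointNew} and invoke $\lim_{d_2\to\infty}\widehat{C}(d_1,d_2)=1$ (which follows from Corollary~\ref{corChatdexpansion}) to obtain
\[
1-\widehat{C}(d_1,d_2)\=\sum_{k\ge d_2} G(d_1,k),
\]
where $G(d_1,k)$ is the explicit product of $F$'s on the right-hand side of \eqref{TwoPointNew}. Using Stirling's formula (as already applied for \eqref{recWd}), each summand can be expanded in the inverse-Pochhammer basis used in Proposition~\ref{propWdX0628}; resumming and comparing with the defining difference equation \eqref{WdXdifference} identifies the partial sum with $W(N;d_1,X(d_1,d_2))$, up to a tail whose magnitude is controlled by bounding $G(d_1,k)$ by $C_{N,d_1}\,(k-d_1+1)^{-N-1}$ times the leading term, valid whenever $X-2d_1>N$.

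For the inductive step, assume the statement for all $\dd'$ with $X(\dd')<X$. Peeling off $d_1=\min_i d_i$ in \eqref{CVirasoro2} and dividing through by $\gamma(X(\dd))$ produces a recursion for $\widehat{C}(\dd)$ in which every right-hand-side $C$-value has weight $X-1$ and hence is controlled by the inductive hypothesis. The linear terms $C(d_2,\dots,d_j+d_1,\dots,d_n)$ and $C(a,b,d_2,\dots,d_n)$ each take the form $\widehat{C}(\dd')=1-\sum_i W(N;d'_i,X-1)+O(X^{-N})$ for a rearranged tuple $\dd'$, and their contribution to $1-\widehat{C}(\dd)$ can be assembled term by term. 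For the quadratic piece $C(a,\dd_I)\,C(b,\dd_J)$, write
\[
\widehat{C}(a,\dd_I)\,\widehat{C}(b,\dd_J)\=1-(1-\widehat{C}(a,\dd_I))-(1-\widehat{C}(b,\dd_J))+(1-\widehat{C}(a,\dd_I))(1-\widehat{C}(b,\dd_J));
\]
the cross term is of order $W^2$ and is absorbed into the relative remainder $O(X^{-N})$, while the two linear remainders again fall under the inductive hypothesis.

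The main difficulty is a reassembly identity: after writing every right-hand-side contribution as a sum of $W(N;\cdot,\cdot)$-values with shifted arguments, one must show that these rearrange to exactly $\sum_{i=1}^n W(N;d_i,X)$ modulo relative error $O(X^{-N})$. Heuristically this should follow from the fact that $W(N;d,\cdot)$ itself obeys the difference equation \eqref{WdXdifference}, so a compatibility between the DVV recursion and the $W$-recursion should tie the two together; making this rigorous requires careful tracking of ratios of Pochhammer denominators and of the factor $\gamma(X-1)/\gamma(X)$, together with a verification based on the explicit polynomials $A_j(d)$ of \eqref{Aj(d)0628}. A secondary but related obstacle lies in the boundary splittings $I\sqcup J$ in which one of $X(a,\dd_I),X(b,\dd_J)$ is small relative to $X(\dd)$: there the inductive hypothesis still applies, but the leading $W$-terms on the two sides of the splitting have very different magnitudes, and one must verify that the asymmetric pieces either cancel in pairs or are absorbed into the target $W(N;d_i,X)$ for the appropriate index. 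It is precisely for this reason that the hypothesis $\min_{i<j}\{d_i+d_j\}\ge N/2$ is imposed in the statement, and confirming that this hypothesis suffices to tame all boundary splittings will be the hardest analytic step of the proof.
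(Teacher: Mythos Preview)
The paper does not prove Conjecture~\ref{conjWdX}: it is stated there explicitly as an open conjecture, motivated by numerical computations and the heuristic ``observation'' immediately preceding it. The two-point statement \eqref{twopointasymptoticB} that you take as your base case is likewise left conjectural in the paper. So there is no proof in the paper to compare against; your proposal is an attempt at an open problem.

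On its own merits, the inductive scheme has two concrete gaps. First, the hypothesis $\min_{i<j}\{d_i+d_j\}\ge N/2$ is not preserved under the recursion \eqref{CVirasoro2}. When you peel off $d_1=\min_i d_i$, the right-hand side contains the tuples $(a,b,d_2,\dots,d_n)$ with $a+b=d_1-1$, whose minimal pair sum is $d_1-1$; nothing in the original hypothesis forces $d_1-1\ge N/2$ (indeed $d_1$ can equal~$1$). These terms carry total weight $2d_1/(X-1)$ in the recursion, which is far from negligible relative to the exponentially small target $1-\widehat{C}(\dd)$. You flag the quadratic boundary splittings as the hardest case, but these linear split terms already lie outside your inductive hypothesis.

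Second, and more fundamentally, the ``reassembly identity'' you postulate is the actual content of the conjecture, and your proposed mechanism does not supply it. The right-hand side of the recursion produces contributions such as $W(N;d_j+d_1,X-1)$ from the merge terms and $W(N;a,X-1),\,W(N;b,X-1)$ from the split terms, none of which appear in the target $\sum_i W(N;d_i,X)$. The difference equation \eqref{WdXdifference} is a step-$2$ recursion in~$X$ for \emph{fixed} first argument~$d$; it tells you nothing about how $W(N;d_j+d_1,\cdot)$ relates to $W(N;d_j,\cdot)$ and $W(N;d_1,\cdot)$. Without an independent identity linking $W$'s with different first arguments---and one accurate to relative order $X^{-N-1}$, since the accumulated error over $O(X)$ inductive steps must remain $O(X^{-N})$---the induction does not close. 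Establishing such an identity would already be the substantive step.
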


As a consequence of Conjecture~\ref{conjWdX}, for fixed $n\ge0$, the asymptotic expansion of $1-\widehat{C}(d^n)$ is explicitly given by
\begin{align}\label{asym1-Cd^n}
	1-\widehat{C}(d^n) \= n\, W(N;d,n(2d+1))\bigl(1+O(d^{-N})\bigr) \,, \quad d\to\infty\,,
\end{align}
for any fixed $N\ge0$. This explicitly gives all the numbers $b_1(n),b_2(n),\dots$ in~\eqref{dddasymp}, 
and the first three of them coincide with those in~\eqref{expressionLn}.

\section{Application to the Painlev\'e II hierarchy}
\label{secPainleve}
In this section, we will discuss the connections between the BGW numbers and two famous Painlev\'e
hierarchies. We begin with the Painlev\'e \uppercase\expandafter{\romannumeral 34} hierarchy (cf.~\cite{BR,CJP}),  by which
we mean
the following family of ODEs:
\begin{align}\label{p34}
2u \+ t \, \frac{du}{dt} \,-\, (2d+1) \, \frac{d}{dt} \Bigl(m_d\Bigl(u,\frac{du}{dt},\frac{d^2u}{dt^2},\dots,\frac{d^{2d}u}{dt^{2d}}\Bigr)\Bigr)\=0\,,  
\end{align}  
where $d\ge1$, and $m_d$ are the polynomials defined in~\eqref{defma},~\eqref{b(lambda)expansion}. 
The case with $d=1$ agrees with the Painlev\'e \uppercase\expandafter{\romannumeral 34} equation~\eqref{P34equation}.
%We have the following lemmas.
\begin{lemma}\label{lemP34sol}
For each $d\ge1$, there exists a unique formal solution to equation~\eqref{p34} of the form
\begin{align}\label{solpa}
u(t) \= \sum_{n\ge0} \frac{A_{d,n}}{t^{(2d+1)n+2}}\,, \qquad A_{d,0} \= \frac18\,.
\end{align}
\end{lemma}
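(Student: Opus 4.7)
The plan is to substitute the ansatz directly into~\eqref{p34} and read off a recursion that uniquely determines the coefficients $A_{d,n}$ from $A_{d,0} = 1/8$. The key structural input is the weighted homogeneity of $m_d$: assigning $\mathrm{wt}(u_i) = i+2$ and $\mathrm{wt}(\lambda) = 2$, the defining relation~\eqref{defma} becomes weighted-homogeneous of weight~$2$ (since $\mathrm{wt}(b) = 0$ and $\mathrm{wt}(\partial) = 1$), which forces $m_d$ to be a polynomial of weighted degree $2d+2$. This is easily proved by induction from $m_0 = u_0$, extracting the coefficient of $\lambda^{-d-1}$ in~\eqref{defma} and using that everything below degree~$d$ already has the asserted weight.

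Under the ansatz, each derivative satisfies $u^{(i)}(t) = \sum_n A_{d,n}\,c_{n,i}\,t^{-(2d+1)n-2-i}$ for explicit constants $c_{n,i}$. For any monomial $u_0^{a_0}u_1^{a_1}\cdots$ of $m_d$ we have $\sum a_i(i+2) = 2d+2$, so substituting and collecting contributions by the total index-sum $N = \sum_k n_k$ over the factors gives
\begin{equation*}
m_d\bigl(u,u',u'',\dots\bigr) \= \sum_{N\ge 0}\,\frac{M_N}{t^{(2d+1)N + 2d+2}}\,,
\end{equation*}
where $M_N$ is a polynomial in $A_{d,0},\dots,A_{d,N}$ (only these, because pushing $n_k > N$ on any factor exceeds the prescribed total).

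Now one extracts the coefficient of $t^{-(2d+1)n-2}$ in~\eqref{p34}. The contributions are $2A_{d,n}$ from $2u$, $-((2d+1)n+2)A_{d,n}$ from $tu'$, and (since $(2d+1)N + 2d+3 = (2d+1)n+2$ forces $N = n-1$) the contribution $(2d+1)\bigl((2d+1)n+1\bigr)M_{n-1}$ from $-(2d+1)\tfrac{d}{dt}m_d$. Summing:
\begin{equation*}
-(2d+1)\,n\,A_{d,n} \+ (2d+1)\bigl((2d+1)n+1\bigr)\,M_{n-1}\bigl(A_{d,0},\dots,A_{d,n-1}\bigr) \= 0\,.
\end{equation*}
For $n=0$ this reads $0=0$, so $A_{d,0}$ is a free parameter that we set to $1/8$ as in the statement. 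For $n\ge 1$ it reads $A_{d,n} = \tfrac{(2d+1)n+1}{n}\,M_{n-1}$, a genuine recursion that determines each $A_{d,n}$ uniquely from the preceding ones, establishing both existence and uniqueness.

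The main obstacle is Step~2: one must verify that the weighted homogeneity $\mathrm{wt}(m_d) = 2d+2$ cleanly implies the gap structure $t^{-(2d+1)N - 2d-2}$ for $m_d(u,u',\dots)$ with $M_N \in \mathbb{Q}[A_{d,0},\dots,A_{d,N}]$. Once this is set up, Step~3 is a one-line extraction and the recursion is manifest. A useful sanity check is $d=1$: the recursion gives $A_{1,1} = 4 M_0 = 4(A_{1,0}^2/2 + A_{1,0}/2) = 9/32$ at $A_{1,0} = 1/8$, consistent with the first entry of Table~\ref{tablenormalizednumbers} and with~\eqref{smallest}.
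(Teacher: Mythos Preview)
Your proposal is correct and follows exactly the approach of the paper, which simply assigns $\deg u_i = i+2$, observes that $m_d$ is then homogeneous of degree $2d+2$, and declares that the lemma follows. You have unpacked what ``the lemma follows'' means: the homogeneity forces the gap structure $m_d(u,u',\dots)=\sum_{N\ge0}M_N\,t^{-(2d+1)N-2d-2}$, and equating powers in~\eqref{p34} yields the recursion $-(2d+1)n\,A_{d,n}+(2d+1)\bigl((2d+1)n+1\bigr)M_{n-1}=0$, trivial at $n=0$ and uniquely solvable for $n\ge1$.
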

\begin{proof}
If we assign degrees 
\beq \deg u_i=i+2\,, \quad i\ge0\,, \eeq
then the polynomials $m_d$ are homogeneous of degree $2d+2$. The lemma follows.
\end{proof}

\begin{lemma}\label{lem2}
The coefficients $A_{d,n}$ in~\eqref{solpa} are related to the BGW correlators by
\begin{align}\label{Adn}
A_{d,n} \=
\frac{((2d+1)n+1)!}{2^{2nd+1}\,(2d+1)!!^n\,n!} \,C(d^{n})\,, 
\end{align}
where $C(d_1,\dots,d_n)$ is defined in~\eqref{defC}.
\end{lemma}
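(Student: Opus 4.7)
My plan is to specialize the BGW partition function to $t_0=x$, $t_d=s$, $t_i=0$ for $i\neq 0,d$, and set $u(x,s):=\partial_x^2\log Z$; the goal is to identify $u$ with a scaling of the Painlev\'e XXXIV solution $v$ of Lemma~\ref{lemP34sol} and then read off $A_{d,n}$ directly from the BGW correlators. Dividing the Virasoro constraint $L_0Z=0$ (from~\eqref{LmZ=0}--\eqref{defL}) by $Z$ and differentiating twice in $t_0$ yields the quasilinear PDE
\[
(1-x)\,u_x \,-\, 2u \= (2d+1)\,s\,u_s,
\]
whose characteristics force $u(x,s)=(1-x)^{-2}f(T)$ with $T=s/(1-x)^{2d+1}$ and $f$ a formal power series. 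Setting $s=0$ gives $u(x,0)=\tfrac18(1-x)^{-2}$, which follows by iterating~\eqref{BB0} from $B(0)=\tfrac18$ to obtain $\langle\tau_0^m\rangle_1^\Theta=(m-1)!/8$; hence $f(0)=\tfrac18$.

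Next I identify the ODE satisfied by $f$. Inserting the self-similar ansatz into the KdV flow~\eqref{KdV} and using the $(2d+2)$-homogeneity of $m_d$ under $\deg u_i=i+2$ --- so that $m_d(u,u_x,\ldots)=(1-x)^{-(2d+2)}M_d(T,f,f',\ldots)$ for a universal polynomial $M_d$ --- collapses the KdV equation to the implicit first-order ODE
\[
f'(T) \= (2d+2)\,M_d \+ (2d+1)\,T\,\partial_T M_d.
\]
The analogous scaling $v(t)=t^{-2}h(T)$ with $T=t^{-(2d+1)}$ reduces~\eqref{p34} to the \emph{same} ODE for $h$; the sign discrepancies between $\partial_s$ and $\partial_t$ cancel because each monomial of $m_d$ is a product $\prod_j u_{i_j}$ with $\sum_j i_j$ even, as forced by the weighted homogeneity. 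Uniqueness of formal power-series solutions with value $\tfrac18$ at the origin then yields $f\equiv h$, and Lemma~\ref{lemP34sol} gives $f(T)=\sum_{n\ge0}A_{d,n}\,T^n$.

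Finally I match coefficients on the BGW side. From
\[
\log Z\big|_{t_i=0,\,i\neq 0,d} \= \sum_{g,m,k}\frac{x^m s^k}{m!\,k!}\,\langle\tau_0^m\tau_d^k\rangle_g^\Theta \qquad (g=kd+1),
\]
differentiating twice in $x$ and setting $x=0$ yields $[s^n]\,u(0,s)=\frac{1}{n!}\langle\tau_0^2\tau_d^n\rangle_{nd+1}^\Theta$ for $n\ge1$. Two applications of~\eqref{BB0} give $\langle\tau_0^2\tau_d^n\rangle_{nd+1}^\Theta=(n(2d+1)+1)\cdot n(2d+1)\cdot\langle\tau_d^n\rangle_{nd+1}^\Theta$, and inserting the normalization~\eqref{defC} of $C(d^n)$ in terms of $\langle\tau_d^n\rangle_{nd+1}^\Theta$ delivers~\eqref{Adn} for $n\ge1$ after the factorials telescope; the case $n=0$ is the initial condition $A_{d,0}=\tfrac18$ already verified above (using $C(\emptyset)=C(0)=\tfrac14$ via~\eqref{CC0}).

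The main obstacle I expect is the sign-tracking in the second paragraph: verifying that the reductions of the KdV flow and of~\eqref{p34} yield literally the same ODE for the same unknown power series, rather than one differing from the other by an involution on the polynomials $M_d$. Once the parity observation on monomials of $m_d$ is recorded, everything else reduces to routine manipulations with~\eqref{BB0}, \eqref{defC}, and the power-series expansion of $\log Z$.
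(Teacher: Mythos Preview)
Your proof is correct and rests on the same two ingredients as the paper's---the $L_0$ Virasoro constraint and the $d$th KdV flow---but you organize them differently. The paper substitutes $u_{t_d}=\partial_{t_0}m_d$ into the dilaton equation \emph{before} specializing, obtaining \eqref{dEL}, and then simply sets $t_d=1$ and $t=1-t_0$; the resulting ODE is literally~\eqref{p34}, and the expansion~\eqref{u'} is read off from~$Z$. You instead keep $t_d=s$ free, solve the dilaton PDE by characteristics to extract the self-similar form $u=(1-x)^{-2}f(T)$, and then reduce both the KdV flow and the Painlev\'e~XXXIV equation to a common ODE in the similarity variable~$T$. Your route is a little longer, but it has the virtue of making explicit the parity of $m_d$ under $u_i\mapsto(-1)^i u_i$---the point you flag as the ``main obstacle''---which the paper's substitution $t=1-t_0$ uses silently. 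Note that your setup with $s=1$ collapses exactly to the paper's: then $T=(1-x)^{-(2d+1)}=t^{-(2d+1)}$, and your $f$ is the paper's $u(t)$ rescaled, so the two arguments are really the same computation seen from two coordinate systems. Your endpoint calculation (two applications of~\eqref{BB0} followed by~\eqref{defC}) matches the paper's formula~\eqref{u'} term by term; the only cosmetic difference is that the paper writes the $n=0$ term $\tfrac{1}{8}(1-t_0)^{-2}$ separately rather than invoking $C(\emptyset)=C(0)=\tfrac14$.
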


\begin{proof}
As it was done in~\cite{BR}, dividing both sides of the $m=0$ case of~\eqref{LmZ=0} by~$Z$  and 
differentiating the resulting equality twice with respect to~$t_0$ and using~\eqref{KdV}, one obtains
\begin{align}\label{dEL}
2u-(1-t_0) \, u_{t_0}+\sum_{k\ge1}(2k+1)t_k \partial_{t_0} (m_k) \= 0 \,,
\end{align} 
where we recall that $u:= \partial_{t_0}^2(\log Z)$. Specializing ${\bf t}={\bf t}^*=(t_0^*,t_1^*,t_2^*,\dots)$ in~\eqref{dEL} with 
\begin{equation}\label{t*}
 t_d^*=1, \quad t_i^*=0\,( i\neq 0, \, i\neq d) \,,
\end{equation} 
we find
\begin{align}
2u(\mathbf{t^*})-(1-t_0)u_{t_0}(\mathbf{t^*})+
(2d+1)\partial_{t_0}(m_d(u(\mathbf{t^*}),u_{t_0}(\mathbf{t^*}),\dots))\=0\,.
\end{align}
The lemma is proved by noticing that
\begin{align}\label{u'}
u|_{\mathbf{t}=\mathbf{t^*}} \=\sum_{n\ge1}\frac{C(d^n)\,((2d+1)n+1)!}{2^{2nd+1}(1-t_0)^{(2d+1)n+2}(2d+1)!!^{n}n!} \+ \frac{1}{8(1-t_0)^2}
\end{align}
and by putting $t=1-t_0$.
\end{proof}

It is convenient to work with another 
normalization of the Painlev\'e XXXIV hierarchy:
\begin{align}
	2^{2d+1} (2d+1)!! \,\partial_X(m_d(Y/2, Y_X/2, Y_{XX}/2,\dots)) - XY_X-2Y \= 0 \,, \quad d\ge1\,, \label{newnormalizationp34}
\end{align}
which is related to~\eqref{p34} by the rescalings
\begin{equation} \label{rescale1028}
	t \= \frac12 \biggl(\frac{(2d-1)!!}{2}\biggr)^{-1/(2d+1)} X\,, \qquad u \= 2 \biggl(\frac{(2d-1)!!}{2}\biggr)^{2/(2d+1)} Y \,.
\end{equation}
The formal solution of interest (a solution to~\eqref{newnormalizationp34}) now has the form
\begin{align}\label{YX}
Y(X)
 \= \sum_{n\ge0}\frac{y_{d,n}}{X^{(2d+1)n+2}} \,, \qquad y_{d,0}=\frac14\,.
\end{align}
\begin{thm}\label{thmPainleveapporiginal}
For each $d\ge1$, the coefficients $y_{d,n}$ of the unique 
formal solution $Y(X)$ given in~\eqref{YX} to the Painlev\'e XXXIV hierarchy~\eqref{newnormalizationp34} have the following asymptotics:
\begin{align}
y_{d,n} \,\sim\, \frac{1}{\pi}\,\frac{((2d+1)n+1)!}{(2d+1)^nn!}\,, \qquad n\to\infty\,.\label{asymyak}
\end{align}
\end{thm}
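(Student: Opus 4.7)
The strategy is to reduce the theorem to Theorem~\ref{thmuniformasymptotics} via an explicit identification of $y_{d,n}$ with a simple multiple of the one-partition BGW number $C(d^n)$. All the bridges are already in place in the paper: Lemma~\ref{lem2} expresses $A_{d,n}$ (the coefficients of the unnormalized solution $u(t)$) through $C(d^n)$, and the rescaling~\eqref{rescale1028} turns $u(t)$ into $Y(X)$. So the plan is to (i) translate between $y_{d,n}$ and $A_{d,n}$ by substituting~\eqref{rescale1028} into the two Laurent expansions~\eqref{solpa} and~\eqref{YX}, (ii) compose with Lemma~\ref{lem2} to get a closed formula for $y_{d,n}$ in terms of $C(d^n)$, and (iii) invoke Theorem~\ref{thmuniformasymptotics} to replace $C(d^n)$ by $1/\pi$ in the limit $n\to\infty$.

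For step~(i), writing $c := \bigl((2d-1)!!/2\bigr)^{1/(2d+1)}$ so that $t=X/(2c)$ and $u=2c^2 Y$, the identity $u(t)=2c^2 Y(X)$ combined with $t^{-(2d+1)n-2}=(2c)^{(2d+1)n+2} X^{-(2d+1)n-2}$ gives, term by term,
\beq
y_{d,n} \= A_{d,n} \cdot 2^{2dn+1}\, (2d-1)!!^{\,n}\,. \nn
\eeq
Substituting the expression for $A_{d,n}$ from Lemma~\ref{lem2} and simplifying via $(2d+1)!!=(2d+1)(2d-1)!!$, the prefactors telescope and I expect to obtain the clean identity
\beq
y_{d,n} \= \frac{((2d+1)n+1)!}{(2d+1)^n\, n!}\, C(d^n)\,. \label{ydnid}
\eeq
(At this point it is also worth checking the initial term: for $n=0$ the identity gives $y_{d,0}=C(0)=1/4$, matching~\eqref{YX}.)

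For step~(iii), the asymptotic claim~\eqref{asymyak} is now equivalent to the single statement $C(d^n)\to 1/\pi$ as $n\to\infty$ for each fixed $d\ge 1$. Since $g(d^n)=dn+1\to\infty$ and $d^n\in(\mathbb{Z}_{\ge 1})^n$, this follows at once from Theorem~\ref{thmuniformasymptotics} applied to $\dd=d^n$.

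The main (and essentially only) subtlety is that, for fixed $d$ but $n\to\infty$, the number of arguments of the BGW number grows together with its genus; thus one really does need the \emph{uniform} leading asymptotics provided by Theorem~\ref{thmuniformasymptotics}. A non-uniform statement valid only for $n$ fixed (such as the result of~\cite{EGGGL}) would not suffice, which is exactly the point emphasized after Theorem~\ref{thmuniformasymptotics} in Section~\ref{secintro}. Everything else is algebraic book-keeping; the only computation I would double-check carefully is the rescaling factor in~\eqref{ydnid}, since an error there would shift the constant in~\eqref{asymyak} by a power of $2$ or of $(2d-1)!!$.
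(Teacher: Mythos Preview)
Your proposal is correct and follows exactly the approach of the paper: the paper's proof consists of the single sentence ``By using~\eqref{Adn}, \eqref{rescale1028} and Theorem~\ref{thmuniformasymptotics} we obtain~\eqref{asymyak},'' which is precisely your steps (i)--(iii). Your rescaling computation and the resulting identity $y_{d,n} = \frac{((2d+1)n+1)!}{(2d+1)^n\, n!}\, C(d^n)$ are right, and your observation that the \emph{uniform} statement of Theorem~\ref{thmuniformasymptotics} is genuinely needed (since $n\to\infty$) is on point.
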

\begin{proof}
By using~\eqref{Adn}, \eqref{rescale1028} and Theorem~\ref{thmuniformasymptotics} we obtain~\eqref{asymyak}.
\end{proof}
\begin{cor}\label{corasyydn}
For each $d\ge1$, we have
\begin{align}
y_{d,n} \,\sim\, \frac{1}{\pi}\,\frac{((2d+1)n+1)!}{ (2d+1)^nn!}\,\Bigl(1+\frac{r_1(d)}{n}+\frac{r_2(d)}{n^2}+\cdots\Bigr) \qquad (n\to\infty)\,, \label{fullasymyak}
\end{align}
with explicitly computable coefficients $r_k(d) \in \QQ$.
\end{cor}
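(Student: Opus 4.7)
The plan is to reduce Corollary~\ref{corasyydn} to producing a full asymptotic expansion of the BGW correlators $C(d^n)$ as $n\to\infty$, and then to extract that expansion from the Painlev\'e XXXIV equation~\eqref{newnormalizationp34} by an ansatz-and-matching procedure bootstrapped from the leading asymptotic already at hand.

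Combining Lemma~\ref{lem2} with the rescaling~\eqref{rescale1028}, a short calculation yields the clean identity
\begin{equation*}
y_{d,n} \= \frac{((2d+1)n+1)!}{(2d+1)^n\, n!}\, C(d^n),
\end{equation*}
so~\eqref{fullasymyak} is equivalent to
\begin{equation*}
C(d^n) \,\sim\, \frac{1}{\pi}\Bigl(1+\frac{r_1(d)}{n}+\frac{r_2(d)}{n^2}+\cdots\Bigr), \qquad n \to \infty,
\end{equation*}
with $r_k(d)\in\mathbb{Q}$. The leading term $1/\pi$ is exactly Theorem~\ref{thmPainleveapporiginal} (equivalently, the specialization of Theorem~\ref{thmuniformasymptotics} to the partition $d^n$, for which $X(d^n)=n(2d+1)\to\infty$).

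To obtain the subleading terms, I would substitute $Y(X)=\sum_{n\geq 0} y_{d,n}/X^{(2d+1)n+2}$ into~\eqref{newnormalizationp34}. The linear piece $-XY_X-2Y$ contributes $(2d+1)n\, y_{d,n}$ to the coefficient of $X^{-(2d+1)n-2}$, while the nonlinear piece $2^{2d+1}(2d+1)!!\,\partial_X(m_d(Y/2,Y_X/2,\dots))$ contributes a sum of quadratic-and-higher monomials in $y_{d,0},\ldots,y_{d,n-1}$ with coefficients that are polynomial in~$n$; solving for $y_{d,n}$ yields an explicit recursion. Plugging in the ansatz $y_{d,n} = \frac{1}{\pi}\frac{((2d+1)n+1)!}{(2d+1)^n n!}\bigl(1+\sum_{k\geq 1}r_k(d)/n^k\bigr)$ and expanding each factorial ratio $((2d+1)(n-j)+1)!/((2d+1)n+1)!$ by Stirling, the recursion collapses into a triangular linear system for the $r_k(d)$ in which every coefficient is uniquely determined as a rational number. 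As a consistency check, for $d=1$ this should reproduce the expansion~\eqref{asymptoticsvg} after the index shift $g=n+1$.

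The main obstacle will be promoting this formal expansion to an honest asymptotic one. My approach would be induction on~$k$: assuming $y_{d,n}$ equals the $k$-th partial sum of the ansatz up to an error of size $o(n^{-k+1})$ after dividing by the factorial prefactor, substituting back into the recursion, and using the uniform bound $|C(\dd)-1/\pi|\leq K/g(\dd)$ from Theorem~\ref{thmuniformasymptotics} to control the quadratic self-convolutions $\sum_{0<j<n}y_{d,j}\,y_{d,n-j}$ and show that they contribute at strictly smaller order than the linear part. The base case $k=0$ is Theorem~\ref{thmPainleveapporiginal}, and the inductive step then identifies the next correction as the $r_k(d)$ produced by the formal matching above.
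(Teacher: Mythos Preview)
Your proposal is correct and follows essentially the same route as the paper. Both arguments use the ODE~\eqref{newnormalizationp34} to extract a recursion for $y_{d,n}$, isolate the dominant linear contribution $(2d+1)n\,y_{d,n}-((2d+1)(n-1)+2)_{2d+1}\,y_{d,n-1}$, and show that the nonlinear convolutions are of strictly lower relative order, so that modulo any fixed power of $1/n$ the recursion becomes linear and finite-order and hence admits a full asymptotic expansion; the leading constant is then fixed by Theorem~\ref{thmuniformasymptotics}/Theorem~\ref{thmPainleveapporiginal}.

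The differences are cosmetic. The paper works directly with $y_{d,n}$ and first produces the expansion with an undetermined constant $A$ (only afterwards identifying $A=1/\pi$), whereas you pass through the identity $y_{d,n}=\frac{((2d+1)n+1)!}{(2d+1)^n n!}\,C(d^n)$ and bootstrap from the known constant. Your induction on~$k$ is just an explicit way of phrasing the paper's sentence ``up to any relative power in~$n^{-1}$, the recursion is linear and of finite order''. One point to tighten: the uniform bound $|C(d^j)-1/\pi|\le K/g$ from Theorem~\ref{thmuniformasymptotics} gives you boundedness of the $C$-factors in the convolutions, but to show the convolution $\sum y_{d,j}\,y_{d,n-1-j}$ (and higher-degree analogues) is negligible you still need a combinatorial estimate on the factorial prefactors. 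The paper does this via the logarithmic convexity of $m\mapsto ((2d+1)m+1+i)!/m!$, which forces the product $\prod_j ((2d+1)n_j+1+i_j)!/n_j!$ over $\sum n_j=n-1$ to be dominated by the boundary terms; you will need the same estimate in your inductive step.
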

\begin{proof}
Let us first show, without using Theorem~\ref{thmPainleveapporiginal}, that
\begin{align}
y_{d,n} \,\sim\, A\,\frac{((2d+1)n+1)!}{ (2d+1)^nn!}\,\Bigl(1+\frac{r_1(d)}{n}+\frac{r_2(d)}{n^2}+\cdots\Bigr)\,, \qquad n\to\infty\,, \label{fullasymyakA}
\end{align}
for some nonzero constant~$A$. 
Using~\eqref{newnormalizationp34},~\eqref{YX} and using the homogeneity of $m_d$, we obtain a recursion that expresses $(2d+1)\,n\,y_{d,n}$ by an element in
$${\rm span}_{\mathbb{Q}}\Bigl\{\sum_{n_1+\cdots+n_k=n-1}\prod_{j=1}^{k}(((2d+1)n_j+2)_{i_j}\,y_{d,n_j})\,\big|\, k\ge1, \mathbf{i}\in(\mathbb{Z}_{\ge0})^k, |\mathbf{i}|+2k=2d+3\Bigr\}\,.$$
The leading asymptotics in~\eqref{fullasymyakA} of~$y_{d,n}$ can be deduced from
the linear terms 
$$(2d+1)n\,y_{d,n} \m ((2d+1)(n-1)+2)_{2d+1}\; y_{d,n-1}\,.$$ 
It follows that
\begin{align}
\prod_{j=1}^{k}(y_{d,n_j}((2d+1)n_j+2)_{i_j}) &\= O\Bigl(y_{d,n}\frac{n!}{((2d+1)n+1)!}\prod_{j=1}^k \frac{((2d+1)n_j+1+i_j)!}{n_j!}\Bigr)\label{prodydnestimate}
\end{align}
for each $k\ge0$, $i_1+\cdots+i_k+2k=2d+2$ and $n_1+\cdots+n_k=n-1$. By using the logarithmic convexity of the function $((2d+1)n+1+i_j)!/n!$, we obtain the right-hand side of~\eqref{prodydnestimate} is $O(y_{d,n}n^{-2d(h-1)-2k+2})$ when $n_1,\dots,n_k\leq n-h$. This shows that up to any relative power in~$n^{-1}$,
the recursion that $y_{d,n}$ satisfies is linear and of finite order. This proves~\eqref{fullasymyakA}. The determination of~$A=1/\pi$ follows from Theorem~\ref{thmPainleveapporiginal}. 
\end{proof}

\begin{cor}\label{thmPainleve2}
For each $d\ge1$, we have the asymptotic expansion
\beq \widehat{C}(d^n) \;\sim\; 1 + \frac{a_1(d)}{X(d^n)} + \frac{a_2(d)}{X(d^n)^2} + \cdots  \qquad (n\to\infty) \label{Chatd^nasy} \eeq 
with explcitly computable $a_k(d)\in\mathbb{Q}$ , the first three cases being
\begin{align*}
&\widehat{C}(1^n) \;\sim\;1-\frac{9}{8 \,(3n)^3}-\frac{9}{4 \,(3n)^4}-\frac{219}{8 \,(3n)^5}+\cdots\,, \\
&\widehat{C}(2^n) \;\sim\; 1-\frac{225}{16 \,(5n)^5}-\frac{2025}{16 \,(5n)^6}-\frac{96075}{128 \,(5n)^7}+\cdots\,,\\
&\widehat{C}(3^n) \;\sim\; 1-\frac{55125}{128 \,(7n)^7}-\frac{275625}{32 \,(7n)^8}-\frac{3340575}{32 \,(7n)^9}+\cdots\,.
\end{align*}
\end{cor}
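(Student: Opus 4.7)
The plan is to derive the expansion~\eqref{Chatd^nasy} by converting the Painlev\'e XXXIV asymptotics of Corollary~\ref{corasyydn} into an asymptotics of~$C(d^n)$ via the dictionary provided by Lemma~\ref{lem2} and the rescaling~\eqref{rescale1028}, and then dividing by $\gamma(X(d^n))$ using the expansion~\eqref{gammaexpansion}.

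First I would establish the closed form
\[
C(d^n) \= \frac{(2d+1)^n\, n!}{((2d+1)n+1)!}\, y_{d,n}\,, \qquad n\ge1\,.
\]
Setting $c=((2d-1)!!/2)^{1/(2d+1)}$, the rescaling~\eqref{rescale1028} reads $t=X/(2c)$ and $u=2c^2\,Y$. Substituting into the two Laurent expansions $u(t)=\sum_n A_{d,n}\,t^{-(2d+1)n-2}$ and $Y(X)=\sum_n y_{d,n}\,X^{-(2d+1)n-2}$ and matching coefficients gives $A_{d,n}=y_{d,n}/(2^{2dn+1}\,(2d-1)!!^n)$. Plugging this into Lemma~\ref{lem2} and cancelling using $(2d+1)!!=(2d+1)(2d-1)!!$ produces the displayed identity (whose consistency may be spot-checked via $y_{1,1}=9/4$ and $C(1)=9/32$).

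Second, applying Corollary~\ref{corasyydn} to the right-hand side and using $X(d^n)=(2d+1)n$ yields
\[
C(d^n) \;\sim\; \frac{1}{\pi}\,\sum_{k\ge0}\frac{r_k(d)}{n^k} \= \frac{1}{\pi}\,\sum_{k\ge0}\frac{(2d+1)^k\,r_k(d)}{X(d^n)^k}\,,
\]
with $r_0(d)=1$. Since $\gamma(X)\sim 1/\pi+O(1/X)$ by~\eqref{gammaexpansion}, formal-series division of the above expansion by the expansion of $\gamma(X(d^n))$ in powers of $1/X(d^n)$ produces the asserted expansion~\eqref{Chatd^nasy}, with $a_0(d)=1$ automatic and each $a_k(d)\in\QQ$ given explicitly as a polynomial expression in $r_1(d),\dots,r_k(d)$ and the coefficients of~$\gamma$.

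Finally, the explicit values of $a_k(d)$ displayed in the three tabulated cases are obtained by running the recursion that $y_{d,n}$ inherits from substituting~\eqref{YX} into~\eqref{newnormalizationp34}, reading off $r_1(d),\dots,r_K(d)$ for the required~$K$, and then performing the formal division. As already observed in the proof of Corollary~\ref{corasyydn}, only finitely many terms of the Painlev\'e recursion contribute to any fixed order $1/n^K$, so each $r_k(d)$ is an effectively computable rational function of~$d$. I do not expect any conceptual obstacle here; the only genuine nuisance is that $m_d$ grows in length with~$d$, so the bookkeeping becomes heavier as $d$ increases, but it is not harder in principle.
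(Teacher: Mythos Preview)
Your proof is correct and follows essentially the same approach as the paper's: the paper's one-line proof simply cites \eqref{normalizehatc61}, \eqref{Adn}, \eqref{rescale1028}, \eqref{YX}, Corollary~\ref{corasyydn} and Stirling's formula, which is exactly the chain you have spelled out (with Stirling appearing implicitly through the expansion~\eqref{gammaexpansion} of~$\gamma$). Your intermediate identity $C(d^n)=\dfrac{(2d+1)^n\,n!}{((2d+1)n+1)!}\,y_{d,n}$ is the clean way to combine \eqref{Adn} with~\eqref{rescale1028}, and the rest is straightforward formal division, just as the paper intends.
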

\begin{proof}
Formula~\eqref{Chatd^nasy} is proved by using~\eqref{normalizehatc61}, \eqref{Adn}, \eqref{rescale1028}, \eqref{YX}, Corollary~\ref{corasyydn} and Stirling's formula. 
\end{proof}
 
By~\eqref{chatked} and the conjectural formula~\eqref{Chatd1dnuniformindn0808}, one can deduce that for any fixed $d\ge1$,
\beq 1-
\widehat{C}(d^n) \;\sim\;  \frac{(2d+1)!!^2\thin (2d-1)!!}{2^{d+1}\thin(d+1)!\,X(d^n)^{2d+1}} \qquad  (\thin X(d^n) = (2d+1)n\thin)
\eeq 
to leading order as $n\to\infty$. 
This conjectural formula is consistent with Corollary~\ref{thmPainleve2}. Compare also with 
the asymptotic formula~\eqref{dddasympleading} when $n$ is fixed and $d$ tends to infinity. 

\begin{proof}[Proof of Theorem~\ref{thmPainleveapp}]
According to~\cite{CJP} (cf.~\cite{FA}), performing the following invertible transformation 
\begin{align}
	&Y=V_X-V^2, \label{VtoY}\\
	&V=-\frac{2^{2d-1} (2d-1)!! \, \partial_X(
		m_{d-1}(\frac Y2, \frac{Y_X}2,\dots))-\alpha_d}{2^{2d} (2d-1)!! \,
		m_{d-1}(\frac Y2, \frac{Y_X}2,\dots)-X} \label{YtoV}
\end{align}
on the Painlev\'e~XXXIV hierarchy yields the Painlev\'e II hierarchy~\eqref{P2hier}.

We note that in general, $\alpha_d$ could be an arbitrary constant. But the particular solution~$V(X)$
derived by the above transformation of the power series in~\eqref{YX} only solves the Painlev\'e~II hierarchy for the 
parameter~$\alpha_d=\frac{1}2$. This can be seen by comparing the coefficients of~$X^{-2}$ on both sides of~\eqref{VtoY}, 
and by noticing that $V(X)$ has the leading term $-\alpha_d/X$. 
So $Y(X)$ defined in~\eqref{YX} corresponds to the  formal solution~\eqref{solP2hier} to~\eqref{P2hier} with $\alpha_d=1/2$.

To prove formula~\eqref{asymvak}, we note that the transformation~\eqref{VtoY} gives the following relations between $v_{d,n}$ and $y_{d,n}$:
\begin{align}\label{ydnvdn}
y_{d,n} \= ((2d+1)n+1)\, v_{d,n} \,-\, \sum_{n_1+n_2=n} \, v_{d,n_1}\,v_{d,n_2}\,.
\end{align}
Using the asymptotics~\eqref{asymyak} of $y_{d,n}$ and facts about asymptotics of very rapidly divergent series (cf.~\cite{CMZ}), it is easy to show that $v_{d,n}$ is asymptotically equal to 
$y_{d,n}/((2d+1)n-1)$,  as claimed in~\eqref{asymvak}.
\end{proof}

Similar to Corollary~\ref{corasyydn}, we have the following
\begin{cor}
For fixed $d\ge1$,  the coefficients $v_{d,n}$ of the formal solution~\eqref{solP2hier} to the Painlev\'e~II hierarchy have the following asymptotic expansion:
\begin{align}
v_{d,n} \,\sim\, \frac{1}{\pi} \, \frac{((2d+1)n-1)!}{(2d+1)^{n-1} (n-1)!} \,\Bigl(1+\frac{s_1(d)}{n}+\frac{s_2(d)}{n^2}+\cdots\Bigr)\,, \qquad n\to\infty\,,\label{asymvdn}
\end{align}
with explicitly computable coefficients $s_k(d)\in \QQ$.
\end{cor}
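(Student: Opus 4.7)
The plan is to bootstrap from Corollary~\ref{corasyydn} by inverting the quadratic relation~\eqref{ydnvdn}. Using $v_{d,0}=1/2$, the two boundary terms of the convolution on the right-hand side of~\eqref{ydnvdn} contribute exactly $v_{d,n}$, leaving
\[
y_{d,n} \;=\; (2d+1)\,n\,v_{d,n} \;-\; Q_n, \qquad Q_n := \sum_{n_1=1}^{n-1} v_{d,n_1}\,v_{d,n-n_1}.
\]
Introducing the reference scale $C_{d,n} := \frac{1}{\pi}\,\frac{((2d+1)n-1)!}{(2d+1)^{n-1}(n-1)!}$ and setting $V_n := v_{d,n}/C_{d,n}$, Theorem~\ref{thmPainleveapp} gives $V_n\to 1$, and the task reduces to producing rational $s_k(d)$ realising the claimed expansion $V_n\sim 1+\sum_{k\ge1}s_k(d)/n^k$.

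The key quantitative input is that $Q_n/C_{d,n}$ itself admits a complete asymptotic expansion in $1/n$ with rational coefficients. For each fixed $n_1\ge 1$, the ratio $C_{d,n-n_1}/C_{d,n}$ is a rational function of $n$, of numerator degree $n_1$ and denominator degree $(2d+1)n_1$, whose leading term is of order $((2d+1)n)^{-2d\,n_1}$; hence each fixed-$n_1$ contribution has a genuine $1/n$-series expansion with rational coefficients. For $n_1$ in the bulk, say $\varepsilon n\le n_1\le (1-\varepsilon)n$, Stirling's formula gives $C_{d,n_1}C_{d,n-n_1}/C_{d,n}$ exponentially small in $n$ (the factorial ratio is essentially an inverse central binomial coefficient). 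Consequently, for any target order $K$, modulo $O(n^{-K})$ only the boundary terms with $n_1\le \lceil K/(2d)\rceil$ (and their symmetric counterparts) contribute to $Q_n$, and these involve only finitely many of the rational numbers $v_{d,n_1}$ together with values of the sequence $V_m$ itself at indices $m=n-n_1$.

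With this in hand the corollary follows by strong induction on $K$. Assume $V_m=1+\sum_{j=1}^{K}s_j(d)/m^j+O(m^{-K-1})$ has been established. Substituting into the truncated form of $Q_n/C_{d,n}$, expanding $(n-n_1)^{-j}$ in powers of $1/n$, and using the explicit expansion of $y_{d,n}/((2d+1)n\,C_{d,n})$ coming from Corollary~\ref{corasyydn}, the identity
\[
V_n \;=\; \frac{y_{d,n}}{(2d+1)\,n\,C_{d,n}} \;+\; \frac{Q_n}{(2d+1)\,n\,C_{d,n}}
\]
pins down $s_{K+1}(d)$ uniquely as a rational combination of $r_1(d),\dots,r_{K+1}(d)$, of $s_1(d),\dots,s_K(d)$, and of finitely many rational numbers $v_{d,n_1}$. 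This simultaneously yields existence, rationality, and explicit computability of the coefficients.

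The main obstacle is the uniform estimate controlling the bulk of $Q_n$: one needs $C_{d,n_1}C_{d,n-n_1}/C_{d,n}$ to be not merely pointwise but \emph{uniformly} exponentially small when $n_1$ stays away from the endpoints, so that the sum of these bulk contributions is negligible at every algebraic order in $1/n$. This is the same rapid-divergence argument invoked in the last paragraph of the proof of Theorem~\ref{thmPainleveapp} (cf.~\cite{CMZ}) and in the inductive step of Corollary~\ref{corasyydn}, and it is really the only non-routine ingredient here.
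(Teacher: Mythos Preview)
Your proposal is correct and is essentially the argument the paper has in mind. The paper's proof is the single line ``Similar to the proof of Corollary~\ref{corasyydn}'', and your write-up is a faithful unpacking of that: you invert the quadratic relation~\eqref{ydnvdn}, feed in the full expansion of $y_{d,n}$ from Corollary~\ref{corasyydn}, and control the convolution $Q_n$ via the same rapidly-divergent-series estimate (cf.~\cite{CMZ}) already invoked at the end of the proof of Theorem~\ref{thmPainleveapp}. The only mild difference in emphasis is that the paper's reference to Corollary~\ref{corasyydn} could also be read as ``redo the ODE-recursion analysis directly for the Painlev\'e~II equation'', whereas you bootstrap from the $y_{d,n}$ expansion via~\eqref{ydnvdn}; your route is cleaner since it recycles Corollary~\ref{corasyydn} wholesale rather than setting up and estimating a new recursion.
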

\begin{proof}
Similar to the proof of Corollary~\ref{corasyydn}.
\end{proof}

\section{Application to BGW-kappa numbers} \label{seckappaclass} 
For $g,m\ge1$, $n\ge0$, $d_1,\dots,d_n\ge0$, define the BGW-kappa numbers by
\beq\label{defkappa}
\Bigl\langle\kappa_1^m\prod_{j=1}^n\tau_{d_j}\Bigr\rangle_g^{\Theta} \;:=\; 
\int_{\overline{\mathcal{M}}_{g,n}} \, \kappa_1^m \, \psi_1^{d_1}\cdots\psi_n^{d_n} \, \Theta_{g,n} \,,
\eeq
which vanishes unless $m+d_1+\cdots+d_n=g-1$. One could of course also include powers of other kappa's in~\eqref{defkappa}, but we will study only the integrals with a power of $\kappa_1$ and here we use the terminology ``BGW-kappa numbers" for convenience. Another name that can be found in the literature is super JT gravity. 
According to~\cite{Norbury}, these numbers are related to the volumes of moduli spaces of super hyperbolic surfaces, called Stanford--Witten volumes $V_{g,n}^{\Theta}(\mathbf{L})$. This relation is given by
\begin{align}\label{SW-kappa}
V_{g,n}^{\Theta}(L_1,\dots,L_n)\=\sum_{m,d_1,\dots,d_n\ge0\atop m+d_1+\cdots+d_n=g-1}\langle\kappa_{1}^{m}\tau_{d_1}\cdots\tau_{d_n}\rangle_g^{\Theta}\frac{(2\pi^2)^{m}}{m!}\prod_{j=1}^{n}\frac{L_j^{2d_j}}{2^{d_j}d_j!}.
\end{align}

As proved in~\cite{Norbury} (cf.~\cite{KMZ,LX3,MZ}), the BGW-kappa numbers can be expressed in terms of the BGW numbers as follows:
\begin{align}
&\langle\kappa_1^m\prod_{j=1}^n\tau_{d_j}\rangle^{\Theta}_g
\=\sum_{l=1}^{m}\frac{(-1)^{m-l}}{l!}\sum_{m_1,\dots,m_l\ge1\atop m_1+\cdots+m_l=m}
\binom{m}{m_1,\dots,m_l}\biggl\langle\prod_{j=1}^n\tau_{d_j}\prod_{i=1}^l\tau_{m_i}\biggr\rangle^{\Theta}_g\,.\label{KMZ}
\end{align}
Using~\eqref{KMZ} we compute a few BGW-kappa numbers in Table~\ref{tablekappaBGW}. 
%We conjecture this is true in general. Actually, 
It follows from~\eqref{KMZ} and the Integrality Conjecture for the BGW numbers that 
the BGW-kappa numbers are integral away from powers of~$2$. We refer to the latter statement as the Integrality Conjecture for the 
BGW-kappa numbers.

\begin{table}[h!]
\begin{center}
\begin{tabular}{|l|l|}
\hline
\multicolumn{2}{|c|}{$g=2$}\\
\hline
$\langle\kappa_1\rangle^{\Theta}=\frac{3}{128}\approx2.34\times10^{-2}$&\\
\hline
\multicolumn{2}{|c|}{$g=3$}\\
\hline
$\langle\kappa_1\tau_{1}\rangle^{\Theta}=\frac{63}{512}\approx1.23\times10^{-1}$&$\langle\kappa_1^2\rangle^{\Theta}=\frac{111}{1024}\approx1.08\times10^{-1}$\\
\hline
\multicolumn{2}{|c|}{$g=4$}\\
\hline
$\langle\kappa_{1}\tau_{1}^2\rangle^{\Theta}=\frac{7221}{2048}\approx3.53$
&$\langle\kappa_1^2\tau_{1}\rangle^{\Theta}=\frac{106911}{32768}\approx3.26$
\\ $\langle\kappa_1\tau_{2}\rangle^{\Theta}=\frac{8625}{32768}\approx2.63\times10^{-1}$
&$\langle\kappa_1^3\rangle^{\Theta}=\frac{45093}{16384}\approx2.75$\\
\hline
\multicolumn{2}{|c|}{$g=5$}\\
\hline
$\langle\kappa_1\tau_{1}^3\rangle^{\Theta}=\frac{4825971}{16384}\approx2.95\times10^{2}$ 
& $\langle\kappa_1^2\tau_2\rangle^{\Theta}=\frac{1974135}{131072}\approx1.51\times10$ \\ 
$\langle\kappa_1\tau_1\tau_2\rangle^{\Theta}=\frac{524925}{32768}\approx1.60\times10$ 
&$\langle\kappa_1^3\tau_1\rangle^{\Theta}=\frac{16199169}{65536}\approx2.47\times10^2$\\
$\langle\kappa_{1}\tau_{3}\rangle^{\Theta}=\frac{44835}{65536}\approx6.84\times10^{-1}$  
& $\langle\kappa_1^4\rangle^{\Theta}=\frac{53483271}{262144}\approx2.04\times10^2$  \\
$\langle\kappa_1^2\tau_1^2\rangle^{\Theta}=\frac{9127017}{32768}\approx2.79\times10^2$ &  \\
\hline
\multicolumn{2}{|c|}{$g=6$}\\
\hline
$\langle\kappa_{1}\tau_1^4\rangle^{\Theta}=\frac{3540311739}{65536}\approx5.40\times10^4$
&		$\langle\kappa_{1}^2\tau_1\tau_2\rangle^{\Theta}=\frac{1155623625}{524288}\approx2.20\times10^3$ \\
$\langle\kappa_{1}\tau_1^2\tau_2\rangle^{\Theta}=\frac{605705625}{262144}\approx2.31\times10^3$
& $\langle\kappa_{1}^2\tau_3\rangle^{\Theta}=\frac{151428375}{2097152}\approx7.22\times10$\\
$\langle\kappa_{1}\tau_{2}^2\rangle^{\Theta}=\frac{55787625}{524288}\approx1.06\times10^2$ 
& $\langle\kappa_1^3\tau_1^2\rangle^{\Theta}=\frac{386376633}{8192}\approx4.72\times10^4$ \\ 
$\langle\kappa_{1}\tau_1\tau_3\rangle^{\Theta}=\frac{19922175}{262144}\approx7.60\times10$
&	$\langle\kappa_1^3\tau_2\rangle^{\Theta}=\frac{4184142525}{2097152}\approx2.00\times10^3$ \\
$\langle\kappa_{1}\tau_{4}\rangle^{\Theta}=\frac{8831025}{4194304}\approx2.11$
& $\langle\kappa_1^4\tau_1\rangle^{\Theta}=\frac{171037302471}{4194304}\approx4.08\times10^4$ \\ 
$\langle\kappa_{1}^2\tau_1^3\rangle^{\Theta}=\frac{13555541331}{262144}\approx5.17\times10^4$
& $\langle\kappa_1^5\rangle^{\Theta}=\frac{69673098483}{2097152}\approx3.32\times10^4$\\
\hline
\end{tabular}
\end{center}
\caption{BGW-kappa numbers up to genus~6}
\label{tablekappaBGW}
\end{table}

We also provide a table (see Table~\ref{tablenormalizedkappanumbers}) for the normalized BGW-kappa numbers $C(m;\mathbf{d})$ defined in~\eqref{defCkappa} with $g\leq7$. The data for $C(1;\dd)$ is omitted since $C(1;\dd)=C(1,\dd)$. As before, we have listed in Table~\ref{tablenormalizedkappanumbers} the smallest common denominator $D=D_g$ of these numbers for each~$g$ and then tabulated the integer $D\, C(m;\dd)$ in the last column.
Inspired by Conjecture~\ref{conjmonotoncity}, and based on Table~\ref{tablenormalizedkappanumbers} and further numerical experiments,  we also conjecture that the function $(m,\dd)\mapsto C(m;\dd)$ for partitions $(m,\dd)$ of $g-1$ is monotone with respect to the ordering that
$(m,\dd)\prec (m',\dd')$ if $m>m'$, or $m=m'$ and $\dd\prec \dd'$. 

\begin{table}[phbt]
\begin{center}
\begin{tabular}{|l|c|l|l|}
\hlinew{1.5pt}
\multicolumn{4}{|c|}{$g=3,\quad D=1280$}\\
\hlinew{1.5pt}
$C(2;\emptyset)$&$\frac{333}{1280}$&$0.260156$&$333$\\
\hlinew{1.5pt}
\multicolumn{4}{|c|}{$g=4,\quad D=1146880$}\\
\hlinew{1.5pt}
$C(3;\emptyset)$&$\frac{135279}{573440}$&$0.235908$&$270558$\\
\hline
$C(2;1)$&$\frac{45819}{163840}$&$0.279657$&$320733$\\
\hlinew{1.5pt}
\multicolumn{4}{|c|}{$g=5,\quad D=252313600$}\\
\hlinew{1.5pt}
$C(4;\emptyset)$&$\frac{53483271}{252313600}$&$0.211971$&$53483271$\\
\hline
$C(3;1)$&$\frac{2314167}{9011200}$&$0.256810$&$64796676$\\
\hline
$C(2;2)$&$\frac{131609}{458752}$&$0.286885$&$72384950$\\
$C(2;1,1)$&$\frac{9127017}{31539200}$&$0.289386$&$73016136$\\
\hlinew{1.5pt}
\multicolumn{4}{|c|}{$g=6,\quad D=734737203200$}\\
\hlinew{1.5pt}
$C(5;\emptyset)$&$\frac{69673098483}{367368601600}$&$0.189654$&$139346196966$\\
\hline
$C(4;1)$&$\frac{24433900353}{104962457600}$&$0.232787$&$171037302471$\\
\hline
$C(3;2)$&$\frac{278942835}{1049624576}$&$0.265755$&$195259984500$\\
$C(3;1,1)$&$\frac{386376633}{1435033600}$&$0.269246$&$197824836096$\\
\hline
$C(2;3)$&$\frac{3365075}{11534336}$&$0.291744$&$214355277500$\\
$C(2;1,2)$&$\frac{5926275}{20185088}$&$0.293597$&$215716410000$\\
$C(2;1,1,1)$&$\frac{13555541331}{45921075200}$&$0.295192$&$216888661296$\\
\hlinew{1.5pt}
\multicolumn{4}{|c|}{$g=7,\quad D=399697038540800$}\\
\hlinew{1.5pt}
$C(6;\emptyset)$&$\frac{1057428386631}{6245266227200}$&$0.169317$&$67675416744384$\\
\hline
$C(5;1)$&$\frac{1196989428069}{5709957693440}$&$0.209632$&$83789259964830$\\
\hline
$C(4;2)$&$\frac{103748833683}{427483463680}$&$0.242697$&$97005159493605$\\
$C(4;1,1)$&$\frac{2242040330133}{9084023603200}$&$0.246811$&$98649774525852$\\
\hline
$C(3;3)$&$\frac{31418131}{115343360}$&$0.272388$&$108872620991680$\\
$C(3;1,2)$&$\frac{80848213893}{293894881280}$&$0.275092$&$109953570894480$\\
$C(3;1,1,1)$&$\frac{6931945897497}{24981064908800}$&$0.277488$&$110911134359952$\\
\hline
$C(2;4)$&$\frac{354207573}{1199570944}$&$0.295279$&$118021963323600$\\
$C(2;1,3)$&$\frac{222438209}{749731840}$&$0.296690$&$118586257982080$\\
$C(2;2,2)$&$\frac{4360002121}{14694744064}$&$0.296705$&$118592057691200$\\
$C(2;1,1,2)$&$\frac{3184112229}{10687086592}$&$0.297940$&$119085797364600$\\
$C(2;1,1,1,1)$&$\frac{466903889307}{1561316556800}$&$0.299045$&$119527395662592$\\
			\hline
		\end{tabular}\vskip 7pt
	\end{center}
	\caption{Numerical data for $C(m,\dd)$ with $g\le 7$} \label{tablenormalizedkappanumbers}
\end{table}

We now give a proof of Proposition~\ref{corkappa}. 
\begin{proof}[Proof of Proposition~\ref{corkappa}]
If follows from~\eqref{KMZ} and~\eqref{defCkappa} that 
\begin{align}
&C(m;\dd)
\= 3^m\,\sum_{l=1}^{m}\frac{(-1)^{m-l}}{l!}\sum_{m_1,\dots,m_l\ge1\atop m_1+\cdots+m_l=m}
\binom{m}{m_1,\dots,m_l} \nn\\
&\qquad\qquad\qquad\qquad \times \frac{C(d_1,\dots,d_n,m_1,\dots,m_l)}{(X(m;\dd)-m+l)_{m-l}\,\prod_{i=1}^l (2m_i+1)!!}\, \label{Cmd}.
\end{align}
For fixed $m\ge0$, the summation in the RHS of~\eqref{Cmd} is finite. The summand corresponding to $l=m$, $m_1=\cdots=m_l=1$ contributes the leading term $C(\dd,1^m)$, which by Theorem~\ref{thmuniformasymptotics} equals $1/\pi+O(1/g(m;\dd))$ uniformly as $g(m;\dd)\to\infty$. The rest summands equal $O(1/(X(\mathbf{m;d})-m+l)_{m-l})$, which also equals $O(1/g(m;\dd))$ uniformly as $g(m;\dd)\to\infty$. This proves the proposition.
\end{proof}
For the special case when $n\ge1$ and $d_1,\dots,d_{n-1}\ge0$ are all fixed, 
Proposition~\ref{corkappa} is analogous to a  result given by Liu--Xu~\cite{LX} for Witten's intersection numbers.

\begin{remark}
In another direction, one can also study the large genus asymptotics of the numbers 
$\langle\kappa_1^{m}\tau_{d_1}\cdots\tau_{d_n}\rangle_g$ when $m$ is no longer fixed.
In particular, it was conjectured by 
Griguolo--Papalini--Russo--Seminara~\cite{GPRS} that $\langle\kappa_1^{m}\tau_{d_1}\cdots\tau_{d_n}\rangle_g$ 
has the following leading asymptotics for fixed $n$ and fixed $\dd=(d_1,\dots,d_n)\in(\mathbb{Z}_{\ge0})^n$ as $g\to\infty$:
\begin{align}\label{conjvol2024}
\langle\kappa_{1}^{g-1-|\dd|}\tau_{d_1}\cdots\tau_{d_n}\rangle_g^{\Theta} \;\sim\;
\frac{\pi^{2|\dd|+n-2}\, 2^{g-1-3|\dd|}}{3^{3g-\frac{7}2-|\dd|+n} \, \prod_{j=1}^{n}(2d_j+1)!!} \, (3g-4-|\dd|+n)!
\end{align}  
(see also~\cite{OS, SW} for the case when $n=1$; by numerical experiments we also found the conjectural asymptotics~\eqref{conjvol2024}),
with the first four cases being
\begin{align*}
&\langle\kappa_{1}^{g-1}\rangle_g^{\Theta} \;\sim\; \frac{2^{g-1}}{\pi^2 \, 3^{3g-\frac72}} \, (3g-4)!\,,\qquad
\langle\kappa_{1}^{g-2}\tau_1\rangle_g^{\Theta} \;\sim\; \frac{\pi\, 2^{g-4}}{3^{3g-\frac52}} \, (3g-4)!\,,\\
&\langle\kappa_{1}^{g-3}\tau_2\rangle_g^{\Theta} \;\sim\; \frac{\pi^3\, 2^{g-7}}{5\cdot3^{3g-\frac72}} \, (3g-5)!\,,\qquad 
\langle\kappa_{1}^{g-3}\tau_1^2\rangle_g^{\Theta} \;\sim\; \frac{\pi^4\, 2^{g-7}}{3^{3g-\frac32}} \, (3g-4)!\,.\nn
\end{align*}
%as $g\to\infty$. 
Recently, Huang~\cite{Huang} proved the Griguolo--Papalini--Russo--Seminara conjecture up to an absolute constant factor.
\end{remark}

The following proposition generalizes Proposition~\ref{proprationality}.
\begin{prop}\label{corkapparational}
For fixed $m\ge1$, fixed $n\ge2$, fixed $d_1,\dots,d_{n-1}\ge1$, and for $d_n$ being an indeterminate, we have
\begin{align}\label{Cmdrationality}
C(m;\dd) \=  \frac{1}{(X(m;\mathbf{d})-1)!} \, \frac{(2d_n+1)!!^3}{2^{d_n+1} \, d_n!}  \, 
P_{m;d_1,\dots,d_{n-1}}(X(m;\mathbf{d}))\,,
\end{align}
where $X(m;\dd)=X(\dd)+3m$ as before and 
$P_{m;d_1,\dots,d_{n-1}}(X)\in \mathbb{Q}[X]$. Moreover, $\frac{C(m;\mathbf{d})}{C(m+|d|)}$
is a rational function of~$X(m;\dd)$.
\end{prop}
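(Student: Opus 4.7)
The plan is to deduce Proposition~\ref{corkapparational} from the finite expansion~\eqref{Cmd}, which writes $C(m;\dd)$ as a $\mathbb{Q}$-linear combination of ordinary normalized BGW numbers $C(d_1,\dots,d_n,m_1,\dots,m_l)$ divided by the factor $(X(m;\dd)-m+l)_{m-l}\prod_{i=1}^l(2m_i+1)!!$, with $1\le l\le m$ and $m_1,\dots,m_l\ge 1$. Since each extra entry $m_i$ is positive and $d_1,\dots,d_{n-1}\ge 1$ are fixed, the partition $\bl=(d_1,\dots,d_{n-1},m_1,\dots,m_l)$ of length $n+l-1$ consists entirely of positive fixed entries, so $d_n$ remains the sole indeterminate argument inside each such~$C(\cdot)$.

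First I would apply Proposition~\ref{proprationality} to each summand. Since $n\ge 2$ and $l\ge 1$ give $n+l\ge 3$, the Kronecker factor $\delta_{n+l,2}$ vanishes and I obtain
\[
C(d_1,\dots,d_n,m_1,\dots,m_l)\=\frac{1}{(X'-1)!}\cdot\frac{(2d_n+1)!!^3}{2^{d_n+1}\,d_n!}\cdot P_{\bl}(X'),
\]
where $X':=X(d_1,\dots,d_n,m_1,\dots,m_l)$. A direct computation from the definitions of~$X$ and~$X(m;\cdot)$ yields $X'=X(m;\dd)+l-m$, so all auxiliary $X'$-quantities are linear shifts of $X(m;\dd)$ by integers in the bounded range $[-m,0]$.

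The key combinatorial identity is the telescoping
\[
(X'-1)!\,\cdot\,(X(m;\dd)-m+l)_{m-l}\=(X(m;\dd)-1)!,
\]
which is immediate from $(a)_b=a(a+1)\cdots(a+b-1)$ together with $X'=X(m;\dd)+l-m$. Substituting into~\eqref{Cmd} and pulling the common factor $(2d_n+1)!!^3/(2^{d_n+1}d_n!)$ out of the sum gives
\[
C(m;\dd)\=\frac{1}{(X(m;\dd)-1)!}\cdot\frac{(2d_n+1)!!^3}{2^{d_n+1}\,d_n!}\cdot P_{m;d_1,\dots,d_{n-1}}\bigl(X(m;\dd)\bigr),
\]
with $P_{m;d_1,\dots,d_{n-1}}(X)$ a finite $\mathbb{Q}$-linear combination of shifted polynomials $P_{\bl}(X+l-m)$ from Proposition~\ref{proprationality}, hence an element of $\mathbb{Q}[X]$. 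This establishes~\eqref{Cmdrationality}.

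For the second assertion, I would observe that $2d_n+1=X(m;\dd)-A$ with $A:=\sum_{j=1}^{n-1}(2d_j+1)+3m$ a fixed integer, and that the one-point formula~\eqref{smallest} expresses $C(m+|\dd|)$ as a quotient of factorials in $m+|\dd|=(X(m;\dd)-n)/2$. Every factorial and double factorial appearing in the quotient $C(m;\dd)/C(m+|\dd|)$ thus has argument a fixed integer shift of a fixed linear function of $X(m;\dd)$, and any ratio of two factorials (respectively double factorials of odd integers) whose arguments differ by a fixed integer is a polynomial in the larger argument. Combining these ratios with the polynomial $P_{m;d_1,\dots,d_{n-1}}(X(m;\dd))$, the quotient becomes a rational function of $X(m;\dd)$. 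The main subtlety, rather than a genuine obstacle, is bookkeeping: one must check the telescoping identity and verify the parity conditions that make the double factorials and the half-integer factorial expressions well-defined as the linearly shifted functions of~$X(m;\dd)$ claimed above.
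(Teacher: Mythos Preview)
Your proof is correct and follows essentially the same route as the paper's: substitute Proposition~\ref{proprationality} (equation~\eqref{UtoP}) into the finite expansion~\eqref{Cmd}, then use the one-point formula~\eqref{smallest} for the second statement. Your write-up is considerably more detailed than the paper's two-line proof---in particular, you make explicit the telescoping identity $(X'-1)!\cdot(X(m;\dd)-m+l)_{m-l}=(X(m;\dd)-1)!$ and the vanishing of $\delta_{n+l,2}$, both of which the paper leaves implicit. One minor arithmetic slip: $m+|\dd|=(X(m;\dd)-n-m)/2$, not $(X(m;\dd)-n)/2$; this does not affect your argument, since the point is only that $m+|\dd|$ is a fixed integer shift of $X(m;\dd)/2$.
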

\begin{proof}
Substituting~\eqref{UtoP} in~\eqref{Cmd}, we get~\eqref{Cmdrationality}. The second statement that $\frac{C(m;\mathbf{d})}{C(m+|d|)}$ is a rational function of $X(m;d)$ can then be deduced from~\eqref{smallest}. 
\end{proof}
The following proposition generalizes Theorem~\ref{thmpoly}. 
\begin{prop}\label{corkappapoly}
For every $m\ge0$, for fixed $n\ge1$, fixed $\dd'=(d_1,\dots,d_{n-1})\in(\ZZ_{\ge1})^n$, and $d_n\ge0$, we have
\begin{align}
&C(m;\dd) \,\sim\, \frac{1}{\pi}\sum_{k=0}^{\infty}\frac{C_k(m;\dd')}{X(m;\mathbf{d})^k} \qquad (X(m;\mathbf{d})\to\infty)\,,
\end{align}
where $\dd=(\dd',d_n)$, $C_k(m;\dd')$ are functions of $m,d_1,\dots,d_{n-1}$ with $C_0\equiv1$.
Moreover, there exist a sequence of polynomials 
		\begin{equation}
			c_k(m;p_1,p_2,\dots)\,\in\,\mathbb{Q}[p_1,p_2,\dots]\,, \quad k\geq 0\,,
	\end{equation} 
	with $c_0(m;p_1,p_2,\dots)\equiv1$, such that 
	\begin{equation}
		C_k(m;\dd') \= c_k(m;p_1(\dd'),p_2(\dd'),\dots)\,, \quad k\ge0\,.
	\end{equation}
Furthermore, under the degree assignments 
	\begin{equation}
		\deg \, p_i \= 2i+1\quad (i\geq 1)\,,
	\end{equation}
	the polynomials $c_k(m;p_1,p_2,\dots)$, $k\ge1$, satisfy the degree estimates
\begin{equation}
		\deg \, c_k(m;p_1,p_2,\dots) \,\leq\, k-1\,;
\end{equation} 
	in particular, $c_k(m;p_1,p_2,\dots)$ does not depend on any $p_{d}$ with $d\ge (k-1)/2$.
\end{prop}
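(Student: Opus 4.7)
The strategy is to reduce Proposition~\ref{corkappapoly} to Theorem~\ref{thmpoly} via formula~\eqref{Cmd}, which expresses each normalized BGW-kappa number as a \emph{finite} $\QQ$-linear combination of normalized BGW numbers. When $m=0$ the statement is exactly Theorem~\ref{thmpoly}, so I fix $m\ge 1$ and $\dd'\in(\ZZ_{\ge 1})^{n-1}$, and write $\dd=(\dd',d_n)$ with $d_n\to\infty$.

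From~\eqref{Cmd}, $C(m;\dd)$ is a finite sum, indexed by ordered compositions $(m_1,\dots,m_l)$ of $m$ into positive parts with $l\in\{1,\dots,m\}$, of terms
\[
T_{m_1,\dots,m_l}(d_n)\,:=\,\frac{C(\dd',d_n,m_1,\dots,m_l)}{(X(m;\dd)-m+l)_{m-l}\,\prod_{i=1}^l(2m_i+1)!!}
\]
with explicit rational prefactors depending only on $m$ and the composition. For each such composition I would apply Theorem~\ref{thmpoly} to $C(\dd',d_n,m_1,\dots,m_l)$, treating $(\dd',m_1,\dots,m_l)\in(\ZZ_{\ge 1})^{n-1+l}$ as the fixed partition and $d_n$ as the varying argument. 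Using the identity $X(\dd',d_n,m_1,\dots,m_l)=X(m;\dd)-m+l$, this produces, as $d_n\to\infty$,
\[
C(\dd',d_n,m_1,\dots,m_l)\,\sim\,\frac{1}{\pi}\sum_{j\ge 0}\frac{c_j\bigl(p_1(\dd')+q_1,\,p_2(\dd')+q_2,\dots\bigr)}{(X(m;\dd)-m+l)^j}\,,
\]
where $q_i:=\#\{r:m_r=i\}$ depends only on the composition and the $c_j$ are the polynomials of Theorem~\ref{thmpoly}.

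I would next expand the Pochhammer $(X(m;\dd)-m+l)_{m-l}^{-1}$ and each factor $(X(m;\dd)-m+l)^{-j}$ as power series in $X(m;\dd)^{-1}$, multiply, substitute into~\eqref{Cmd}, and collect by powers of $X(m;\dd)^{-1}$. The result is an asymptotic expansion $C(m;\dd)\sim\pi^{-1}\sum_{k\ge 0}C_k(m;\dd')\,X(m;\dd)^{-k}$ in which each coefficient $C_k(m;\dd')$ is a finite $\QQ$-linear combination of values $c_j(p_1(\dd')+q_1,p_2(\dd')+q_2,\dots)$ with $j$ and the $q_i$ determined by the composition. Since each $c_j$ is a polynomial and the $q_i$ are integer constants depending only on $m$, each $C_k(m;\dd')$ equals a well-defined polynomial $c_k(m;p_1(\dd'),p_2(\dd'),\dots)\in\QQ[p_1,p_2,\dots]$. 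That $c_0\equiv 1$ comes from the sole $k=0$ contribution, which arises only from the composition $l=m$, $m_1=\cdots=m_l=1$, and reproduces $C(\dd',d_n,1^m)\sim\pi^{-1}$ by Theorem~\ref{thmuniformasymptotics}.

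It remains to establish the degree estimate under $\deg p_i=2i+1$. For $k\ge 1$, the contribution of $T_{m_1,\dots,m_l}$ to the coefficient of $X(m;\dd)^{-k}$ vanishes unless $k\ge m-l$, and in that range it is a $\QQ$-linear combination of $c_j$'s with $j\le k-(m-l)$. Since $\deg c_j\le\max(j-1,0)$, such a contribution has degree at most $k-(m-l)-1\le k-2$ when $l<m$, and at most $k-1$ when $l=m$ (in which case $j=k$). As the shifts $p_i\mapsto p_i+q_i$ by integer constants preserve degrees, summing over the finitely many compositions yields $\deg c_k(m;p_1,p_2,\dots)\le k-1$, and hence independence of $p_d$ for $d\ge(k-1)/2$. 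The only point requiring care is the bookkeeping that ensures sharpness of this bound is preserved through the Pochhammer expansions and the composition sum, which is routine once the $l=m$ case (a direct invocation of Theorem~\ref{thmpoly}) is isolated from the strictly-lower-order $l<m$ contributions; the heavy analytic lifting has already been done in Theorems~\ref{thmuniformasymptotics} and~\ref{thmpoly}.
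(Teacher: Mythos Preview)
Your proposal is correct and follows essentially the same approach as the paper: both expand the finite sum~\eqref{Cmd} in powers of $X(m;\dd)^{-1}$ and then invoke Theorem~\ref{thmpoly} term by term. Your write-up is in fact more detailed than the paper's (which simply records the resulting formula via Stirling numbers and says ``the proposition is then proved by using Theorem~\ref{thmpoly}''); in particular, your explicit separation of the $l=m$ term from the $l<m$ terms to verify the degree bound is exactly the bookkeeping the paper leaves implicit.
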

\begin{proof}
Expanding both sides of~\eqref{Cmd} with respect to~$1/X(m;\dd)$, we obtain
\begin{align}\label{Ckmd}
&C_k(m;d_1,\dots,d_{n-1}) \=3^m\,\sum_{l=1}^{m}\frac{(-1)^{m-l}}{l!}\sum_{m_1,\dots,m_l\ge1\atop m_1+\cdots+m_l=m}
\binom{m}{m_1,\dots,m_l} \, \frac{1}{\prod_{i=1}^l (2m_i+1)!!} \nn\\
&\qquad\qquad\qquad\qquad\qquad\quad\times \sum_{j=0}^{k} C_j(d_1,\dots,d_{n-1},m_1,\dots,m_l) \, S(k-j,m-l-1)\,,\nn
\end{align}
where $S(n,k)$ are the Stirling numbers of the second kind. 
The proposition is then proved by using Theorem~\ref{thmpoly}.
\end{proof}

\end{document}